\newtheorem{prop}{Proposition}
\newtheorem{thm}{Theorem}
\newtheorem{assump}{Assumption}
\newtheorem{corollary}{Corollary}
\newtheorem{lemma}{Lemma}
\newtheorem{definition}{Definition}
\newtheorem{rmq}{Remark}
\newtheorem*{assump*}{Assumption}
\newtheorem*{example*}{Example}
\newtheorem*{prop*}{Proposition}
\newtheorem*{thm*}{Theorem}
\newtheorem*{prv*}{Proof}
\newcommand{\risk}{\mathcal{R}}
\newcommand{\riskadv}{\mathcal{R}_{adv}}
\newcommand{\valuerand}{\mathcal{V}_{rand}}
\newcommand{\valuedet}{\mathcal{V}_{det}}
\newcommand{\dualvalue}{\mathcal{D}}
\DeclareMathOperator*{\supp}{supp}   
\DeclareMathOperator*{\essinf}{essinf}   
\newcommand{\QQ}{\mathbb{Q}}
\newcommand{\PP}{\mathbb{P}}
\DeclareMathOperator*{\argmaxB}{argmax} 
\title{Mixed Nash Equilibria in the Adversarial Examples Game}
\author{
	Laurent Meunier$^{*1,2}$\qquad Meyer Scetbon$^{*3}$\qquad Rafael Pinot$^{4}$\\
	Jamal Atif$^{1}$ \qquad
	Yann Chevaleyre$^{1}$	\vspace{0.3cm}
\\
	$^{1}$ LAMSADE, Université Paris-Dauphine\\
	$^{2}$ Facebook AI Research, Paris\\
	$^{3}$ CREST, ENSAE\\
	$^{4}$ Ecole Polytechnique Fédérale de Lausanne\\

	}
\date{}
\begin{document}

\maketitle

\begin{abstract} 

This paper tackles the problem of adversarial examples from a game theoretic point of view. We study the open question of the existence of mixed Nash equilibria in the zero-sum game formed by the attacker and the classifier. While previous works usually allow only one player to use randomized strategies, we show the necessity of considering randomization for both the classifier and the attacker. We demonstrate that this game has no duality gap, meaning that it always admits approximate Nash equilibria. We also provide the first optimization algorithms to learn a mixture of classifiers that approximately realizes the value of this game, \emph{i.e.} procedures to build an optimally robust randomized classifier.

\end{abstract}

\section{Introduction}

Adversarial examples~\citep{biggio2013evasion,Szegedy2013IntriguingPO} are one of the most dizzling problems in machine learning: state of the art classifiers are sensitive to imperceptible perturbations of their inputs that make them fail. Last years, research have concentrated on proposing new defense methods~\citep{madry2017towards,moosavi2019robustness,KolterRandomizedSmoothing} and building more and more sophisticated attacks~\citep{goodfellow2014explaining,kurakin2016adversarial,carlini2017adversarial,croce2020reliable}. So far, most defense strategies proved to be vulnerable to these new attacks or are computationally intractable. This asks the following question: can we build classifiers that are robust against any adversarial attack?

A recent line of research argued that randomized classifiers could help countering adversarial attacks~\citep{pruningDefenseICLR2018,Xie2017MitigatingAE,pinot2019theoretical,NIPS2019_8443}. Along this line, \cite{pinot2020randomization} demonstrated, using game theory, that randomized classifiers are indeed more robust than deterministic ones against regularized adversaries. However, the findings of these previous works are dependent on the definition of adversary they consider. In particular, they did not investigate scenarios where the adversary also uses randomized strategies, which is essential to account for if we want to give a principled answer to the above question. Previous works studying adversarial examples from the scope of game theory investigated the randomized framework (for both the classifier and the adversary) in restricted settings where the adversary is either parametric or has a finite number of strategies~\citep{7533509,DBLP:journals/corr/abs-1906-02816,bose2021adversarial}. Our framework does not assume any constraint on the definition of the adversary, making our conclusions independent on the adversary the classifiers are facing. More precisely, we answer the following questions.





\textbf{Q1:} Is it always possible to reach a Mixed Nash equilibrium in the adversarial example game when both the adversary and the classifier can use randomized strategies?

\textbf{A1:} We answer positively to this question. First we motivate in Section~\ref{sec:adv-problem} the necessity for using randomized strategies both with the attacker and the classifier. Then, we extend the work of~\cite{pydi2019adversarial}, by rigorously reformulating the adversarial risk as a linear optimization problem over distributions. In fact, we cast the adversarial risk minimization problem as a Distributionally Robust Optimization (DRO)~\citep{blanchet2019quantifying} problem for a well suited cost function. This formulation naturally leads us, in Section~\ref{sec:nash-eq}, to analyze adversarial risk minimization as a zero-sum game. We demonstrate that, in this game, the duality gap always equals $0$, meaning that it always admits approximate mixed Nash equilibria.  

\textbf{Q2:} Can we design efficient algorithms to learn an optimally robust randomized classifier?

\textbf{A2:} To answer this question, we focus on learning a finite mixture of classifiers. Taking inspiration from robust optimization~\cite{sinha2017certifying} and subgradient methods~\cite{boyd2003subgradient}, we derive in Section~\ref{sec:algo} a first oracle algorithm to optimize over a finite mixture. Then, following the line of work of~\citep{cuturi2013sinkhorn}, we introduce an entropic reguralization which allows to effectively compute an approximation of the optimal mixture. We validate our findings with experiments on a simulated and a real image dataset, namely CIFAR-10~\cite{krizhevsky2009learning}.

\begin{figure*}[!ht]
    \centering
\includegraphics[width=0.83\textwidth]{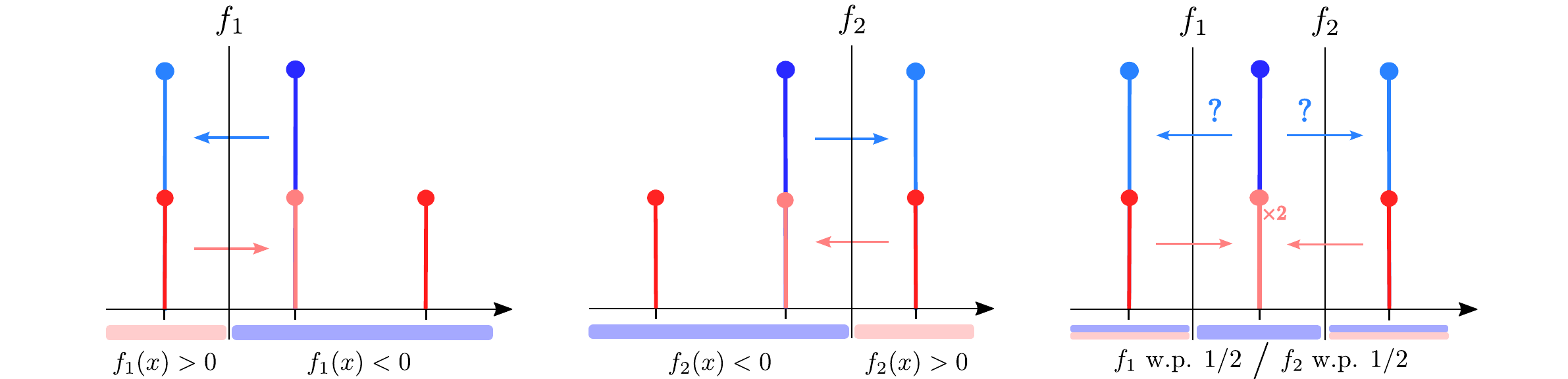}   \caption{Motivating example: blue distribution represents label $-1$ and the red one, label $+1$.  The height of columns represents their mass. The red and blue arrows represent the attack on the given classifier. On left: deterministic classifiers ($f_1$ on the left, $f_2$ in the middle) for whose, the blue point can always be attacked. On right: a randomized classifier, where the attacker has a probability $1/2$ of failing, regardless of the attack it selects.  }
    \label{fig:motivating_ex}
\vspace{-0.4cm}
\end{figure*}
\section{The Adversarial Attack Problem}
\label{sec:adv-problem}
\subsection{A Motivating Example}
\label{sec:motiv-ex}

Consider the binary classification task illustrated in Figure~\ref{fig:motivating_ex}. We assume that all input-output pairs $(X,Y)$ are sampled from a distribution $\PP$ defined as follows
$$ 
\PP\left(Y =\pm 1\right)=1/2 \ \mbox{ and }\left\{
    \begin{array}{ll}
        \PP\left(X=0 \mid Y=-1\right) = 1 \\
        \PP\left(X= \pm 1 \mid Y=1\right) = 1/2 
    \end{array}
\right.
$$ 
Given access to $\PP$, the adversary aims to maximize the expected risk, but can only move each point by at most $1$ on the real line. In this context, we study two classifiers: $f_1(x) = -x -1/2$ and $f_2(x)=x-1/2$\footnote{$(X,Y) \sim \PP$ is misclassified by $f_i$ if and only if $f_i(X)Y \leq 0$}. Both $f_1$ and $f_2$ have a standard risk of $1/4$. In the presence of an adversary, the risk (\emph{a.k.a.} the adversarial risk) increases to $1$. Here, using a randomized classifier can make the system more robust. Consider $f$ where $f=f_1$ w.p. $1/2$ and $f_2$ otherwise. The standard risk of $f$ remains $1/4$ but its adversarial risk is $3/4<1$. Indeed, when attacking $f$, any adversary will have to choose between moving points from $0$ to $1$ or to $-1$. Either way, the attack only works half of the time; hence an overall adversarial risk of $3/4$. Furthermore, if $f$ knows the strategy the adversary uses, it can always update the probability it gives to $f_1$ and $f_2$ to get a better (possibly deterministic) defense. For example, if the adversary chooses to always move $0$ to $1$, the classifier can set $f=f_1$ w.p. $1$ to retrieve an adversarial risk of $1/2$ instead of $3/4$. 

Now, what happens if the adversary can use randomized strategies, meaning that for each point it can flip a coin before deciding where to move? In this case, the adversary could decide to move points from $0$ to $1$ w.p. $1/2$ and to $-1$ otherwise. This strategy is still optimal with an adversarial risk of $3/4$ but now the classifier cannot use its knowledge of the adversary's strategy to lower the risk. We are in a state where neither the adversary nor the classifier can benefit from unilaterally changing its strategy. In the game theory terminology, this state is called a Mixed Nash equilibrium. 

\subsection{General setting}
Let us consider a classification task with input space $\mathcal{X}$ and output space $\mathcal{Y}$. Let $(\mathcal{X},d)$ be a proper (i.e. closed balls are compact) Polish (i.e. completely separable) metric space representing the inputs space\footnote{For instance, for any norm $\lVert\cdot\rVert$,  $(\mathbb{R}^d,\lVert\cdot\rVert)$ is a proper Polish metric space.}. Let $\mathcal{Y}=\{1,\dots,K\}$ be the labels set, endowed with the trivial metric  $d'(y,y') = \mathbf{1}_{y\neq y'}$. Then the space $(\mathcal{X}\times\mathcal{Y},d\oplus d')$ is a proper Polish space. For any Polish space $\mathcal{Z}$, we denote $\mathcal{M}_+^1(\mathcal{Z})$ the Polish space of Borel probability measures on $\mathcal{Z}$. Let us assume the data is drawn from $\PP\in\mathcal{M}_+^1(\mathcal{X}\times\mathcal{Y})$. Let $(\Theta,d_\Theta)$ be a Polish space (not necessarily proper) representing the set of classifier parameters (for instance neural networks). We also define a loss function: $l:\Theta\times (\mathcal{X}\times\mathcal{Y})\to [0,\infty)$ satisfying the following set of assumptions.
\vspace{-0.1cm}
\begin{assump}[Loss function]
\label{ass:loss}
1) The loss function $l$ is a non negative Borel measurable function. 2) For all $\theta\in\Theta$, $l(\theta,\cdot)$ is upper-semi continuous. 3) There exists $M>0$ such that for all $\theta\in\Theta$, $(x,y)\in\mathcal{X}\times\mathcal{Y}$, $0\leq l(\theta,(x,y))\leq M$.
\end{assump}
\vspace{-0.25cm}
It is usual to assume upper-semi continuity when studying optimization over distributions~\cite{villani2003topics,blanchet2019quantifying}. Furthermore, considering bounded (and positive) loss functions is also very common in learning theory~\cite{bartlett2002rademacher} and is not restrictive. 

In the adversarial examples framework, the loss of interest is the $0/1$ loss, for whose surrogates are misunderstood~\citep{pmlr-v97-cranko19a,pmlr-v125-bao20a}; hence it is essential that the $0/1$ loss satisfies Assumption~\ref{ass:loss}. In the binary classification setting (\emph{i.e.} $\mathcal{Y}=\{-1,+1\}$) the $0/1$ loss writes $l_{0/1}(\theta,(x,y)) = \mathbf{1}_{yf_\theta(x)\leq 0}$. Then, assuming that for all $\theta$, $f_\theta(\cdot)$ is continuous and for all $x$, $f_\cdot(x)$ is continuous, the $0/1$ loss satisfies Assumption~\ref{ass:loss}. In particular, it is the case for neural networks with continuous activation functions.

\subsection{Adversarial Risk Minimization}
The standard risk for a single classifier $\theta$ associated with the loss $l$ satisfying Assumption~\ref{ass:loss} writes: $\risk(\theta):=\mathbb{E}_{(x,y)\sim \PP}\left[l(\theta,(x,y))\right]$. Similarly, the adversarial risk of $\theta$ at level $\varepsilon$ associated with the loss $l$ is defined as\footnote{For the well-posedness, see Lemma~\ref{lem:measure-sup} in Appendix.}
\begin{align*}
    \riskadv^\varepsilon(\theta):=\mathbb{E}_{(x,y)\sim \PP}\left[\sup_{x'\in\mathcal{X},~d(x,x')\leq\varepsilon}l(\theta,(x',y))\right].
\end{align*}
 It is clear that $\riskadv^0(\theta) =\risk(\theta)$ for all $\theta$. We can generalize these notions with distributions of classifiers. In other terms the classifier is then randomized according to some distribution $\mu\in\mathcal{M}^1_+(\Theta)$. A classifier is randomized if for a given input, the output of the classifier is a probability distribution.
 The standard risk of a randomized classifier $\mu$ writes $\risk(\mu) = \mathbb{E}_{\theta\sim\mu}\left[\risk (\theta)\right]$. Similarly, the adversarial risk of the randomized classifier $\mu$ at level $\varepsilon$ is\footnote{This risk is also well posed (see Lemma~\ref{lem:measure-sup} in the Appendix).}
\begin{align*}
    \riskadv^\varepsilon(\mu):=\mathbb{E}_{(x,y)\sim \PP}\left[\sup_{x'\in\mathcal{X},~d(x,x')\leq\varepsilon}\mathbb{E}_{\theta\sim\mu}\left[l(\theta,(x',y))\right]\right].
\end{align*}
For instance, for the $0/1$ loss, the inner maximization problem, consists in maximizing the probability of misclassification for a given couple $(x,y)$. Note that $\risk(\delta_\theta)=\risk(\theta)$ and $\riskadv^\varepsilon(\delta_\theta)=\riskadv^\varepsilon(\theta)$. In the remainder of the paper, we study the adversarial risk minimization problems with randomized and deterministic classifiers and denote
\begin{align}
\label{eq:advriskmin}
    \valuerand^\varepsilon:=\inf_{\mu\in\mathcal{M}^1_+(\Theta)} \riskadv^\varepsilon(\mu),~\valuedet^\varepsilon:=\inf_{\theta\in\Theta} \riskadv^\varepsilon(\theta)
\end{align}

\begin{rmq}
We can show (see Appendix~\ref{sec:complements}) that the standard risk infima are equal :  $\valuerand^0=\valuedet^0$. Hence, no randomization is needed for minimizing the standard risk. Denoting $\mathcal{V}$ this common value, we also have the following inequalities for any $\varepsilon>0$, $\mathcal{V}\leq \valuerand^\varepsilon\leq \valuedet^\varepsilon$.
\end{rmq}

\subsection{Distributional Formulation of the Adversarial Risk} 

To account for the possible randomness of the adversary, we rewrite the adversarial attack problem as a convex optimization problem on distributions. Let us first introduce the set of adversarial distributions.
\begin{definition}[Set of adversarial distributions]
Let $\PP$ be a Borel probability distribution on $\mathcal{X}\times\mathcal{Y}$ and $\varepsilon>0$. We define the set of adversarial distributions as
\begin{align*}
\mathcal{A}_{\varepsilon}&(\PP) := \left\{\QQ\in\mathcal{M}^+_1(\mathcal{X}\times\mathcal{Y})\mid\exists \gamma\in\mathcal{M}^+_1\left((\mathcal{X}\times\mathcal{Y})^2\right),\right.\\
&\left.d(x,x')\leq\varepsilon,~y=y'~~ \gamma\text{-a.s.},~\Pi_{1\sharp}\gamma=\PP,~\Pi_{2\sharp}\gamma=\QQ\right\} 
\end{align*}
where $\Pi_i$ denotes the projection on the $i$-th component, and $g_\sharp$ the push-forward measure by a measurable function $g$.
\end{definition}
For an attacker that can move the initial distribution $\PP$ in $\mathcal{A}_\varepsilon(\PP)$, the attack would not be a transport map as considered in the standard adversarial risk. For every point $x$ in the support of $\PP$, the attacker is allowed to move $x$ randomly in the ball of radius $\varepsilon$, and not to a single other point $x'$ like the usual attacker in adversarial attacks. In this sense, we say the attacker is allowed to be randomized.

\textbf{Link with DRO. } Adversarial examples have been studied in the light of DRO by former works~\cite{sinha2017certifying,tu2018theoretical}, but an exact reformulation of the adversarial risk as a DRO problem has not been made yet. 
When $(\mathcal{Z},d)$ is a Polish space and $c:\mathcal{Z}^2\rightarrow\mathbb{R}^+\cup\{+\infty\}$  is a lower semi-continuous function, for $\PP,\QQ\in\mathcal{M}^+_1(\mathcal{Z})$ , the primal Optimal Transport problem is defined as
\begin{align*}
  W_c(\PP,\QQ):=\inf_{\gamma\in\Gamma_{\PP,\QQ}}  \int_{\mathcal{Z}^2} c(z,z')d\gamma(z,z')
\end{align*}
with $\Gamma_{\PP,\QQ}:=\left\{\gamma\in\mathcal{M}^+_1(\mathcal{Z}^2)\mid~\Pi_{1\sharp}\gamma = \PP,~\Pi_{2\sharp}\gamma = \QQ \right\}$. When $\eta>0$ and for $\PP\in\mathcal{M}^+_1(\mathcal{Z})$, the associated Wasserstein uncertainty set is defined as: 
\begin{align*}
    \mathcal{B}_{c}(\PP,\eta) := \left\{\QQ\in \mathcal{M}^+_1(\mathcal{Z})\mid W_c(\PP,\QQ)\leq \eta\right\}
\end{align*}
A DRO problem is a linear optimization problem over Wasserstein uncertainty sets $\sup_{\QQ\in\mathcal{B}_{c}(\PP,\eta)}\int g(z)d\QQ(z)$ for some upper semi-continuous function $g$~\cite{yue2020linear}. 
For an arbitrary $\varepsilon>0$, we define the cost $c_\varepsilon$ as follows
\begin{align*}
c_\varepsilon((x,y),(x',y')) := \left\{
    \begin{array}{ll}
        0 & \mbox{if } d(x,x')\leq\varepsilon\mbox{ and }y = y'\\
        +\infty & \mbox{otherwise.}
    \end{array}
\right.
\end{align*}
This cost is lower semi-continuous and penalizes to infinity perturbations that change the label or move the input by a distance greater than $\varepsilon$. As Proposition~\ref{prop:wass_ball} shows, the Wasserstein ball associated with $c_\varepsilon$ is equal to $\mathcal{A}_{\varepsilon}(\PP)$.
\begin{prop}
\label{prop:wass_ball}
Let $\PP$ be a Borel probability distribution on $\mathcal{X}\times\mathcal{Y}$ and $\varepsilon>0$ and $\eta\geq 0$, then
    $\mathcal{B}_{c_\varepsilon}(\PP,\eta) =\mathcal{A}_{\varepsilon}(\PP)$.
Moreover, $\mathcal{A}_{\varepsilon}(\PP)$ is convex and compact for the weak topology of $\mathcal{M}^+_1(\mathcal{X}\times\mathcal{Y})$.
\end{prop}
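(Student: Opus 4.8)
The plan is to break the statement into three parts: the set identity $\mathcal{B}_{c_\varepsilon}(\PP,\eta)=\mathcal{A}_{\varepsilon}(\PP)$, convexity of $\mathcal{A}_{\varepsilon}(\PP)$, and its weak compactness. The key observation driving the first part is that $c_\varepsilon$ is $\{0,+\infty\}$-valued, so for any coupling $\gamma\in\Gamma_{\PP,\QQ}$ the integral $\int c_\varepsilon\,d\gamma$ equals $0$ when $\gamma$ is concentrated on the closed set $S:=\{((x,y),(x',y')) : d(x,x')\leq\varepsilon,\ y=y'\}$, and equals $+\infty$ otherwise. Since $\Gamma_{\PP,\QQ}$ is never empty (it contains $\PP\otimes\QQ$), the infimum $W_{c_\varepsilon}(\PP,\QQ)$ is therefore either $0$ — and then attained by some $\gamma$ concentrated on $S$ — or $+\infty$. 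Hence, for every $\eta\geq 0$, the inequality $W_{c_\varepsilon}(\PP,\QQ)\leq\eta$ is equivalent to $W_{c_\varepsilon}(\PP,\QQ)=0$, i.e. to the existence of a coupling with marginals $\PP,\QQ$ concentrated on $S$, which is precisely the defining condition of $\mathcal{A}_{\varepsilon}(\PP)$. This gives $\mathcal{B}_{c_\varepsilon}(\PP,\eta)=\mathcal{A}_{\varepsilon}(\PP)$ independently of $\eta$.

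Convexity is then immediate: given $\QQ_0,\QQ_1\in\mathcal{A}_{\varepsilon}(\PP)$ with witnessing couplings $\gamma_0,\gamma_1$ and $t\in[0,1]$, the mixture $(1-t)\gamma_0+t\gamma_1$ has first marginal $\PP$, second marginal $(1-t)\QQ_0+t\QQ_1$, and is still concentrated on $S$; so $(1-t)\QQ_0+t\QQ_1\in\mathcal{A}_{\varepsilon}(\PP)$.

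The substantive part is weak compactness. Since $\mathcal{X}\times\mathcal{Y}$ is Polish, $\mathcal{M}^+_1(\mathcal{X}\times\mathcal{Y})$ is metrizable, so it suffices to show $\mathcal{A}_{\varepsilon}(\PP)$ is tight and closed. For tightness I would use that $\PP$ is tight: fix $\delta>0$ and a compact $K\subseteq\mathcal{X}\times\mathcal{Y}$ with $\PP(K)\geq 1-\delta$. Then $\Pi_1(K)$ is compact, hence bounded, in the proper space $\mathcal{X}$, so its closed $\varepsilon$-neighbourhood $N:=\{x':d(x',\Pi_1(K))\leq\varepsilon\}$ is closed and bounded, hence compact; and $\Pi_2(K)\subseteq\mathcal{Y}$ is finite. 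Therefore $\tilde K:=\{(x',y') : d(x',\Pi_1(K))\leq\varepsilon,\ y'\in\Pi_2(K)\}\subseteq N\times\Pi_2(K)$ is compact. For any $\QQ\in\mathcal{A}_{\varepsilon}(\PP)$ with witness $\gamma$, concentration on $S$ forces the event $\{(x,y)\in K\}$ to be contained in $\{(x',y')\in\tilde K\}$ up to a $\gamma$-null set, whence $\QQ(\tilde K)\geq\PP(K)\geq 1-\delta$, a bound uniform in $\QQ$. By Prokhorov's theorem $\mathcal{A}_{\varepsilon}(\PP)$ is relatively compact. For closedness I would take $\QQ_n\in\mathcal{A}_{\varepsilon}(\PP)$ with $\QQ_n\rightharpoonup\QQ$, lift to witnessing couplings $\gamma_n$, note that the marginal families $\{\PP\}$ and $\{\QQ_n\}$ are each tight so $\{\gamma_n\}$ is tight on $(\mathcal{X}\times\mathcal{Y})^2$ (a product of the corresponding compacts does the job), and extract $\gamma_{n_k}\rightharpoonup\gamma$. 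Continuity of the projection maps gives $\Pi_{1\sharp}\gamma=\PP$ and $\Pi_{2\sharp}\gamma=\QQ$, and since $S$ is closed with $\gamma_{n_k}(S)=1$, the Portmanteau theorem gives $\gamma(S)=1$; hence $\QQ\in\mathcal{A}_{\varepsilon}(\PP)$. A relatively compact closed subset of a metric space is compact.

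The main obstacle is this last part: tightness of the set relies essentially on properness of $\mathcal{X}$ (so that $\varepsilon$-fattening a compact set stays compact), and closedness requires the standard but slightly delicate manipulation of disintegrating the support constraint into couplings, extracting a weak limit, and transferring the constraint to the limit via Portmanteau. As a shortcut for closedness, one may instead invoke the set identity just proved together with lower semicontinuity of $\QQ\mapsto W_{c_\varepsilon}(\PP,\QQ)$ for the weak topology — a standard property of optimal transport costs with lower semicontinuous ground cost — which makes $\mathcal{A}_{\varepsilon}(\PP)=\{\QQ : W_{c_\varepsilon}(\PP,\QQ)\leq\eta\}$ a closed sublevel set; tightness still has to be argued separately as above.
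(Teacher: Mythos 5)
Your proposal is correct and follows essentially the same route as the paper's proof: the set identity via the $\{0,+\infty\}$-valued structure of $c_\varepsilon$, tightness of $\mathcal{A}_\varepsilon(\PP)$ by $\varepsilon$-fattening a compact set of large $\PP$-mass (using properness) together with Prokhorov's theorem, and closedness by extracting weakly convergent witnessing couplings and passing the marginal and support constraints to the limit. Your version even slightly streamlines the reverse inclusion (no appeal to attainment of the optimal-transport infimum is needed) and makes explicit the Portmanteau step that the paper leaves implicit.
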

Thanks to this result, we can reformulate the adversarial risk as the value of a convex problem over $\mathcal{A}_\varepsilon(\PP)$. 
\begin{prop} 
\label{prop:dro_adv}
Let $\PP$ be a Borel probability distribution on $\mathcal{X}\times\mathcal{Y}$ and $\mu$ a Borel probability distribution on $\Theta$. Let $l:\Theta\times(\mathcal{X}\times\mathcal{Y})\to [0,\infty)$ satisfying Assumption~\ref{ass:loss}. Let $\varepsilon>0$. Then:
\begin{align}
\label{eq:dro-adv}
\riskadv^\varepsilon(\mu)= \sup_{\QQ\in \mathcal{A}_{\varepsilon}(\PP)}\mathbb{E}_{(x',y')\sim\QQ,\theta\sim\mu}\left[l(\theta,(x',y'))\right].
\end{align}
The supremum is attained. Moreover $\QQ^*\in \mathcal{A}_{\varepsilon}(\PP)$ is an optimum of Problem~\eqref{eq:dro-adv} if and only if there exists $\gamma^*\in\mathcal{M}^+_1\left((\mathcal{X}\times\mathcal{Y})^2\right)$ such that: $\Pi_{1\sharp}\gamma^*=\PP$, $\Pi_{2\sharp}\gamma^*=\QQ^*$, $d(x,x')\leq\varepsilon$, $y=y'$ and  $l(x',y')=\sup_{\substack{u\in\mathcal{X},d(x,u)\leq\varepsilon}}l(u,y)$ $\gamma^*$-almost surely.
\end{prop}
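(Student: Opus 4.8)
The plan is to reduce everything to the averaged loss $L(x,y):=\mathbb{E}_{\theta\sim\mu}[l(\theta,(x,y))]$. By Fubini--Tonelli $L$ is Borel measurable and $0\le L\le M$; moreover, since $\mathcal{Y}$ carries the discrete metric and each $l(\theta,\cdot)$ is u.s.c., the reverse Fatou lemma (legitimate because $l\le M$) shows that $L$ is u.s.c.\ on $\mathcal{X}\times\mathcal{Y}$. Writing $\psi(x,y):=\sup_{u\in\bar{B}(x,\varepsilon)}L(u,y)$, one has $\riskadv^\varepsilon(\mu)=\int\psi\,d\PP$, and measurability of $\psi$ (hence well-posedness) is granted by Lemma~\ref{lem:measure-sup}. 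It then suffices to prove the two inequalities $\riskadv^\varepsilon(\mu)\ge\sup_{\QQ\in\mathcal{A}_\varepsilon(\PP)}\int L\,d\QQ$ and $\riskadv^\varepsilon(\mu)\le\int L\,d\QQ^\star$ for an explicit $\QQ^\star\in\mathcal{A}_\varepsilon(\PP)$, together with the optimality characterization.

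For the first inequality, fix $\QQ\in\mathcal{A}_\varepsilon(\PP)$ and a witnessing coupling $\gamma$; since $d(x,x')\le\varepsilon$ and $y=y'$ hold $\gamma$-a.s., we get $L(x',y')=L(x',y)\le\psi(x,y)$ $\gamma$-a.s., and integrating against $\gamma$ (using $\Pi_{1\sharp}\gamma=\PP$, $\Pi_{2\sharp}\gamma=\QQ$) yields $\int L\,d\QQ\le\int\psi\,d\PP=\riskadv^\varepsilon(\mu)$. For the reverse inequality, note that $\bar{B}(x,\varepsilon)$ is compact because $(\mathcal{X},d)$ is proper, so the u.s.c.\ map $L(\cdot,y)$ attains its supremum on it; hence for each of the finitely many labels $y$ the argmax correspondence $x\mapsto\{u\in\bar{B}(x,\varepsilon):L(u,y)=\psi(x,y)\}$ is nonempty and compact-valued, and it is measurable (the ball correspondence has closed graph and compact values, and the measurable maximum theorem applies to the u.s.c.\ integrand over it), so by the Kuratowski--Ryll-Nardzewski selection theorem it admits a Borel selector $T_y:\mathcal{X}\to\mathcal{X}$. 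Setting $T(x,y):=(T_y(x),y)$, $\gamma^\star:=(\mathrm{id},T)_\sharp\PP$ and $\QQ^\star:=T_\sharp\PP=\Pi_{2\sharp}\gamma^\star$, the coupling $\gamma^\star$ witnesses $\QQ^\star\in\mathcal{A}_\varepsilon(\PP)$ and $\int L\,d\QQ^\star=\int L(T_y(x),y)\,d\PP(x,y)=\int\psi\,d\PP=\riskadv^\varepsilon(\mu)$. This proves \eqref{eq:dro-adv} and shows the supremum is attained at $\QQ^\star$.

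For the characterization, if a coupling $\gamma^*$ with the listed properties exists then $\int L\,d\QQ^*=\int L(x',y')\,d\gamma^*=\int\psi\,d\PP=\riskadv^\varepsilon(\mu)$, so $\QQ^*$ is optimal. Conversely, if $\QQ^*$ is optimal, pick any coupling $\gamma^*$ witnessing $\QQ^*\in\mathcal{A}_\varepsilon(\PP)$; the inequality chain of the previous paragraph collapses to equalities, hence $\int\bigl(\psi(x,y)-L(x',y)\bigr)\,d\gamma^*=0$ with a $\gamma^*$-a.s.\ nonnegative integrand, which forces $L(x',y')=\psi(x,y)=\sup_{u:\,d(x,u)\le\varepsilon}L(u,y)$ $\gamma^*$-a.s. — exactly the stated condition, with $L=\mathbb{E}_{\theta\sim\mu}[l(\theta,\cdot)]$ playing the role of the $l$ written in the statement.

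The Fubini/reverse-Fatou bookkeeping and the push-forward manipulations are routine; the one genuinely delicate point is the measurable selection in the second step, namely verifying that $x\mapsto\bar{B}(x,\varepsilon)$ is a measurable compact-valued correspondence (this is where properness of $\mathcal{X}$ is used) and that the argmax of the u.s.c.\ integrand over it is again measurable and nonempty-valued, so that a Borel selection theorem can be invoked to produce the optimal transport map $T$.
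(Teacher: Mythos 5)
Your proof is correct in outline and follows the same two-inequality strategy as the paper: one direction by integrating the pointwise bound $\mathbb{E}_{\theta\sim\mu}[l(\theta,(x',y'))]\leq \sup_{u\in \bar{B}(x,\varepsilon)}\mathbb{E}_{\theta\sim\mu}[l(\theta,(u,y))]$ against a witnessing coupling, the other by selecting a measurable argmax map $T$ and pushing $\PP$ forward. The two differences are worth noting. First, the selection step: the paper invokes \citep[Proposition 7.50]{bertsekas2004stochastic}, which only yields a \emph{universally measurable} selector — and that suffices, since the selector is only used to push forward the fixed Borel measure $\PP$. You instead claim a Borel selector via Kuratowski–Ryll-Nardzewski, justified by ``the measurable maximum theorem''; in its standard form that theorem assumes a Carathéodory integrand (continuity in the maximized variable), which you do not have — only upper semi-continuity. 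Your claim is in fact true, but it needs a direct verification that the argmax correspondence is weakly measurable (e.g., test against closed sets $C$ and use that both $x\mapsto\sup_{u\in\bar{B}(x,\varepsilon)}L(u,y)$ and $x\mapsto\sup_{u\in\bar{B}(x,\varepsilon)\cap C}L(u,y)$ are u.s.c., since the ball correspondence is u.s.c.\ compact-valued by properness); as written this is the one loose joint, and the cheapest repair is simply to settle for the universally measurable selector as the paper does. Second, you obtain attainment of the supremum directly from the constructed $\QQ^\star$, whereas the paper proves it separately from compactness of $\mathcal{A}_\varepsilon(\PP)$ (Proposition~\ref{prop:wass_ball}) and upper semi-continuity of $\QQ\mapsto\int \mathbb{E}_\mu[l]\,d\QQ$ (Lemma~\ref{lem:usc2}); both are valid, and the compactness route has the side benefit of being reusable for the minimax theorem. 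Finally, your collapsing-inequalities argument actually establishes the ``if and only if'' optimality characterization, which the paper's written proof leaves implicit, so on that point your proposal is more complete than the paper's own argument.
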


The adversarial attack problem is a DRO problem for the cost $c_\varepsilon$.
Proposition~\ref{prop:dro_adv} means that, against a fixed classifier $\mu$, the randomized attacker that can move the distribution in $\mathcal{A}_\varepsilon(\PP)$ has exactly the same power as an attacker that moves every single point $x$ in the ball of radius $\varepsilon$.  By Proposition~\ref{prop:dro_adv}, we also  deduce that the adversarial risk can be casted as a linear optimization problem over distributions.

\begin{rmq}
  In a recent work,~\citep{pydi2019adversarial} proposed a similar adversary using Markov kernels but left as an open question the link with the classical adversarial risk, due to measurability issues. Proposition~\ref{prop:dro_adv} solves these issues. The result is similar to~\citep{blanchet2019quantifying}. Although we believe its proof might be extended for infinite valued costs,~\citep{blanchet2019quantifying} did not treat that case. We provide an alternative proof in this special case. 
\end{rmq}

\section{Nash Equilibria in the Adversarial Game}
\label{sec:nash-eq}

\subsection{Adversarial Attacks as a Zero-Sum Game}

Thanks to Proposition~\ref{sec:adv-problem}, the adversarial risk minimization problem can  be seen as a two-player zero-sum game that writes as follows,
\begin{align}
    \inf_{\mu\in\mathcal{M}^1_+(\Theta)} \sup_{\QQ\in\mathcal{A}_{\varepsilon}(\PP)}\mathbb{E}_{(x,y)\sim\QQ,\theta\sim\mu}\left[l(\theta,(x,y))\right].
\label{eq:primal_pb}
\end{align}
In this game the classifier objective is to find the best distribution $\mu \in \mathcal{M}_1^+(\Theta)$ while the adversary is manipulating the data distribution. For the classifier, solving the infimum problem in Equation~\eqref{eq:primal_pb} simply amounts to solving the adversarial risk minimization problem -- Problem~\eqref{eq:advriskmin}, whether the classifier is randomized or not. Then, given a randomized classifier $\mu \in \mathcal{M}_1^+(\Theta)$, the goal of the attacker is to find a new data-set distribution $\mathbb{Q}$ in the set of adversarial distributions $\mathcal{A}_{\varepsilon}(\PP)$ that maximizes the risk of $\mu$. More formally, the adversary looks for $$\mathbb{Q} \in \argmaxB_{\QQ\in\mathcal{A}_{\varepsilon}(\PP)} \mathbb{E}_{(x,y)\sim\QQ,\theta\sim\mu}\left[ l(\theta,(x,y))\right]. $$
In the game theoretic terminology, $\mathbb{Q}$ is also called the best response of the attacker to the classifier $\theta$.

\begin{rmq}
Note that for a given classifier $\mu$ there always exists a ``deterministic'' best response, i.e. every single point $x$ is mapped to another single point $T(x)$. Let $T:\mathcal{X}\to\mathcal{X}$ be defined such that for all $x\in\mathcal{X}$, $l(T(x),y) = \sup_{x',~d(x,x')\leq\varepsilon} l(x',y)$. Thanks to~\citep[Proposition 7.50]{bertsekas2004stochastic}, $(T,id)$ is $\PP$-measurable. Then $(T,id)_\sharp \PP$ belongs to $\text{BR}(\mu)$. Therefore, $T$ is the optimal ``deterministic'' attack against the classifier $\mu$.
\end{rmq}



\subsection{Dual Formulation of the Game}

Every zero sum game has a dual formulation that allows for a deeper understanding of the framework. Here, from Proposition~\ref{prop:dro_adv}, we can define the dual problem of adversarial risk minimization for randomized classifiers. This dual problem also characterizes a two-player zero-sum game that writes as follows,
\begin{align}
    \sup_{\QQ\in\mathcal{A}_{\varepsilon}(\PP)}\inf_{\mu\in\mathcal{M}^1_+(\Theta)}\mathbb{E}_{(x,y)\sim\QQ,\theta\sim\mu}\left[l(\theta,(x,y))\right].
\label{eq:dual_pb}
\end{align}
In this dual game problem, the adversary plays first and seeks an adversarial distribution that has the highest possible risk when faced with an arbitrary classifier. This means that it has to select an adversarial perturbation for every input $x$, without seeing the classifier first. In this case, as pointed out by the motivating example in Section~\ref{sec:motiv-ex}, the attack can (and should) be randomized to ensure maximal harm against several classifiers. Then, given an adversarial distribution, the classifier objective is to find the best possible classifier on this distribution. Let us denote $\dualvalue^\varepsilon$ the value of the dual problem. Since the weak duality is always satisfied, we get
\begin{align}
\label{eq:weak_duality}
\dualvalue^\varepsilon\leq \valuerand^\varepsilon\leq \valuedet^\varepsilon.
\end{align}
Inequalities in Equation~\eqref{eq:weak_duality} mean that the lowest risk the classifier can get (regardless of the game we look at) is $\mathcal{D}^\varepsilon$. In particular, this means that the primal version of the game, \emph{i.e.} the adversarial risk minimization problem, will always have a value greater or equal to $\mathcal{D}^\varepsilon$. As we discussed in Section~\ref{sec:motiv-ex}, this lower bound may not be attained by a deterministic classifier. As we will demonstrate in the next section, optimizing over randomized classifiers allows to approach $\mathcal{D}^\varepsilon$ arbitrary closely.

\begin{rmq}
Note that, we can always define the dual problem when the classifier is deterministic, \begin{align*}
    \sup_{\QQ\in\mathcal{A}_{\varepsilon}(\PP)}\inf_{\theta\in\Theta}\mathbb{E}_{(x,y)\sim\QQ}\left[l(\theta,(x,y))\right].
\end{align*}
Furthermore, we can demonstrate that the dual problems for deterministic and randomized classifiers have the same value \footnote{See Appendix~\ref{sec:complements} for more details}; hence the inequalities in Equation~\eqref{eq:weak_duality}.
\end{rmq}

\subsection{Nash Equilibria for Randomized Strategies}

In the adversarial examples game, a Nash equilibrium is a couple $(\mu^*,\QQ^*)\in\mathcal{M}^1_+(\Theta)\times\mathcal{A}_\varepsilon(\PP)$ where both the classifier and the attacker have no incentive to deviate unilaterally from their strategies $\mu^*$ and $\QQ^*$. More formally, $(\mu^*,\QQ^*)$ is a Nash equilibrium of the adversarial examples game if $(\mu^*,\QQ^*)$ is a saddle point of the objective function $$(\mu,\QQ)\mapsto \mathbb{E}_{(x,y)\sim\QQ,\theta\sim\mu}\left[l(\theta,(x,y))\right].$$ Alternatively, we can say that $(\mu^*,\QQ^*)$ is a Nash equilibrium if and only if $\mu^*$ solves the adversarial risk minimization problem -- Problem~\eqref{eq:advriskmin}, $\QQ^*$ the dual problem -- Problem~\eqref{eq:duality}, and $\mathcal{D}^\varepsilon=\mathcal{V}_{rand}^\varepsilon$. In our problem, $\QQ^*$ always exists but it might not be the case for $\mu^*$. Then for any $\delta>0$, we say that  $(\mu_\delta,\QQ^*)$ is a $\delta$-approximate Nash equilibrium if $\QQ^*$ solves the dual problem and $\mu_\delta$ satisfies $\mathcal{D}^\varepsilon\geq\riskadv^\varepsilon(\mu_\delta)-\delta$. 





We now state our main result: the existence of approximate Nash equilibria in the adversarial examples game when both the classifier and the adversary can use randomized strategies. More precisely, we demonstrate that the duality gap between the adversary and the classifier problems is zero, which gives as a corollary the existence of Nash equilibria. 

\begin{thm}
\label{thm:duality-rand}
Let $\PP\in\mathcal{M}^1_+(\mathcal{X}\times\mathcal{Y})$. Let $\varepsilon>0$. Let $l:\Theta\times(\mathcal{X}\times\mathcal{Y})\to [0,\infty)$ satisfying Assumption~\ref{ass:loss}. 
Then strong duality always holds in the randomized  setting:
\begin{align}
 \label{eq:duality}\inf_{\mu\in \mathcal{M}^+_1(\Theta)} \max_{\QQ\in \mathcal{A}_{\varepsilon}(\PP)} \mathbb{E}_{\theta \sim \mu, (x,y)\sim\QQ }\left[l(\theta,(x,y))\right]\\
=
\nonumber\max_{\QQ\in \mathcal{A}_{\varepsilon}(\PP)}\inf_{\mu\in \mathcal{M}^+_1(\Theta)}  \mathbb{E}_{\theta \sim \mu, (x,y)\sim\QQ }\left[l(\theta,(x,y))\right]
\end{align}
The supremum is always attained. If $\Theta$ is a compact set, and for all $(x,y)\in\mathcal{X}\times\mathcal{Y}$, $l(\cdot,(x,y))$ is lower semi-continuous, the infimum is also attained.
\end{thm}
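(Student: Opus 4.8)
The plan is to apply a minimax theorem — Sion's minimax theorem is the natural candidate — to the bilinear objective $F(\mu,\QQ) = \mathbb{E}_{\theta\sim\mu,(x,y)\sim\QQ}[l(\theta,(x,y))]$ on the product of strategy sets $\mathcal{M}^+_1(\Theta) \times \mathcal{A}_\varepsilon(\PP)$. First I would record the structural facts that make Sion applicable: $F$ is affine (hence both quasi-convex and quasi-concave) separately in each argument, since it is linear in $\mu$ for fixed $\QQ$ and linear in $\QQ$ for fixed $\mu$. The attacker's set $\mathcal{A}_\varepsilon(\PP)$ is convex and compact for the weak topology by Proposition~\ref{prop:wass_ball}, which gives the compactness side that Sion requires of one of the two players. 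The classifier's set $\mathcal{M}^+_1(\Theta)$ is convex but need not be compact (since $\Theta$ is only Polish, not proper), so the minimization player is the ``non-compact'' side — which is fine for Sion as long as we have the right semicontinuity in the $\QQ$ variable.

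The key continuity step is to check that for each fixed $\mu$, the map $\QQ \mapsto F(\mu,\QQ)$ is upper semi-continuous for the weak topology on $\mathcal{A}_\varepsilon(\PP)$; and for each fixed $\QQ$, $\mu \mapsto F(\mu,\QQ)$ is lower semi-continuous. The second is where Assumption~\ref{ass:loss} enters: $l(\theta,\cdot)$ is upper semi-continuous and bounded by $M$, so by Fatou/portmanteau-type arguments $\QQ \mapsto \int l(\theta,\cdot)\,d\QQ$ is upper semi-continuous, and then one integrates against $\mu$ using boundedness to exchange limits. For the $\mu$ side, since $l$ is merely Borel in $\theta$ (not assumed continuous in $\theta$ in general), $\mu \mapsto F(\mu,\QQ)$ is weak-* continuous only when $l(\cdot,(x,y))$ has suitable regularity; but Sion only needs quasi-convexity plus lower semicontinuity on the minimizing side for \emph{some} topology making that set's relevant sublevel behavior work — in fact the cleanest route is to invoke Sion with $\mathcal{A}_\varepsilon(\PP)$ compact and $F(\cdot,\QQ)$ convex (affine) and lower semi-continuous in $\mu$ for the weak topology, which holds because for fixed $\QQ$, writing $F(\mu,\QQ) = \int_\Theta \big(\int l(\theta,(x,y))\,d\QQ(x,y)\big)\,d\mu(\theta)$ and the inner integral is a bounded measurable function of $\theta$, so $\mu\mapsto F(\mu,\QQ)$ is affine and continuous under weak convergence of $\mu$ against bounded measurable test functions — here one must be a little careful and may instead only get lower semicontinuity, which still suffices. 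Applying Sion then yields the equality of $\inf\sup$ and $\sup\inf$, i.e. Equation~\eqref{eq:duality}.

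Next I would argue the supremum over $\QQ$ is attained: the outer problem is $\max_\QQ \inf_\mu F(\mu,\QQ)$; the objective $\QQ \mapsto \inf_\mu F(\mu,\QQ)$ is an infimum of weak-upper-semicontinuous functions, hence weak-upper-semicontinuous, on the weakly compact set $\mathcal{A}_\varepsilon(\PP)$, so it attains its maximum — this also matches the ``$\max$'' already written in the statement and the earlier remark that $\QQ^*$ always exists (indeed Proposition~\ref{prop:dro_adv} gives existence of the inner maximizer for the primal side as well). Finally, for the attainment of the infimum under the extra hypothesis that $\Theta$ is compact and $l(\cdot,(x,y))$ is lower semi-continuous: then $\mathcal{M}^+_1(\Theta)$ is weakly compact (Prokhorov, since $\Theta$ compact Polish), and $\mu \mapsto \riskadv^\varepsilon(\mu) = \sup_\QQ F(\mu,\QQ)$ is a supremum of functions each of which is lower semi-continuous in $\mu$ (because $\theta\mapsto \int l\,d\QQ$ is now lower semi-continuous and bounded, so its integral against $\mu$ is weakly lower semi-continuous), hence $\riskadv^\varepsilon$ is weakly lower semi-continuous, and it attains its infimum on the compact set $\mathcal{M}^+_1(\Theta)$.

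The main obstacle I anticipate is the delicate matching of topologies and semicontinuity directions in the Sion application: one needs compactness on exactly one side, the correct (upper vs. lower) semicontinuity of $F$ in each variable separately, and these have to be compatible with the convexity requirements — in particular making precise in what sense $\mu \mapsto F(\mu,\QQ)$ behaves well when $l$ is only Borel in $\theta$. A secondary technical point is justifying the interchange $\int_\Theta \int_{\mathcal{X}\times\mathcal{Y}} = \int_{\mathcal{X}\times\mathcal{Y}}\int_\Theta$ (Fubini–Tonelli, legitimate since $l \geq 0$ and bounded) so that the two ``slicewise linearity'' statements are literally about the same function, and confirming that the weak-topology compactness of $\mathcal{A}_\varepsilon(\PP)$ from Proposition~\ref{prop:wass_ball} is the metrizable compactness needed for the standard statement of Sion's theorem.
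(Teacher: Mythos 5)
Your overall route---apply a minimax theorem to the bilinear payoff $(\mu,\QQ)\mapsto\int\int l\,d\mu\,d\QQ$, using convexity and weak compactness of $\mathcal{A}_\varepsilon(\PP)$ (Proposition~\ref{prop:wass_ball}) and upper semicontinuity of $\QQ\mapsto\int\int l\,d\mu\,d\QQ$ (Lemma~\ref{lem:usc2})---is essentially the paper's. The genuine gap sits exactly where you flag the argument as delicate: Sion's theorem additionally requires, for each fixed $\QQ$, lower semicontinuity of $\mu\mapsto\int_\Theta h_\QQ\,d\mu$ where $h_\QQ(\theta)=\int l(\theta,(x,y))\,d\QQ(x,y)$, with respect to some linear topology on $\mathcal{M}^1_+(\Theta)$, and your verification for the weak topology does not go through. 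Under Assumption~\ref{ass:loss}, $h_\QQ$ is only bounded and Borel in $\theta$ (no continuity or semicontinuity in $\theta$ is assumed), and integration of a merely Borel bounded function against $\mu$ is neither continuous nor lower semicontinuous for the weak topology: weak convergence only controls integrals of bounded continuous (or semicontinuous) test functions. Your fallback ``may instead only get lower semicontinuity'' fails for the same reason, so the Sion hypotheses are not verified as written.

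The gap is fixable in two ways. (i) Since Sion imposes no compactness on the minimizing side, you may equip the space of finite signed measures on $\Theta$ with the Hausdorff locally convex topology $\sigma(\mathcal{M}(\Theta),B_b(\Theta))$ of duality against bounded Borel functions; then $\mu\mapsto\int h_\QQ\,d\mu$ is continuous by definition and Sion applies. (ii) The paper instead invokes Fan's minimax theorem, which is tailor-made here: it needs only a compact convex Hausdorff maximizing set, a convex minimizing set with no topology at all, a concave-convex payoff, and upper semicontinuity in the compact variable for each fixed $\mu$---so the problematic $\mu$-semicontinuity never has to be checked. This is precisely what your plan buys at the cost of the unverified hypothesis, and what the paper's choice of theorem avoids. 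The remainder of your proposal is sound: the Fubini interchange is justified by Lemma~\ref{lem:fubini}; attainment of the supremum follows since $\QQ\mapsto\inf_\mu F(\mu,\QQ)$ is an infimum of weakly usc functions on the weakly compact $\mathcal{A}_\varepsilon(\PP)$ (and is also part of Fan's conclusion); and under the extra hypotheses ($\Theta$ compact, $l(\cdot,(x,y))$ lsc), Prokhorov gives weak compactness of $\mathcal{M}^1_+(\Theta)$ while Fatou plus the portmanteau characterization give weak lower semicontinuity of $\mu\mapsto\riskadv^\varepsilon(\mu)$, so the infimum is attained.
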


 \begin{corollary}
\label{cor:nash-eq}
Under Assumption~\ref{ass:loss}, for any $\delta>0$, there exists a $\delta$-approximate Nash-Equibilrium $(\mu_\delta,\QQ^*)$. Moreover, if the infimum is attained, there exists a Nash equilibrium $(\mu^*,\QQ^*)$ to the adversarial examples game.
\end{corollary}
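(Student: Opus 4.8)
The plan is to prove the minimax equality \eqref{eq:duality} by appealing to a suitable infinite-dimensional minimax theorem, then derive Corollary~\ref{cor:nash-eq} as an essentially formal consequence. First I would check that the bilinear objective $F(\mu,\QQ) := \mathbb{E}_{\theta\sim\mu,(x,y)\sim\QQ}[l(\theta,(x,y))]$ has the right structural properties. It is affine (hence both convex and concave) separately in $\mu$ and in $\QQ$, since it is literally linear in each argument when the other is fixed. The feasible set $\mathcal{M}^+_1(\Theta)$ is convex; the feasible set $\mathcal{A}_\varepsilon(\PP)$ is convex \emph{and} weakly compact by Proposition~\ref{prop:wass_ball}. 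For continuity, the key observation is that for fixed $\mu$, the map $\QQ\mapsto F(\mu,\QQ) = \int (\int l(\theta,(x,y))\,d\mu(\theta))\,d\QQ(x,y)$ is upper semi-continuous in the weak topology: by Assumption~\ref{ass:loss}, $l(\theta,\cdot)$ is upper semi-continuous and bounded by $M$, so the integrated function $(x,y)\mapsto\int l(\theta,(x,y))\,d\mu(\theta)$ is upper semi-continuous and bounded (upper semi-continuity passes through the integral by Fatou applied to $M-l$), and integration of a bounded u.s.c. function against a weakly convergent sequence of measures is u.s.c. Symmetrically, for fixed $\QQ$, $\mu\mapsto F(\mu,\QQ)$ is linear and weak-$*$ continuous in $\mu$ (the integrand $\theta\mapsto\int l(\theta,(x,y))\,d\QQ(x,y)$ is bounded measurable, but linearity plus the compactness we actually exploit lies on the $\QQ$ side).

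Given this, I would invoke Sion's minimax theorem (or Ky Fan's theorem) in the form: if $A$ is a convex subset of a topological vector space, $B$ is a convex \emph{compact} subset of a topological vector space, and $F:A\times B\to\mathbb{R}$ is quasi-convex and lower semi-continuous in its first argument and quasi-concave and upper semi-continuous in its second, then $\inf_{a\in A}\sup_{b\in B}F(a,b)=\sup_{b\in B}\inf_{a\in A}F(a,b)$. Here $A=\mathcal{M}^+_1(\Theta)$ with the weak topology, $B=\mathcal{A}_\varepsilon(\PP)$ with the weak topology (compact by Proposition~\ref{prop:wass_ball}), and the semicontinuity/affineness checks above supply the hypotheses. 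The inner $\sup$ over $\QQ$ is attained (it is an u.s.c. function on a compact set), which already gives ``$\max$'' on both sides and, combined with Proposition~\ref{prop:dro_adv}, identifies the left-hand side with $\valuerand^\varepsilon$ and shows the supremum in the dual is always attained by some $\QQ^*$. For the attainment of the infimum under the extra hypotheses ($\Theta$ compact, $l(\cdot,(x,y))$ l.s.c.): $\mathcal{M}^+_1(\Theta)$ is then weakly compact (Prohorov, since $\Theta$ compact), and $\mu\mapsto\riskadv^\varepsilon(\mu)=\sup_\QQ F(\mu,\QQ)$ is a supremum of weak-$*$ continuous functions when $l(\cdot,(x,y))$ is l.s.c. and bounded — actually with both u.s.c.\ (Assumption~\ref{ass:loss}) and l.s.c.\ the integrand is continuous — hence l.s.c., so it attains its infimum on the compact set $\mathcal{M}^+_1(\Theta)$.

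The main obstacle I anticipate is the continuity bookkeeping for the weak topology on $\mathcal{A}_\varepsilon(\PP)$ — specifically making rigorous that integrating a bounded upper semi-continuous integrand against weakly convergent probability measures yields an upper semi-continuous functional, and ensuring the integrand $(x,y)\mapsto\int_\Theta l(\theta,(x,y))\,d\mu(\theta)$ really inherits upper semi-continuity and measurability from Assumption~\ref{ass:loss}; this is where the properness of $(\mathcal{X}\times\mathcal{Y},d\oplus d')$ and the boundedness assumption $l\le M$ are doing real work (one typically uses the Portmanteau theorem in the form $\limsup_n\int g\,d\QQ_n\le\int g\,d\QQ$ for bounded u.s.c.\ $g$). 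A secondary subtlety is that $\mathcal{M}^+_1(\Theta)$ need not be compact in general (since $\Theta$ is only Polish, not proper), so the minimax theorem must be applied with compactness only on the $\QQ$-side — which is exactly the asymmetric version of Sion's theorem — and this asymmetry is precisely why the infimum over $\mu$ may fail to be attained, forcing the $\delta$-approximate formulation in Corollary~\ref{cor:nash-eq}.

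For Corollary~\ref{cor:nash-eq}: by Theorem~\ref{thm:duality-rand}, $\dualvalue^\varepsilon=\valuerand^\varepsilon$, and $\QQ^*$ attaining the (outer) maximum in the dual exists. Fix $\delta>0$ and pick $\mu_\delta$ with $\riskadv^\varepsilon(\mu_\delta)\le\valuerand^\varepsilon+\delta=\dualvalue^\varepsilon+\delta$; then by definition $(\mu_\delta,\QQ^*)$ is a $\delta$-approximate Nash equilibrium. If moreover the infimum is attained at some $\mu^*$, then $\riskadv^\varepsilon(\mu^*)=\valuerand^\varepsilon=\dualvalue^\varepsilon=\inf_\mu F(\mu,\QQ^*)\le F(\mu^*,\QQ^*)\le\sup_\QQ F(\mu^*,\QQ)=\riskadv^\varepsilon(\mu^*)$, so all inequalities are equalities and $(\mu^*,\QQ^*)$ is a saddle point, i.e.\ a genuine Nash equilibrium.
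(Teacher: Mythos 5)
Your proposal is correct and follows essentially the same route as the paper: the minimax equality is obtained from Fan's (Sion's) theorem using the convexity and weak compactness of $\mathcal{A}_\varepsilon(\PP)$ from Proposition~\ref{prop:wass_ball} together with the upper semi-continuity of $\QQ\mapsto\mathbb{E}_{\mu,\QQ}[l]$ (the paper's Lemmas on Fatou/Portmanteau), and the corollary is then the same formal consequence you give — choose $\mu_\delta$ within $\delta$ of the infimum and pair it with the attained dual maximizer $\QQ^*$, and when the infimum is attained the chain $\dualvalue^\varepsilon=\inf_\mu F(\mu,\QQ^*)\leq F(\mu^*,\QQ^*)\leq\sup_\QQ F(\mu^*,\QQ)=\valuerand^\varepsilon=\dualvalue^\varepsilon$ yields a saddle point. (The only nitpick is your parenthetical about the integrand being continuous "with both u.s.c.\ and l.s.c." — the two semicontinuities are in different variables — but that remark concerns the attainment claim of Theorem~\ref{thm:duality-rand}, which the corollary takes as a hypothesis, so it does not affect your argument.)
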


Theorem~\ref{thm:duality-rand} shows that $\mathcal{D}^\varepsilon=\mathcal{V}_{rand}^\varepsilon$. From a game theoretic perspective, this means that the minimal adversarial risk for a randomized classifier against any attack (primal problem) is the same as the maximal risk an adversary can get by using an attack strategy that is oblivious to the classifier it faces (dual problem). This suggests that playing randomized strategies for the classifier could substantially  improve robustness to adversarial examples.
In the next section, we will design an algorithm that efficiently learn this classifier, we will get improve adversarial robustness over classical deterministic defenses.

\begin{rmq}
Theorem~\ref{thm:duality-rand} remains true if one replaces $\mathcal{A}_\varepsilon(\PP)$ with any other Wasserstein compact uncertainty sets (see~\cite{yue2020linear} for conditions of compactness). 
\end{rmq}

\section{Finding the Optimal Classifiers}
\label{sec:algo}

\subsection{An Entropic Regularization}

Let $\{(x_i,y_i)\}_{i=1}^N$ samples independently drawn from $\PP$ and denote  $\widehat{\mathbb
{P}}:=\frac{1}{N}\sum_{i=1}^N \delta_{(x_i,y_i)}$ the associated empirical distribution. 
One can show the adversarial empirical risk minimization can be casted as:
\begin{align*}
\widehat{\mathcal{R}}_{adv}^{\varepsilon,*}:=\inf_{\mu\in \mathcal{M}^+_1(\Theta)}\sum_{i=1}^N\sup_{\QQ_i\in\Gamma_{i,\varepsilon}}\mathbb{E}_{(x,y)\sim \QQ_i,\theta \sim \mu}\left[l(\theta,(x,y))\right]
\end{align*}
where $\Gamma_{i,\varepsilon}$ is defined as : 
\begin{align*}
    \Gamma_{i,\varepsilon}:=\Big\{\QQ_i\mid~\int d\QQ_i=\frac{1}{N},~\int c_{\varepsilon}((x_i,y_i),\cdot) d\QQ_i=0\Big\}.
\end{align*}
More details on this decomposition are given in Appendix~\ref{sec:complements}.
In the following, we regularize the above objective by adding an entropic term to each inner supremum problem. Let $\bm{\alpha}:=(\alpha_i)_{i=1}^N\in\mathbb{R}_+^N$ such that for all $i\in\{1,\dots,N\}$, and let us consider the following optimization problem:
\begin{equation*}
\begin{aligned}
\label{eq-legendre-KL}
\widehat{\mathcal{R}}_{adv,\bm{\alpha}}^{\varepsilon,*}:=\inf_{\mu\in \mathcal{M}^+_1(\Theta)}\sum_{i=1}^N&\sup_{\QQ_i\in\Gamma_{i,\varepsilon}}\mathbb{E}_{ \QQ_i, \mu}\left[l(\theta,(x,y))\right]\\
&-\alpha_i\text{KL}\left(\QQ_i\Big|\Big|\frac{1}{N}\mathbb{U}_{(x_i,y_i)}\right)
\end{aligned}
\end{equation*}
where $\mathbb{U}_{(x,y)}$ is an arbitrary distribution of support equal to:
\begin{align*}
    S_{(x,y)}^{(\varepsilon)}:=\Big\{(x',y'):~\text{s.t.}~c_{\varepsilon}((x,y),(x',y'))=0\Big\},
\end{align*}
and for all $\QQ,\mathbb{U}\in\mathcal{M}_{+}(\mathcal{X}\times\mathcal{Y})$,
\begin{align*}
\text{KL}(\QQ||\mathbb{U}):=  \left\{
    \begin{array}{lll}
        \int\log(\frac{d\QQ}{d\mathbb{U}})d\QQ+|\mathbb{U}| - |\QQ| &  \mbox{if } \QQ\ll \mathbb{U}\\
        +\infty & \mbox{otherwise.}
    \end{array}
\right.
\end{align*}
Note that when $\bm{\alpha}=0$, we recover the problem of interest $\widehat{\mathcal{R}}_{adv,\bm{\alpha}}^{\varepsilon,*}=\widehat{\mathcal{R}}_{adv}^{\varepsilon,*}$. Moreover, we show the regularized supremum tends to the standard supremum when $\bm{\alpha}\to 0$.
\begin{prop}
\label{prop:limit-eps}
For $\mu\in\mathcal{M}_{1}^{+}(\Theta)$, one has
\begin{align*}
    &\lim_{\alpha_i\rightarrow 0}\sup_{\QQ_i\in\Gamma_{i,\varepsilon}}\mathbb{E}_{\QQ_i,\mu}\left[l(\theta,(x,y))\right]-\alpha_i\text{KL}\left(\QQ\Big|\Big|\frac{1}{N}\mathbb{U}_{(x_i,y_i)}\right)\\
    &=\sup_{\QQ_i\in\Gamma_{i,\varepsilon}}\mathbb{E}_{(x,y)\sim \QQ_i,\theta \sim \mu}\left[l(\theta,(x,y))\right].
\end{align*}
\end{prop}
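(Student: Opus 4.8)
The plan is to fix $\mu$ and the index $i$, and analyze the map $\alpha_i \mapsto V(\alpha_i) := \sup_{\QQ_i\in\Gamma_{i,\varepsilon}}\bigl(\mathbb{E}_{\QQ_i,\mu}[l(\theta,(x,y))] - \alpha_i\,\text{KL}(\QQ_i\,\|\,\tfrac{1}{N}\mathbb{U}_{(x_i,y_i)})\bigr)$ as $\alpha_i\downarrow 0$. Write $L(x,y):=\mathbb{E}_{\theta\sim\mu}[l(\theta,(x,y))]$ and $V(0):=\sup_{\QQ_i\in\Gamma_{i,\varepsilon}}\int L\,d\QQ_i$; by Assumption~\ref{ass:loss}, $0\le L\le M$ and $L$ is upper-semi-continuous. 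Since $\text{KL}(\cdot\,\|\,\cdot)\ge 0$, for every $\alpha_i>0$ we have $V(\alpha_i)\le V(0)$, and $V$ is non-increasing in $\alpha_i$; hence $\limsup_{\alpha_i\to 0}V(\alpha_i)\le V(0)$. The entire content is therefore the matching lower bound $\liminf_{\alpha_i\to 0}V(\alpha_i)\ge V(0)$.

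For the lower bound, I would take any $\QQ_i\in\Gamma_{i,\varepsilon}$ with $\int L\,d\QQ_i \ge V(0)-\delta$, and exhibit a feasible competitor $\QQ_i^\delta$ that is absolutely continuous with respect to $\tfrac1N\mathbb{U}_{(x_i,y_i)}$ and has finite, controlled KL divergence; then
\begin{align*}
V(\alpha_i) \;\ge\; \int L\,d\QQ_i^\delta \;-\; \alpha_i\,\text{KL}\Bigl(\QQ_i^\delta\,\Big\|\,\tfrac1N\mathbb{U}_{(x_i,y_i)}\Bigr),
\end{align*}
and letting $\alpha_i\to 0$ first and then $\delta\to 0$ finishes the argument, provided $\int L\,d\QQ_i^\delta \to V(0)$ as $\delta\to0$. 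The natural construction is to mix $\QQ_i$ slightly toward the reference measure: since both $\QQ_i$ and $\tfrac1N\mathbb{U}_{(x_i,y_i)}$ are supported in $S_{(x_i,y_i)}^{(\varepsilon)}$ and have the same total mass $\tfrac1N$, and since $\Gamma_{i,\varepsilon}$ is convex, the measure $\QQ_i^t := (1-t)\QQ_i + \tfrac{t}{N}\mathbb{U}_{(x_i,y_i)}$ lies in $\Gamma_{i,\varepsilon}$ for $t\in(0,1]$. One still needs $\text{KL}(\QQ_i^t\,\|\,\tfrac1N\mathbb{U}_{(x_i,y_i)})<\infty$: this is not automatic because $\QQ_i$ may charge atoms, but $\QQ_i^t$ is absolutely continuous with respect to $\tfrac1N\mathbb{U}_{(x_i,y_i)} + \QQ_i$, and a direct estimate gives $\text{KL}(\QQ_i^t\,\|\,\tfrac1N\mathbb{U}_{(x_i,y_i)}) \le \tfrac1N\log(1/t) + (\text{bounded terms})$ whenever $\QQ_i$ itself is absolutely continuous, so I would additionally first approximate $\QQ_i$ in the $\int L\,d\QQ_i$ objective by an absolutely-continuous member of $\Gamma_{i,\varepsilon}$ (mixing toward $\mathbb{U}$ again does this). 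Alternatively, and more cleanly, one takes $\QQ_i^\delta$ directly as $(1-\delta)\QQ_i + \tfrac{\delta}{N}\mathbb{U}_{(x_i,y_i)}$ and simply sets $\alpha_i = \alpha_i(\delta)\to 0$ slowly enough that $\alpha_i(\delta)\,\text{KL}(\QQ_i^\delta\,\|\,\cdot)\to 0$; then $\int L\,d\QQ_i^\delta = (1-\delta)\int L\,d\QQ_i + \tfrac{\delta}{N}\int L\,d\mathbb{U} \ge (1-\delta)(V(0)-\delta)$, and a diagonal argument along $\delta\to0$ yields $\liminf V(\alpha_i)\ge V(0)$; to get the stated statement (limit as $\alpha_i\to 0$ for a fixed approach, not along a subsequence) one combines this with monotonicity of $V$.

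The main obstacle is the finiteness and quantitative control of $\text{KL}(\QQ_i^\delta\,\|\,\tfrac1N\mathbb{U}_{(x_i,y_i)})$: because $\mathbb{U}$ is an arbitrary measure with the prescribed support, $\QQ_i$ need not be dominated by it, so one cannot naively bound the divergence, and the mixing trick must be set up so that the log-density of $\QQ_i^\delta$ is genuinely integrable. I would handle this by noting that $\tfrac{d\QQ_i^\delta}{d(\tfrac1N\mathbb{U}_{(x_i,y_i)})}$, when it exists, is bounded below by $\delta$ on the support, which controls the negative part of the logarithm, while the positive part is controlled by $\int L\le M/\,\cdot$-type boundedness only if $\QQ_i$ is itself dominated — hence the preliminary reduction to absolutely continuous $\QQ_i$ (via one more mixing step) is the crux. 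Everything else — monotonicity, the $\limsup$ bound, passing limits in $\delta$ — is routine. Measurability and well-posedness of the suprema are already guaranteed by the setup (Proposition~\ref{prop:dro_adv} and the cited lemmas).
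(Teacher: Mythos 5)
Your upper bound ($V(\alpha_i)\le V(0)$ since $\text{KL}\ge 0$) and the monotonicity of $V$ are fine; the whole difficulty is the lower bound, which you correctly identify as the crux, but your proposed construction does not close it. Mixing toward the reference measure, $\QQ_i^t=(1-t)\QQ_i+\tfrac{t}{N}\mathbb{U}_{(x_i,y_i)}$, never removes the part of $\QQ_i$ that is singular with respect to $\mathbb{U}_{(x_i,y_i)}$: if $\QQ_i\not\ll\mathbb{U}_{(x_i,y_i)}$ then $\QQ_i^t\not\ll\mathbb{U}_{(x_i,y_i)}$ for every $t<1$, so $\text{KL}\bigl(\QQ_i^t\,\big\|\,\tfrac1N\mathbb{U}_{(x_i,y_i)}\bigr)=+\infty$ and the competitor is worthless; in particular your ``one more mixing step'' to reduce to an absolutely continuous $\QQ_i$ does nothing. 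Worse, that preliminary reduction is impossible in general: since $L:=\mathbb{E}_{\theta\sim\mu}[l(\theta,\cdot)]$ is only upper semi-continuous, the near-optimal mass of the unregularized problem may sit on an $\mathbb{U}_{(x_i,y_i)}$-null set (think of $L$ equal to $1$ at a single point of $S_{(x_i,y_i)}^{(\varepsilon)}$ and $0$ elsewhere, with $\mathbb{U}_{(x_i,y_i)}$ atomless). For every $\QQ\in\Gamma_{i,\varepsilon}$ with finite divergence one has $\QQ\ll\mathbb{U}_{(x_i,y_i)}$ and hence $\int L\,d\QQ\le\tfrac1N\esssup_{\mathbb{U}_{(x_i,y_i)}}L$, so $V(\alpha_i)\le\tfrac1N\esssup_{\mathbb{U}_{(x_i,y_i)}}L$ for all $\alpha_i>0$, which in the example above is strictly below $V(0)=\tfrac1N\sup_{S_{(x_i,y_i)}^{(\varepsilon)}}L$. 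Thus no choice of $\QQ_i^\delta$, no diagonal selection of $\alpha_i(\delta)$, and indeed no argument at all can deliver your $\liminf$ bound without an additional hypothesis linking $\mathbb{U}_{(x_i,y_i)}$ to the near-maximizers of $L$ — precisely the mass condition $\mathbb{U}_{(x,y)}(A_{\beta,\mu}^{(x,y)})\ge C_\beta$ that appears later for the quantitative bound of Proposition~\ref{prop:control-error-approx} — or some continuity of $L$. So the ``crux'' you flag is a genuine gap, not a routine technicality.

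For the record, the paper's own proof takes a different route: it first shows both suprema are attained (tightness of $\Gamma_{i,\varepsilon}$ and Prokhorov, plus upper semi-continuity of $\QQ\mapsto\int L\,d\QQ$ and lower semi-continuity of the KL term), then compares the optimality inequalities satisfied by the regularized and unregularized optimizers and passes to a weak limit of the regularized optima; it does not attempt to build absolutely continuous near-optimal competitors. Note, however, that the step where it lets $\alpha_i^{(\ell)}\text{KL}\bigl(\QQ_i^*\,\big\|\,\tfrac1N\mathbb{U}_{(x_i,y_i)}\bigr)$ vanish implicitly requires that divergence to be finite, i.e.\ it runs into exactly the finiteness issue you isolated; your diagnosis of where the difficulty lives is therefore accurate, but your attempt does not resolve it.
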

By adding an entropic term to the objective, we obtain an explicit formulation of the supremum involved in the sum: as soon as $\bm{\alpha}>0$ (which means that each $\alpha_i>0$), each sub-problem becomes just the Fenchel-Legendre transform of $\text{KL}(\cdot|\mathbb{U}_{(x_i,y_i)}/N)$ which has the following closed form:
\begin{align*}
 &\sup_{\QQ_i\in\Gamma_{i,\varepsilon}}\mathbb{E}_{\QQ_i, \mu}\left[l(\theta,(x,y))\right]-\alpha_i\text{KL}\left(\QQ_i||\frac{1}{N}\mathbb{U}_{(x_i,y_i)}\right)\\
 &=\frac{\alpha_i}{N}\log\left( \int_{\mathcal{X}\times\mathcal{Y}}\exp\left(\frac{\mathbb{E}_{\theta \sim \mu}\left[l(\theta,(x,y))\right]}{\alpha_i}\right)d\mathbb{U}_{(x_i,y_i)}\right).
\end{align*}
Finally, we end up with the following problem: 
\begin{align*}
  \inf_{\mu\in \mathcal{M}^+_1(\Theta)}\sum_{i=1}^N  \frac{\alpha_i}{N}\log\left( \int\exp\frac{\mathbb{E}_{ \mu}\left[l(\theta,(x,y))\right]}{\alpha_i}d\mathbb{U}_{(x_i,y_i)}\right).
\end{align*}
In order to solve the above problem, one needs to compute the integral involved in the objective. To do so, we estimate it by randomly sampling $m_i\geq 1$ samples $(u_1^{(i)},\dots,u_{m_i}^{(i)})\in(\mathcal{X}\times\mathcal{Y})^{m_i}$ from $\mathbb{U}_{(x_i,y_i)}$ for all $i\in\{1,\dots,N\}$ which leads to the following optimization problem
\begin{align}
\label{eq-obj-sample}
  \inf_{\mu\in \mathcal{M}^+_1(\Theta)}\sum_{i=1}^N  \frac{\alpha_i}{N}\log\left( \frac{1}{m_i}\sum_{j=1}^{m_i}\exp\frac{\mathbb{E}_{ \mu}\left[l(\theta,u_j^{(i)})\right]}{\alpha_i}\right)
\end{align}
denoted $\widehat{\mathcal{R}}_{adv,\bm{\alpha}}^{\varepsilon,\bm{m}}$ where $\bm{m}:=(m_i)_{i=1}^N$ in the following. Now we aim at controlling the error made with our approximations. We decompose the error into two terms
\begin{align*}
  |\widehat{\mathcal{R}}_{adv,\bm{\alpha}}^{\varepsilon,\bm{m}} - \widehat{\mathcal{R}}_{adv}^{\varepsilon,*}|
   \leq |\widehat{\mathcal{R}}_{adv,\bm{\alpha}}^{\varepsilon,*} - \widehat{\mathcal{R}}_{adv,\bm{\alpha}}^{\varepsilon,\bm{m}}| +|\widehat{\mathcal{R}}_{adv,\bm{\alpha}}^{\varepsilon,*} - \widehat{\mathcal{R}}_{adv}^{\varepsilon,*}|
\end{align*}
where the first one corresponds to the statistical error made by our estimation of the integral, and the second to the approximation error made by the entropic regularization of the objective. First, we show a control of the statistical error using Rademacher complexities~\cite{bartlett2002rademacher}. 
\begin{prop}
\label{prop:control-error-stat}
Let $m\geq 1$ and $\alpha>0$ and denote $\bm{\alpha}:=(\alpha,\dots,\alpha)\in\mathbb{R}^N$ and $\bm{m}:=(m,\dots,m)\in\mathbb{R}^N$.  \textcolor{black}{Then by denoting $\tilde{M}=\max(M,1)$}, we have with a probability of at least $1-\delta$
\begin{align*}
|\widehat{\mathcal{R}}_{adv,\bm{\alpha}}^{\varepsilon,*} - \widehat{\mathcal{R}}_{adv,\bm{\alpha}}^{\varepsilon,\bm{m}}|\leq& \frac{2e^{M/\alpha}}{N}\sum_{i=1}^N R_i + \color{black}{6\tilde{M}}\color{black}e^{M/\alpha}\sqrt{\frac{\log(\frac4\delta)}{2mN}}
\end{align*}
where $R_i:=\frac{1}{m}\mathbb{E}_{\bm{\sigma}}\left[\sup_{\theta\in\Theta}\sum_{j=1}^m \sigma_j l(\theta,u_j^{(i)})\right]$ and $\bm{\sigma}:=(\sigma_1,\dots,\sigma_m)$ with $\sigma_i$ i.i.d. sampled as $\mathbb{P}[\sigma_i=\pm1]=1/2$.
\end{prop}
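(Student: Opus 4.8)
The plan is to reduce the claim to a uniform-over-$\mu$ control of the deviation between a population integral and its sample average, and then to run the standard chain: symmetrization, a contraction inequality, and McDiarmid's bounded-differences inequality. Using the explicit log-sum-exp form of the regularized inner supremum established just before the statement, and writing for brevity $\phi_\mu(x,y):=\exp\!\big(\mathbb{E}_{\theta\sim\mu}[l(\theta,(x,y))]/\alpha\big)$, I would first record that $\widehat{\mathcal{R}}_{adv,\bm{\alpha}}^{\varepsilon,*}=\inf_{\mu}J(\mu)$ and $\widehat{\mathcal{R}}_{adv,\bm{\alpha}}^{\varepsilon,\bm{m}}=\inf_{\mu}\widehat{J}(\mu)$, where $J(\mu):=\frac{\alpha}{N}\sum_{i=1}^N\log\mathbb{E}_{(x,y)\sim\mathbb{U}_{(x_i,y_i)}}[\phi_\mu(x,y)]$ and $\widehat{J}(\mu):=\frac{\alpha}{N}\sum_{i=1}^N\log\big(\frac1m\sum_{j=1}^m\phi_\mu(u_j^{(i)})\big)$. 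Since both are infima over $\mathcal{M}^1_+(\Theta)$, one has $|\widehat{\mathcal{R}}_{adv,\bm{\alpha}}^{\varepsilon,*}-\widehat{\mathcal{R}}_{adv,\bm{\alpha}}^{\varepsilon,\bm{m}}|\le\sup_{\mu}|J(\mu)-\widehat{J}(\mu)|$, so it suffices to bound the right-hand side. Because $0\le l\le M$, every $\phi_\mu$ takes values in $[1,e^{M/\alpha}]$, hence all arguments of the logarithms lie in $[1,e^{M/\alpha}]$, where $\log$ is $1$-Lipschitz; together with the triangle inequality over $i$ this gives $\sup_\mu|J(\mu)-\widehat{J}(\mu)|\le\frac{\alpha}{N}\sum_{i=1}^N D_i$, where $D_i:=\sup_\mu\big|\mathbb{E}_{\mathbb{U}_{(x_i,y_i)}}[\phi_\mu]-\frac1m\sum_{j=1}^m\phi_\mu(u_j^{(i)})\big|$ is a uniform deviation over the class $\{\phi_\mu:\mu\in\mathcal{M}^1_+(\Theta)\}$ on the i.i.d. sample $u_1^{(i)},\dots,u_m^{(i)}\sim\mathbb{U}_{(x_i,y_i)}$.

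Next I would bound $\mathbb{E}[D_i]$. Symmetrization (taken in its one-sided form, so as not to lose a factor) gives $\mathbb{E}[D_i]\le 2\,\mathbb{E}_{\bm{\sigma}}\big[\sup_\mu\frac1m\sum_j\sigma_j\phi_\mu(u_j^{(i)})\big]$; then the Ledoux--Talagrand contraction lemma, applied to $t\mapsto e^{t/\alpha}-1$, which vanishes at $0$ and is $\tfrac{e^{M/\alpha}}{\alpha}$-Lipschitz on $[0,M]$ (the additive constant cancelling in the difference $\mathbb{E}\phi_\mu-\widehat{\mathbb{E}}\phi_\mu$), reduces this to $\tfrac{e^{M/\alpha}}{\alpha}$ times the empirical Rademacher complexity of $\{(x,y)\mapsto\mathbb{E}_{\theta\sim\mu}[l(\theta,(x,y))]:\mu\}$. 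That class is the closed convex hull of $\{l(\theta,\cdot):\theta\in\Theta\}$, and Rademacher complexity is unchanged under passing to the convex hull (the supremum of a linear functional over the simplex is attained at a vertex), so it equals $R_i$. Multiplying back by $\frac{\alpha}{N}$ and summing over $i$ yields the first term $\frac{2e^{M/\alpha}}{N}\sum_{i=1}^N R_i$.

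Then I would upgrade this expectation bound to a high-probability one via McDiarmid's inequality, viewing $\sup_\mu|J(\mu)-\widehat{J}(\mu)|$ as a function of the $mN$ independent samples $(u_j^{(i)})_{i\le N,\,j\le m}$: replacing a single $u_j^{(i)}$ moves the corresponding empirical average $\frac1m\sum_j\phi_\mu(u_j^{(i)})\in[1,e^{M/\alpha}]$ by at most $\tfrac{e^{M/\alpha}-1}{m}$, hence, by $1$-Lipschitzness of $\log$, moves the whole quantity by at most $\tfrac{\alpha}{N}\cdot\tfrac{e^{M/\alpha}-1}{m}\le\tfrac{M e^{M/\alpha}}{mN}$ (using $e^{x}-1\le xe^{x}$); a similar bounded-difference argument, with increments $\le M/m$, concentrates each empirical $R_i$ around its expectation, which is what symmetrization actually produced. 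Applying the two-sided McDiarmid inequality to $\sup_\mu|J-\widehat{J}|$ and to the average $\frac1N\sum_i R_i$, each at level $\delta/2$, and taking a union bound, gives with probability at least $1-\delta$ an extra term of order $\tilde{M}e^{M/\alpha}\sqrt{\log(4/\delta)/(2mN)}$; bounding the emerging numerical constant generously (using $M\le\tilde{M}=\max(M,1)$) yields the stated $6\tilde{M}e^{M/\alpha}\sqrt{\log(4/\delta)/(2mN)}$.

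The hard part will be the contraction step: one has to apply the Ledoux--Talagrand lemma to the infinite, non-parametric class $\{\phi_\mu\}$ indexed by probability measures — handling the non-zero offset $e^{0/\alpha}=1$ of the exponential and the absolute value inside the Rademacher average — and then correctly invoke the convex-hull invariance so that the bound is expressed through the Rademacher complexity $R_i$ of the base loss class $\{l(\theta,\cdot):\theta\in\Theta\}$ rather than of the much larger mixture class; pinning the final constants down to exactly $6\tilde{M}$ and $\log(4/\delta)$ is then just careful bookkeeping. The measurability of the suprema over $\mu\in\mathcal{M}^1_+(\Theta)$ that makes all the expectations above meaningful is handled exactly as in the earlier parts of the paper.
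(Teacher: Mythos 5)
Your proposal is correct and follows essentially the same route as the paper's proof: reduce $|\widehat{\mathcal{R}}_{adv,\bm{\alpha}}^{\varepsilon,*}-\widehat{\mathcal{R}}_{adv,\bm{\alpha}}^{\varepsilon,\bm{m}}|$ to $\sup_{\mu}$ of the deviation, use $1$-Lipschitzness of $\log$ on $[1,\infty)$, symmetrize and contract the exponential class onto the loss class, exploit linearity of $\mu\mapsto\mathbb{E}_{\mu}[l(\theta,\cdot)]$ so that the mixture-class Rademacher complexity collapses to $R_i$, and apply McDiarmid twice (once to the uniform deviation, once to replace the expected Rademacher complexities by their empirical versions) with a union bound at level $\delta/2$ giving $\log(4/\delta)$ and the constant $6\tilde{M}$. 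The only differences are cosmetic (you shift the exponential by $1$ before contraction and phrase the linearity step as convex-hull invariance, while the paper cites the contraction lemma directly and computes the sup over $\mu$ explicitly), with slightly different but equally valid constant bookkeeping.
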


We deduce from the above Proposition that in the particular case where $\Theta$ is finite such that $|\Theta|= L$, with probability of at least $1-\delta$
\begin{align*}
   |\widehat{\mathcal{R}}_{adv,\bm{\alpha}}^{\varepsilon,*} - \widehat{\mathcal{R}}_{adv,\bm{\alpha}}^{\varepsilon,\bm{m}}| \in \mathcal{O}\left(Me^{M/\alpha}\sqrt{\frac{\log(L)}{m}} \right).
\end{align*}
This case is of particular interest when one wants to learn the optimal mixture of some given classifiers in order to minimize the adversarial risk. In the following proposition, we control the approximation error made by adding an entropic term to the objective. 
\begin{prop}
\label{prop:control-error-approx}
Denote for $\beta>0$, $(x,y)\in\mathcal{X}\times\mathcal{Y}$ and $\mu\in\mathcal{M}_{1}^{+}(\Theta)$,
    $A_{\beta,\mu}^{\left(  x,y\right)}:=\{u|\sup_{v\in  S_{(x,y)}^{(\varepsilon)}}\mathbb{E}_{\mu}[l(\theta,v)]\leq \mathbb{E}_{\mu}[l(\theta,u)]+\beta\}$. 
    If there exists $C_{\beta}$ such that for all $(x,y)\in\mathcal{X}\times\mathcal{Y}$ and $\mu\in\mathcal{M}_{1}^{+}(\Theta)$, $\mathbb{U}_{(x,y)}\left(A_{\beta,\mu}^{\left(  x,y\right)}\right)\geq C_\beta$ then we have
\begin{align*}
   |\widehat{\mathcal{R}}_{adv,\bm{\alpha}}^{\varepsilon,*} - \widehat{\mathcal{R}}_{adv}^{\varepsilon,*}|\leq 2\alpha |\log(C_\beta)| + \beta.
\end{align*}
\end{prop}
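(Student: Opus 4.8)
The plan is to pass through the per-sample closed forms derived just above the statement, reduce the bound to a pointwise-in-$\mu$ sandwich of the $i$-th summand, and then transfer it to the two infima. Write $g_\mu(v):=\mathbb{E}_{\theta\sim\mu}[l(\theta,v)]$. Recall that the $i$-th unregularized summand is $F_i(\mu):=\sup_{\QQ_i\in\Gamma_{i,\varepsilon}}\mathbb{E}_{\QQ_i,\mu}[l]=\tfrac1N\sup_{v\in S_{(x_i,y_i)}^{(\varepsilon)}}g_\mu(v)$ (the supremum over the mass-$1/N$ measures carried by $S_{(x_i,y_i)}^{(\varepsilon)}$ is attained by a rescaled Dirac mass, consistently with the definition of $\widehat{\mathcal{R}}_{adv}^{\varepsilon,*}$), while for $\alpha_i>0$ the Fenchel--Legendre computation gives the $i$-th regularized summand $G_i(\mu):=\tfrac{\alpha_i}{N}\log\!\big(\int e^{g_\mu(v)/\alpha_i}\,d\mathbb{U}_{(x_i,y_i)}(v)\big)$, with $\mathbb{U}_{(x_i,y_i)}$ a probability measure. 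Thus $\widehat{\mathcal{R}}_{adv,\bm{\alpha}}^{\varepsilon,*}=\inf_\mu\sum_i G_i(\mu)$ and $\widehat{\mathcal{R}}_{adv}^{\varepsilon,*}=\inf_\mu\sum_i F_i(\mu)$; since a uniform two-sided gap between two families of functions transfers to their infima, it suffices to establish $F_i(\mu)-\tfrac1N\big(\beta+\alpha_i|\log C_\beta|\big)\le G_i(\mu)\le F_i(\mu)$ for every $\mu$ and every $i$.

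First the upper bound: on $\mathrm{supp}\,\mathbb{U}_{(x_i,y_i)}\subseteq S_{(x_i,y_i)}^{(\varepsilon)}$ one has $g_\mu\le N F_i(\mu)$, so $\int e^{g_\mu/\alpha_i}\,d\mathbb{U}_{(x_i,y_i)}\le e^{N F_i(\mu)/\alpha_i}$ and hence $G_i(\mu)\le F_i(\mu)$; summing over $i$ and taking the infimum gives $\widehat{\mathcal{R}}_{adv,\bm{\alpha}}^{\varepsilon,*}\le\widehat{\mathcal{R}}_{adv}^{\varepsilon,*}$. The lower bound is the only substantive step and is exactly where the hypothesis enters: restricting the integral to the superlevel set $A_{\beta,\mu}^{(x_i,y_i)}=\{u:\ g_\mu(u)\ge N F_i(\mu)-\beta\}$, on which $e^{g_\mu/\alpha_i}\ge e^{(N F_i(\mu)-\beta)/\alpha_i}$, and using $\mathbb{U}_{(x_i,y_i)}\big(A_{\beta,\mu}^{(x_i,y_i)}\big)\ge C_\beta$, gives $\int e^{g_\mu/\alpha_i}\,d\mathbb{U}_{(x_i,y_i)}\ge C_\beta\,e^{(N F_i(\mu)-\beta)/\alpha_i}$; taking logarithms and using $C_\beta\le1$ gives $G_i(\mu)\ge F_i(\mu)-\tfrac1N(\beta+\alpha_i|\log C_\beta|)$. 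Summing over $i=1,\dots,N$ (writing $\alpha:=\max_i\alpha_i$, or simply taking $\bm{\alpha}=(\alpha,\dots,\alpha)$) and passing to the infimum yields $\widehat{\mathcal{R}}_{adv,\bm{\alpha}}^{\varepsilon,*}\ge\widehat{\mathcal{R}}_{adv}^{\varepsilon,*}-\beta-\alpha|\log C_\beta|$. Combining the two directions gives $|\widehat{\mathcal{R}}_{adv,\bm{\alpha}}^{\varepsilon,*}-\widehat{\mathcal{R}}_{adv}^{\varepsilon,*}|\le\beta+\alpha|\log C_\beta|\le 2\alpha|\log C_\beta|+\beta$, which is the claim.

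I do not anticipate a genuine obstacle: the whole argument is front-loaded into the two closed forms and into the hypothesis on $C_\beta$, which is tailored precisely so that the superlevel-set restriction lower-bounds the log-partition function $\log\int e^{g_\mu/\alpha_i}\,d\mathbb{U}_{(x_i,y_i)}$. The one point that deserves an explicit line is that the per-coordinate sandwich is uniform in $\mu$ — its slack $\tfrac1N(\beta+\alpha_i|\log C_\beta|)$ is a constant independent of $\mu$ — so it survives the passage to the infimum over $\mu\in\mathcal{M}_1^+(\Theta)$ in both $\widehat{\mathcal{R}}_{adv,\bm{\alpha}}^{\varepsilon,*}$ and $\widehat{\mathcal{R}}_{adv}^{\varepsilon,*}$.
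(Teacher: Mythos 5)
Your proof is correct and takes essentially the same route as the paper: both reduce the problem to a uniform-in-$\mu$, per-sample comparison between the entropic closed form $\frac{\alpha}{N}\log\int e^{\mathbb{E}_\mu[l(\theta,\cdot)]/\alpha}\,d\mathbb{U}_{(x_i,y_i)}$ and $\frac1N\sup_{u\in S^{(\varepsilon)}_{(x_i,y_i)}}\mathbb{E}_\mu[l(\theta,u)]$, and the key step is identical—restrict the integral to $A_{\beta,\mu}^{(x_i,y_i)}$ and invoke $\mathbb{U}_{(x_i,y_i)}\big(A_{\beta,\mu}^{(x_i,y_i)}\big)\ge C_\beta$. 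Your one-sided sandwich ($G_i\le F_i$ together with the superlevel-set lower bound) in fact yields the slightly sharper constant $\alpha|\log C_\beta|+\beta$, which implies the stated bound $2\alpha|\log C_\beta|+\beta$.
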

The assumption made in the above Proposition states that for any given random classifier $\mu$, and any given point $(x,y)$, the set of $\beta$-optimal attacks at this point has at least a certain amount of mass depending on the $\beta$ chosen. This assumption is always met when $\beta$ is sufficiently large. However in order to obtain a tight control of the error, a trade-off exists between $\beta$ and the smallest amount of mass $C_{\beta}$ of $\beta$-optimal attacks.

Now that we have shown that solving~\eqref{eq-obj-sample} allows to obtain an approximation of the true solution $\widehat{\mathcal{R}}_{adv}^{\varepsilon,*}$, we next aim at deriving an algorithm to compute it. 

\subsection{Proposed Algorithms}
\label{sec:proposed-algo}
From now on, we focus on finite class of classifiers. Let $\Theta = \{\theta_1,\dots,\theta_L\}$, we aim to learn the optimal mixture of classifiers in this case. The adversarial empirical risk  is therefore defined as:
\begin{align*}
    \widehat{\mathcal{R}}_{adv}^\varepsilon(\bm{\lambda})= \sum_{i=1}^N\sup_{\QQ_i\in\Gamma_{i,\varepsilon}}\mathbb{E}_{(x,y)\sim \QQ_i}\left[\sum_{k=1}^L \lambda_k l(\theta_k,(x,y))\right]
\end{align*}
for $\bm{\lambda}\in\Delta_L: = \{\bm{\lambda}\in\mathbb{R}_+^L~\mathrm{s.t.}~\sum_{i=1}^L\lambda_i=1\}$, the probability simplex of $\mathbb{R}^L$. One can notice that $ \widehat{\mathcal{R}}_{adv}^\varepsilon(\cdot)$ is a continuous convex function, hence $\min_{\bm{\lambda}\in\Delta_L}\riskadv^\varepsilon(\bm{\lambda})$ is attained for a certain $\bm{\lambda}^*$. Then there exists a non-approximate Nash equilibrium $(\bm{\lambda}^*,\QQ^*)$ in the adversarial game when $\Theta$ is finite. Here, we present two algorithms to learn the optimal mixture of the adversarial risk minimization problem.
\begin{algorithm}[H]
\small
\SetAlgoLined
 $\bm{\lambda}_0 = \frac{\mathbf{1}_L}{L}; T;~\textcolor{black}{\eta=\frac{2}{M\sqrt{LT}}}$\\
 \For{$t=1,\dots,T$}{

  $\Tilde{\QQ}$ s.t. $\exists\QQ^*\in\mathcal{A}_\varepsilon(\PP)$ best response to \textcolor{black}{$\bm{\lambda}_{t-1}$} and for all $k\in[L]$, $\lvert\mathbb{E}_{\Tilde{\QQ}}(l(\theta_k,(x,y)))-\mathbb{E}_{\QQ^*}(l(\theta_k,(x,y))) \rvert\leq\delta$\\
  $\bm{g}_t=\left(\mathbb{E}_{\Tilde{\QQ}}(l(\theta_1,(x,y)),\dots,\mathbb{E}_{\Tilde{\QQ}}(l(\theta_L,(x,y))\right)^T$\\
  $\bm{\lambda}_t = \Pi_{\Delta_L}\left(\bm{\lambda}_{t-1}-\eta \bm{g}_t\right)$
  }
 \caption{Oracle Algorithm}
 \label{algo:duchi}
\end{algorithm}
\begin{figure*}[!h]
    \centering
    \includegraphics[width=0.32\textwidth]{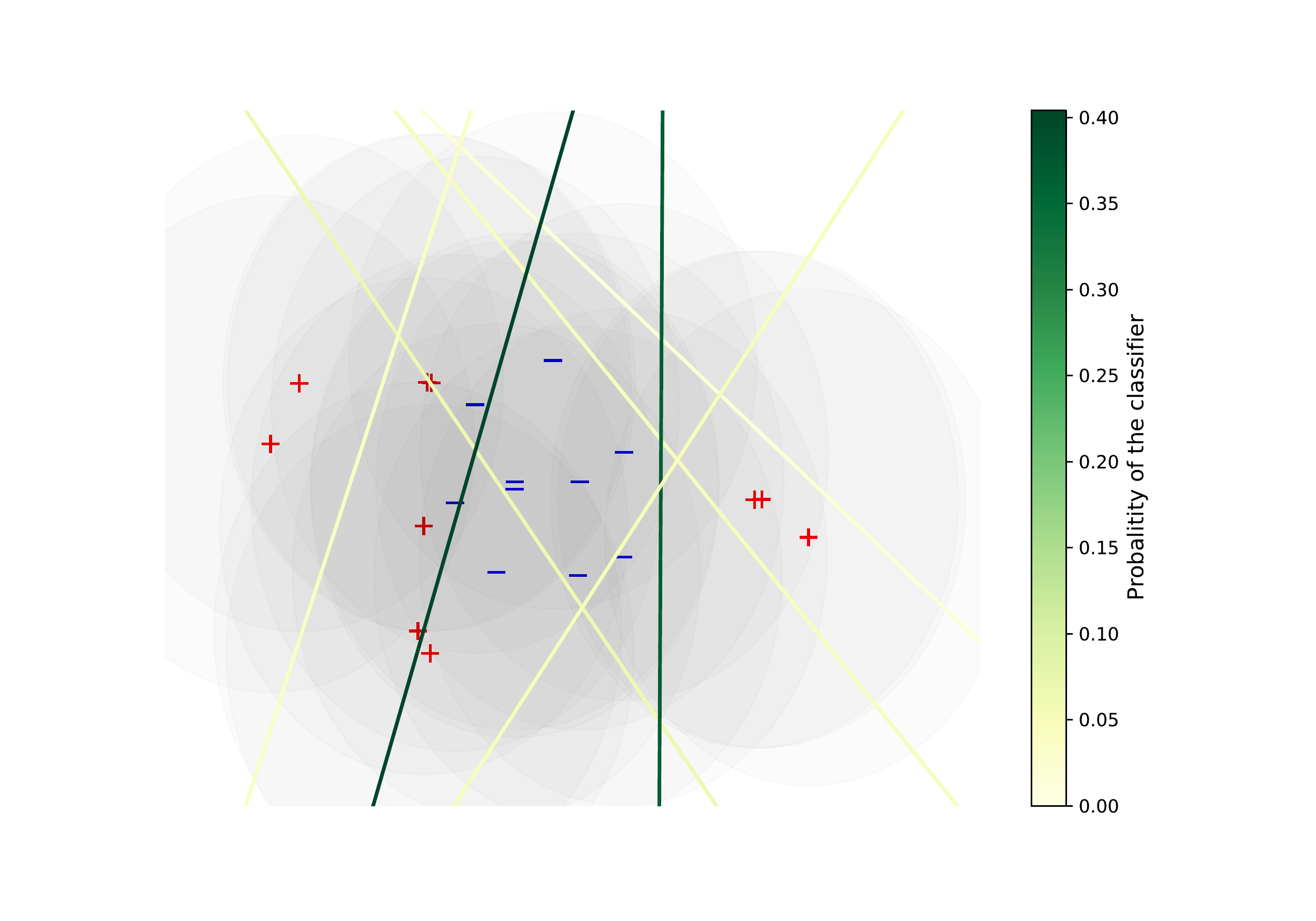}  \includegraphics[width=0.32\textwidth]{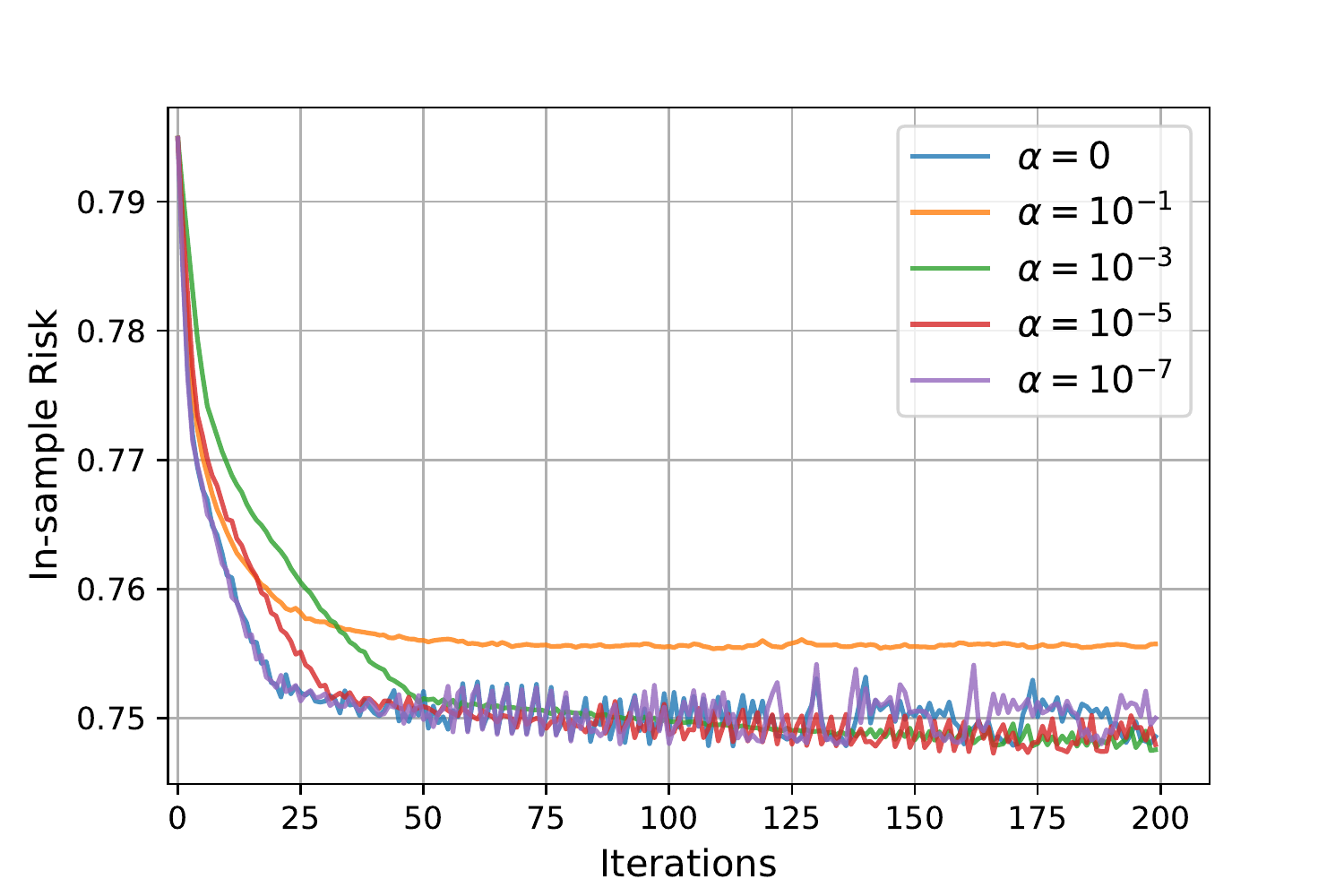}     \includegraphics[width=0.32\textwidth]{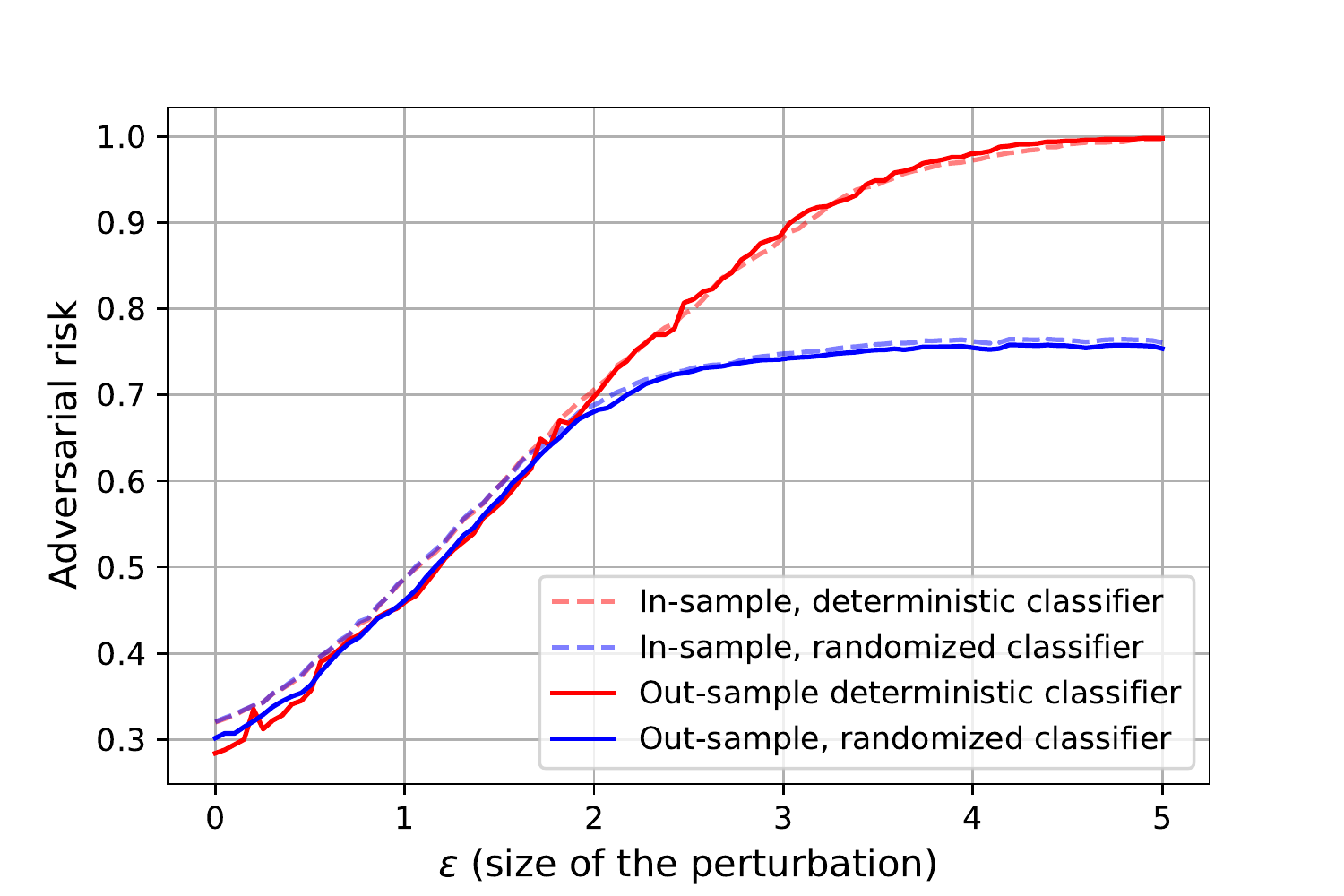}
    \caption{On left, $40$ data samples with their set of possible attacks represented in shadow and the optimal randomized classifier, with a color gradient representing the probability of the classifier. \textcolor{black}{In the middle}, convergence of the oracle ($\alpha=0$) and regularized algorithm for different values of regularization parameters. On right, in-sample and out-sample risk for randomized and deterministic minimum risk in function of the perturbation size $\varepsilon$. In the latter case, the randomized classifier is optimized with oracle Algorithm~\ref{algo:duchi}.}
    \label{fig:toy_example}
\end{figure*}
\textbf{A First Oracle Algorithm.} The first algorithm we present is inspired from~\cite{sinha2017certifying} and the convergence of projected sub-gradient methods~\cite{boyd2003subgradient}. The computation of the inner supremum problem is usually NP-hard~\footnote{See Appendix~\ref{sec:complements} for details.}, but one may assume the existence of an approximate oracle to this supremum. The algorithm is presented in Algorithm~\ref{algo:duchi}. We get the following guarantee for this algorithm. 
\begin{prop}
\label{prop:algo-oracle}
Let $l:\Theta\times(\mathcal{X}\times\mathcal{Y})\to [0,\infty)$ satisfying Assumption~\ref{ass:loss}. Then, Algorithm~\ref{algo:duchi} satisfies:  
\begin{align*}
    \min_{t\in[T]} \widehat{\mathcal{R}}_{adv}^{\varepsilon}(\bm{\lambda}_t)-\widehat{\mathcal{R}}_{adv}^{\varepsilon,*}\leq2\delta+\textcolor{black}{ \frac{2M\sqrt{L}}{\sqrt{T}}}
\end{align*}
\end{prop}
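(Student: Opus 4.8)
The plan is to recognize Algorithm~\ref{algo:duchi} as projected subgradient descent on the convex function $\bm{\lambda}\mapsto\widehat{\mathcal{R}}_{adv}^{\varepsilon}(\bm{\lambda})$ over the simplex $\Delta_L$, with inexact subgradients coming from the $\delta$-approximate oracle, and then to invoke the standard $\mathcal{O}(1/\sqrt{T})$ regret bound while carefully propagating the oracle error. First I would establish that $\bm{g}_t$ is an approximate subgradient: by Proposition~\ref{prop:dro_adv} (applied coordinate-wise to each $\Gamma_{i,\varepsilon}$), if $\QQ^*$ is an exact best response to $\bm{\lambda}_{t-1}$, then the vector $\bm{g}^*_t:=(\mathbb{E}_{\QQ^*}[l(\theta_k,(x,y))])_{k=1}^L$ satisfies $\widehat{\mathcal{R}}_{adv}^{\varepsilon}(\bm{\lambda}_{t-1})=\langle\bm{\lambda}_{t-1},\bm{g}^*_t\rangle$ and, for any $\bm{\lambda}\in\Delta_L$, $\widehat{\mathcal{R}}_{adv}^{\varepsilon}(\bm{\lambda})\geq\langle\bm{\lambda},\bm{g}^*_t\rangle$ — i.e., $\bm{g}^*_t\in\partial\widehat{\mathcal{R}}_{adv}^{\varepsilon}(\bm{\lambda}_{t-1})$. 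The oracle guarantee gives $\lVert\bm{g}_t-\bm{g}^*_t\rVert_\infty\leq\delta$, hence $\lVert\bm{g}_t-\bm{g}^*_t\rVert_2\leq\delta\sqrt{L}$, and since $\langle\bm{\lambda}-\bm{\lambda}',\bm{g}_t-\bm{g}^*_t\rangle$ is bounded in absolute value by $2\delta$ (the $\ell_1$-diameter of $\Delta_L$ is $2$), $\bm{g}_t$ is a $2\delta$-approximate subgradient in the usual sense.

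Next I would run the textbook analysis. Using nonexpansiveness of the Euclidean projection $\Pi_{\Delta_L}$ and the update $\bm{\lambda}_t=\Pi_{\Delta_L}(\bm{\lambda}_{t-1}-\eta\bm{g}_t)$, expand $\lVert\bm{\lambda}_t-\bm{\lambda}^*\rVert_2^2\leq\lVert\bm{\lambda}_{t-1}-\bm{\lambda}^*\rVert_2^2-2\eta\langle\bm{g}_t,\bm{\lambda}_{t-1}-\bm{\lambda}^*\rangle+\eta^2\lVert\bm{g}_t\rVert_2^2$ where $\bm{\lambda}^*$ is a minimizer. Replace $\bm{g}_t$ by $\bm{g}^*_t$ at a cost of $2\delta$ per step, apply the subgradient inequality $\langle\bm{g}^*_t,\bm{\lambda}_{t-1}-\bm{\lambda}^*\rangle\geq\widehat{\mathcal{R}}_{adv}^{\varepsilon}(\bm{\lambda}_{t-1})-\widehat{\mathcal{R}}_{adv}^{\varepsilon,*}$, sum over $t=1,\dots,T$, and telescope. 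The initialization $\bm{\lambda}_0=\mathbf{1}_L/L$ gives $\lVert\bm{\lambda}_0-\bm{\lambda}^*\rVert_2^2\leq 2$ (in fact $\leq 1-1/L$, but $\leq 2$ suffices), and Assumption~\ref{ass:loss} bounds $l$ by $M$, so each coordinate of $\bm{g}_t$ and $\bm{g}^*_t$ lies in $[0,M]$ and $\lVert\bm{g}_t\rVert_2^2\leq LM^2$. This yields
\begin{align*}
\min_{t\in[T]}\widehat{\mathcal{R}}_{adv}^{\varepsilon}(\bm{\lambda}_t)-\widehat{\mathcal{R}}_{adv}^{\varepsilon,*}\leq\frac{1}{T}\sum_{t=1}^T\left(\widehat{\mathcal{R}}_{adv}^{\varepsilon}(\bm{\lambda}_{t-1})-\widehat{\mathcal{R}}_{adv}^{\varepsilon,*}\right)\leq 2\delta+\frac{\lVert\bm{\lambda}_0-\bm{\lambda}^*\rVert_2^2}{2\eta T}+\frac{\eta LM^2}{2},
\end{align*}
and plugging in $\eta=\frac{2}{M\sqrt{LT}}$ balances the last two terms to give the stated bound $2\delta+\frac{2M\sqrt{L}}{\sqrt{T}}$ (indexing: $\widehat{\mathcal{R}}_{adv}^{\varepsilon}(\bm{\lambda}_{t-1})$ for $t=1,\dots,T$ is $\widehat{\mathcal{R}}_{adv}^{\varepsilon}(\bm{\lambda}_t)$ for $t=0,\dots,T-1$, which only strengthens the min-over-$[T]$ claim after a harmless re-indexing, or one simply notes the min over the iterates appearing in the sum).

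The main obstacle — really the only non-boilerplate point — is the first step: rigorously justifying that $\bm{g}^*_t$ is a genuine subgradient of $\widehat{\mathcal{R}}_{adv}^{\varepsilon}$ at $\bm{\lambda}_{t-1}$, which requires that the best response $\QQ^*$ for $\bm{\lambda}_{t-1}$ remains feasible (and gives a lower bound on the objective) for every other $\bm{\lambda}$; this is exactly Danskin-type reasoning and follows because the constraint sets $\Gamma_{i,\varepsilon}$ do not depend on $\bm{\lambda}$ and the objective is linear in $\bm{\lambda}$, so $\widehat{\mathcal{R}}_{adv}^{\varepsilon}(\bm{\lambda})=\sum_i\sup_{\QQ_i\in\Gamma_{i,\varepsilon}}\langle\bm{\lambda},(\mathbb{E}_{\QQ_i}l(\theta_k,\cdot))_k\rangle$ is a supremum of linear functions and hence convex, with the attaining $\QQ^*$ furnishing the supporting hyperplane. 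A secondary bookkeeping point is tracking that the oracle error enters additively as $2\delta$ (not $\delta$ or $\delta\sqrt L$) by using the $\ell_1$-diameter bound; once these are in place the rest is the standard projected-subgradient computation.
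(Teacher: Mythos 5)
Your proposal is correct and follows essentially the same route as the paper's proof: identify $\bm{g}^*_t$ as a subgradient of the convex function $\bm{\lambda}\mapsto\widehat{\mathcal{R}}_{adv}^{\varepsilon}(\bm{\lambda})$ via the Danskin/supremum-of-linear-functions argument, bound the oracle error by $2\delta$ through the $\ell_\infty$--$\ell_1$ pairing on the simplex, and run the standard projected-subgradient telescoping with $\lVert\bm{g}_t\rVert_2^2\leq LM^2$ and $\eta=\frac{2}{M\sqrt{LT}}$. Your bookkeeping of the constants and of the $\bm{\lambda}_{t-1}$ versus $\bm{\lambda}_t$ indexing is in fact slightly more careful than the paper's, which glosses over the same re-indexing.
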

The main drawback of the above algorithm is that one needs to have access to an oracle to guarantee the convergence of the proposed algorithm. In the following we present its regularized version in order to approximate the solution and propose a simple algorithm to solve it.

\textbf{An Entropic Relaxation.} Adding an entropic term to the objective allows to have a simple reformulation of the problem, as follows:
\begin{align*}
  \inf_{\bm{\lambda}\in \Delta_L}\sum_{i=1}^N  \frac{\varepsilon_i}{N}\log\left( \frac{1}{m_i}\sum_{j=1}^{m_i}\exp\left(\frac{\sum_{k=1}^L \lambda_kl(\theta_k,u_j^{(i)})}{\varepsilon_i}\right)\right)
\end{align*}
Note that in $\bm{\lambda}$, the objective is convex and smooth. One can  apply the accelerated PGD~\citep{beck2009fast,tseng2008accelerated} which enjoys an optimal convergence rate for first order methods of $\mathcal{O}(T^{-2})$ for $T$ iterations. 

\section{Experiments}

\subsection{Synthetic Dataset}

To illustrate our theoretical findings, we start by testing our learning algorithm on the following synthetic two-dimensional problem. Let us consider the distribution $\PP$ defined as  $\PP\left(Y =\pm 1\right)=1/2$, $\PP\left(X\mid Y=-1\right) = \mathcal{N}(0,I_2)$ and $\PP\left(X \mid Y=1\right) = \frac12\left[\mathcal{N}((-3,0),I_2)+\mathcal{N}((3,0),I_2) \right]$.
We sample $1000$ training points from this distribution and randomly generate $10$ linear classifiers that achieves a standard training risk lower than $0.4$. To simulate an adversary with budget $\varepsilon$ in $\ell_2$ norm, we proceed as follows. For every sample $(x,y)\sim \PP$ we generate $1000$ points uniformly at random in the ball of radius $\varepsilon$ and select the one maximizing the risk for the $0/1$ loss. Figure~\ref{fig:toy_example} (left) illustrates the type of mixture we get after convergence of our algorithms. Note that in this toy problem, we are likely to find the optimal adversary with this sampling strategy if we sample enough attack points. 

To evaluate the convergence of our algorithms, we compute the adversarial risk of our mixture for each iteration of both the oracle and regularized algorithms. Figure~\ref{fig:toy_example} illustrates the convergence of the algorithms w.r.t the regularization parameter. We observe that the risk for both algorithms converge. Moreover, they converge towards the oracle minimizer when the regularization parameter $\alpha$ goes to $0$.

Finally, to demonstrate the improvement randomized techniques offer against deterministic defenses, we plot in Figure~\ref{fig:toy_example} (right) the minimum adversarial risk for both randomized and deterministic classifiers w.r.t. $\varepsilon$. The adversarial risk is strictly better for randomized classifier whenever the adversarial budget $\varepsilon$ is bigger than $2$. This illustration validates our analysis of Theorem~\ref{thm:duality-rand}, and motivates a in depth study of a more challenging framework, namely image classification with neural networks.


\subsection{CIFAR-10 Dataset}

\begin{figure*}[!ht]
\begin{center}

\vskip 0.15in
 \begin{minipage}[ht!]{0.39\textwidth}
 \begin{scriptsize}
\begin{tabular}{c|c|ccc} 
\textbf{ Models} & \textbf{Acc. }&\textbf{$\textrm{APGD}_\textrm{CE}$}& \textbf{$\textrm{APGD}_\textrm{DLR}$} & \textbf{Rob. Acc.} \\ \hline
 1 & $81.9\%$ &	$47.6\%$ & $47.7\%$ & $45.6\%$ \\ 
 2 & $81.9\%$ & $49.0\%$ & ${49.6\%}$ & ${47.0\%}$\\ 
  3 & ${81.7\%}$& ${49.0\%}$ & $49.3\%$ & ${46.9\%}$\\
    4 & $\bm{82.6\%}$& $\bm{49.7\%}$ & $\bm{49.8}\%$ & $\bm{47.2\%}$\\

\end{tabular}
\end{scriptsize}
  \end{minipage}\begin{minipage}[!ht]{0.61\textwidth}
\includegraphics[width=0.49\textwidth]{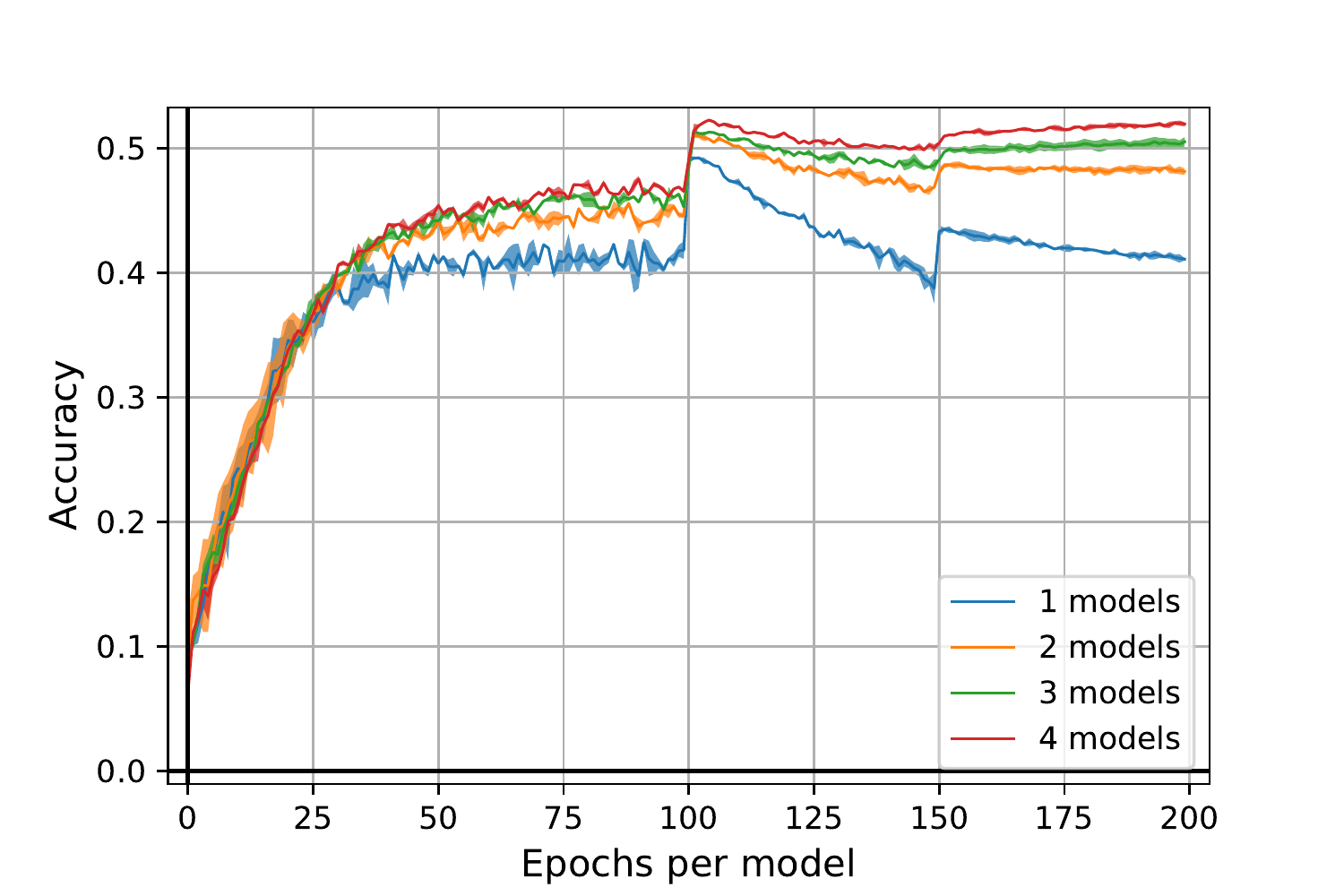}\includegraphics[width=0.49\textwidth]{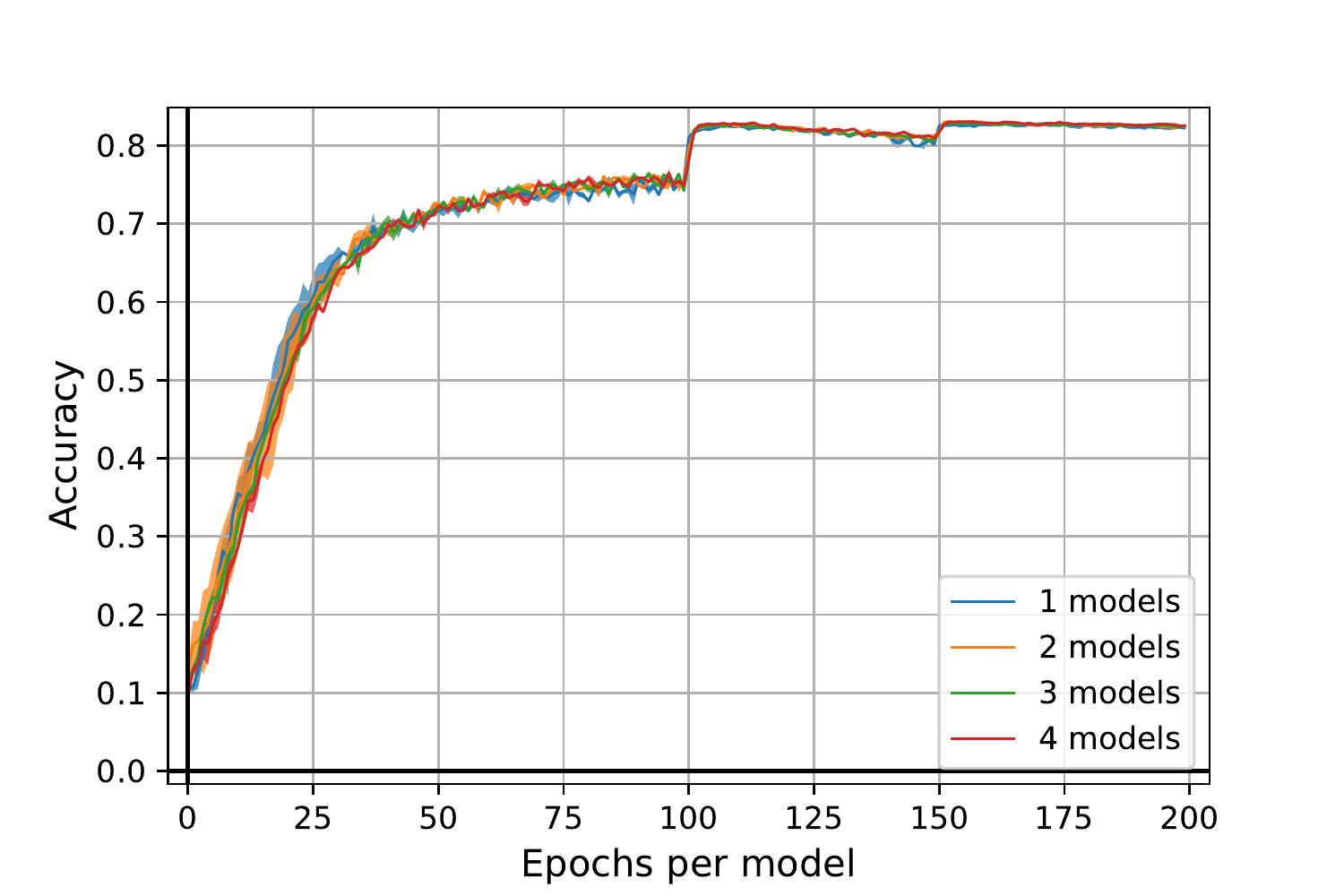} 
  \end{minipage}
  
\caption{\textcolor{black}{On left}: Comparison of our algorithm with a standard adversarial training (one model). We reported the results for the model with the best robust accuracy obtained over two independent runs because adversarial training might be unstable. Standard and Robust accuracy (respectively in the middle and on right) on CIFAR-10 test images in function of the number of epochs per classifier with $1$ to $3$ ResNet18 models. The performed attack is PGD with $20$ iterations and $\varepsilon=8/255$.}
\label{fig:results_cifar}
\end{center}
\end{figure*}

Adversarial examples are known to be easily transferrable from one model to another~\cite{tramer2017space}. To counter this and support our theoretical claims, we propose an heuristic algorithm (see Algorithm~\ref{algo:heuristic}) to train a robust mixture of $L$ classifiers. We alternatively train these classifiers with adversarial examples against the current mixture and update the probabilities of the mixture according to the algorithms we proposed in Section~\ref{sec:proposed-algo}. More details on the heuristic algorithm are available in Appendix~\ref{sec:additional-xp}. 
\begin{algorithm}[h!]
\small
\SetAlgoLined
$L$: number of models, $T$: number of iterations,\\
$T_\theta$: number of updates for the models $\bm{\theta}$,\\
$T_\lambda$: number of updates for the mixture $\bm{\lambda}$,\\ $\bm{\lambda}_0=(\lambda_0^1,\dots\lambda_0^L),~\bm{\theta}_0=(\theta_0^1,\dots\theta_0^L)$\\
 \For{$t=1,\dots,T$}{
 Let $B_t$ be a batch of data.\\
\eIf{$t \mod (T_\theta L+1)\neq 0$}{
$k$ sampled uniformly in $\{1,\dots,L\}$\\
$\Tilde{B}_t\leftarrow$ Attack of images in $B_t$ for the  model $(\bm{\lambda}_t,\bm{\theta}_t)$\\
$\theta^t_k\leftarrow$ Update $\theta^{t-1}_k$ with $\Tilde{B}_t$ for fixed $\bm{\lambda}_t$ with a SGD step}{
$\bm{\lambda}_t\leftarrow$Update $\bm{\lambda}_{t-1}$ on $B_t$ for fixed $\bm{\theta}_t$
with oracle or regularized algorithm with $T_\lambda$ iterations.
}
  }
 \caption{Adversarial Training for Mixtures}
 
 \label{algo:heuristic}
\end{algorithm}

\textbf{Experimental Setup.} To evaluate the performance of Algorithm~\ref{algo:heuristic}, we trained from $1$ to $4$ ResNet18~\citep{He_2016_CVPR} models on $200$ epochs per model\footnote{$L\times200$ epochs in total, where $L$ is the number of models.}. We study the robustness with regards to $\ell_\infty$ norm and fixed adversarial budget $\varepsilon=8/255$. The attack we used in the inner maximization of the training is an adapted (adaptative) version of PGD for mixtures of classifiers with $10$ steps. Note that for one single model, Algorithm~\ref{algo:heuristic} exactly corresponds to adversarial training~\cite{madry2017towards}. For each of our setups, we made two independent runs and select the  best one. The training time of our algorithm is around four times longer than a standard Adversarial Training (with PGD 10 iter.) with two models, eight times with three models and twelve times with four models. We trained our models with a batch of size  $1024$ on $8$ Nvidia V100 GPUs. We give more details on implementation in Appendix~\ref{sec:additional-xp}. 

\textbf{Evaluation Protocol.} At each epoch, we evaluate the current mixture on test data against PGD attack  with $20$ iterations. To select our model and avoid overfitting~\cite{rice2020overfitting}, we kept the most robust against this PGD attack.
To make a final evaluation of our mixture of models, we used an adapted version of $\textrm{AutoPGD}$ untargeted attacks~\citep{croce2020reliable} for randomized classifiers with both Cross-Entropy (CE) and Difference of Logits Ratio (DLR) loss. For both attacks, we made $100$ iterations and $5$ restarts.

\textbf{Results.} The results are presented in Figure~\ref{fig:results_cifar}. We remark our algorithm outperforms a standard adversarial training in all the cases by more $1\%$, without additional loss of standard accuracy as it is attested by the left figure. Moreover, it seems our algorithm, by adding more and more models, reduces the overfitting of adversarial training. So far, experiments are computationally very costful and it is difficult to raise precise conclusions. Further, hyperparameter tuning ~\cite{gowal2020uncovering} such as architecture, unlabeled data~\cite{carmon2019unlabeled}, activation function, or the use of TRADES~\cite{zhang2019theoretically} may still increase the results.

\section{Related Work and Discussions}
\label{sec:rw}

\textbf{Distributionally Robust Optimization.}
Several recent works~\cite{sinha2017certifying,NIPS2018_7534,tu2018theoretical} studied the problem of adversarial examples through the scope of distributionally robust optimization. In these frameworks, the set of adversarial distributions is defined using an $\ell_p$ Wasserstein ball (the adversary is allowed to have an \emph{average} perturbation of at most $\varepsilon$ in $\ell_p$ norm). This however does not match the usual adversarial attack problem, where the adversary cannot move any point by more than $\varepsilon$. In the present work, we introduce a cost  function allowing us to cast the adversarial example problem as a DRO one, without changing the adversary constraints.

\textbf{Optimal Transport (OT).} 
\cite{NIPS2019_8968} and ~\cite{pydi2019adversarial} investigated classifier-agnostic lower bounds on the adversarial risk of any deterministic classifier using OT. These works  only evaluate lower bounds on the primal deterministic formulation of the problem, while we study the existence of mixed Nash equilibria.  Note that~\cite{pydi2019adversarial} started to investigate a way to formalize the adversary using Markov kernels, but did not investigate the impact of randomized strategies on the game. We extended this work by rigorously reformulating the adversarial risk as a linear optimization problem over distributions and we study this problem from a game theoretic point of view.


\textbf{Game Theory.} Adversarial examples have been studied under  the notions of Stackelberg game in~\cite{10.1145/2020408.2020495}, and  zero-sum game in~\cite{7533509,DBLP:journals/corr/abs-1906-02816,bose2021adversarial}. These works considered restricted settings (convex loss, parametric adversaries, etc.) that do not  comply with the nature of the problem. Indeed, we prove in Appendix~\ref{prv:duality-rand} that when the loss is convex and the set $\Theta$ is convex, the duality gap is zero for deterministic classifiers.
However, it has been proven that no convex loss can be a good surrogate for the $0/1$ loss in the adversarial setting~\citep{pmlr-v125-bao20a,pmlr-v97-cranko19a}, narrowing the scope of this result. If one can  show that for sufficiently separated conditional distributions, an optimal deterministic classifier always exists (see Appendix~\ref{sec:complements} for a clear statement), necessary and sufficient conditions for the need of randomization are still to be established.  ~\cite{pinot2020randomization} studied partly this question for regularized deterministic adversaries, leaving the general setting of  randomized adversaries and  mixed equilibria unanswered, which is the very scope of this paper.

\newpage
\bibliography{example_paper}
\bibliographystyle{abbrv}
\newpage
\appendix
\onecolumn
\section*{Supplementary material}

\section{Notations}
Let $(\mathcal{Z},d)$ be a Polish metric space (i.e. complete and separable). We say that $(\mathcal{Z},d)$ is proper if for all $z_0\in\mathcal{Z}$ and $R>0$, $B(z_0,R):=\{z\mid d(z,z_0)\leq R\}$ is compact. For $(\mathcal{Z},d)$ a Polish space, we denote $\mathcal{M}_+^1(\mathcal{Z})$ the set of Borel probability measures on $\mathcal{Z}$ endowed with $\lVert\cdot\rVert_{TV}$  strong topology. We recall the notion of weak topology: we say that a sequence $(\mu_n)_n$ of $\mathcal{M}_+^1(\mathcal{Z})$ converges weakly to $\mu\in\mathcal{M}_+^1(\mathcal{Z})$ if and only if for every continuous function $f$ on $\mathcal{Z}$, $\int fd\mu_n\to_{n\to\infty}\int f d\mu$. Endowed with its weak topology, $\mathcal{M}_+^1(\mathcal{Z})$ is a Polish space. For $\mu\in\mathcal{M}_+^1(\mathcal{Z})$, we define $L^1(\mu)$ the set of integrable functions with respect to $\mu$. We denote $\Pi_1:(z,z')\in\mathcal{Z}^2\mapsto z$ and $\Pi_2:(z,z')\in\mathcal{Z}^2\mapsto z'$ respectively the projections on the first and second component, which are continuous applications. For a measure $\mu$ and a measurable mapping $\mu$, we denote $g_\sharp\mu$ the pushforward measure of $\mu$ by $g$. Let $L\geq 1$ be an integer and denote $\Delta_L: = \{\lambda\in\mathbb{R}_+^L~\mathrm{s.t.}~\sum_{k=1}^L\lambda_k=1\}$, the probability simplex of $\mathbb{R}^L$. 
\section{Useful Lemmas}

\begin{lemma}[Fubini's theorem]
\label{lem:fubini}
Let $l:\Theta\times(\mathcal{X}\times\mathcal{Y})\rightarrow [0,\infty)$ satisfying Assumption~\ref{ass:loss}. Then for all $\mu\in\mathcal{M}^1_+(\Theta)$, $\int l(\theta,\cdot)d\mu(\theta)$ is Borel measurable; for  $\QQ\in\mathcal{M}^1_+(\mathcal{X}\times\mathcal{Y})$, $\int l(\cdot,(x,y))d\QQ(x,y)$ is Borel measurable. Moreover: $\int l(\theta,(x,y))d\mu(\theta)d\QQ(x,y)=\int l(\theta,(x,y))d\QQ(x,y)d\mu(\theta)$
\end{lemma}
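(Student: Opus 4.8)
The plan is to reduce the three claims to the classical Tonelli theorem, the only genuine work being a measurability identification that justifies applying it. Throughout I would keep in mind that both $\Theta$ and $\mathcal{X}\times\mathcal{Y}$ are Polish (the latter even proper), hence second-countable, and that $\mu$ and $\QQ$ are probability measures, hence finite and in particular $\sigma$-finite.

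First I would establish that the joint Borel measurability postulated in Assumption~\ref{ass:loss}(1) is in fact measurability with respect to the \emph{product} $\sigma$-algebra. For second-countable topological spaces $\mathcal{Z}_1,\mathcal{Z}_2$ one has the identity $\mathcal{B}(\mathcal{Z}_1\times\mathcal{Z}_2)=\mathcal{B}(\mathcal{Z}_1)\otimes\mathcal{B}(\mathcal{Z}_2)$: the inclusion $\supseteq$ is automatic since the projections are continuous, while $\subseteq$ follows because a countable base of the product topology is obtained from products of basic open sets, each of which lies in the product $\sigma$-algebra, so every open set of the product is a countable union of such rectangles. Applying this with $\mathcal{Z}_1=\Theta$ and $\mathcal{Z}_2=\mathcal{X}\times\mathcal{Y}$, the function $l$, being Borel on $\Theta\times(\mathcal{X}\times\mathcal{Y})$, is measurable for $\mathcal{B}(\Theta)\otimes\mathcal{B}(\mathcal{X}\times\mathcal{Y})$.

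Next I would invoke Tonelli's theorem for the finite measures $\mu\in\mathcal{M}^1_+(\Theta)$ and $\QQ\in\mathcal{M}^1_+(\mathcal{X}\times\mathcal{Y})$. Since $l$ is nonnegative by Assumption~\ref{ass:loss}(1) and product-measurable by the previous step, Tonelli guarantees simultaneously that the partial integral $(x,y)\mapsto\int l(\theta,(x,y))\,d\mu(\theta)$ is $\mathcal{B}(\mathcal{X}\times\mathcal{Y})$-measurable, that $\theta\mapsto\int l(\theta,(x,y))\,d\QQ(x,y)$ is $\mathcal{B}(\Theta)$-measurable, and that both iterated integrals coincide with $\int l\,d(\mu\otimes\QQ)$, which is exactly the announced equality. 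Equivalently, since $0\le l\le M$ by Assumption~\ref{ass:loss}(3), $l$ is $\mu\otimes\QQ$-integrable and one may appeal instead to the Fubini theorem for integrable functions.

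The argument is almost entirely bookkeeping, and the one point that deserves care --- indeed the only place the Polish hypothesis is genuinely used --- is the identification $\mathcal{B}(\Theta\times(\mathcal{X}\times\mathcal{Y}))=\mathcal{B}(\Theta)\otimes\mathcal{B}(\mathcal{X}\times\mathcal{Y})$. Without second-countability this identity can fail, and joint Borel measurability would not transfer to the product $\sigma$-algebra on which Tonelli operates, so this is where I would be most careful. By contrast, the boundedness and upper-semicontinuity clauses of Assumption~\ref{ass:loss} are inessential to this lemma, boundedness serving only to license the integrable-function form of Fubini as an alternative to Tonelli.
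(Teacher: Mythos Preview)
Your argument is correct. The key observation that $\mathcal{B}(\Theta\times(\mathcal{X}\times\mathcal{Y}))=\mathcal{B}(\Theta)\otimes\mathcal{B}(\mathcal{X}\times\mathcal{Y})$ for Polish (hence second-countable) spaces is exactly what is needed to transfer the Borel measurability of $l$ in Assumption~\ref{ass:loss}(1) into product-measurability, after which Tonelli gives all three conclusions at once.

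As for comparison with the paper: the paper does not actually supply a proof of this lemma --- it is merely stated in the appendix as a standing fact, presumably regarded as standard. Your write-up therefore goes beyond what the paper provides and correctly isolates the one nontrivial point (the identification of the Borel $\sigma$-algebra on the product with the product $\sigma$-algebra), while also noting which clauses of Assumption~\ref{ass:loss} are actually used. Nothing to fix.
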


\begin{lemma}
\label{lem:usc1}
Let $l:\Theta\times(\mathcal{X}\times\mathcal{Y})\rightarrow [0,\infty)$ satisfying Assumption~\ref{ass:loss}.
Then for all $\mu\in\mathcal{M}^1_+(\Theta)$, $(x,y)\mapsto\int l(\theta,(x,y))d\mu(\theta)$ is upper semi-continuous and hence Borel measurable.  
\end{lemma}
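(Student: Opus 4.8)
The plan is to establish sequential upper semi-continuity of the map $\phi_\mu:(x,y)\mapsto\int l(\theta,(x,y))\,d\mu(\theta)$ directly via the reverse Fatou lemma, and then to deduce Borel measurability from the fact that a real-valued upper semi-continuous function has closed superlevel sets (measurability is in any case already granted by Lemma~\ref{lem:fubini}). First I would note that $\phi_\mu$ is well defined with values in $[0,M]$: by the Borel measurability of $l$ in Assumption~\ref{ass:loss}, the section $\theta\mapsto l(\theta,(x,y))$ is Borel for each $(x,y)$, and by the uniform bound $l\le M$ in Assumption~\ref{ass:loss} it is dominated by the constant $M$, which is $\mu$-integrable since $\mu$ is a probability measure.

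For upper semi-continuity, since $\mathcal{X}\times\mathcal{Y}$ is a metric space it is enough to test along sequences. Fix $(x,y)$ and let $(x_n,y_n)\to(x,y)$. For every fixed $\theta$, the upper semi-continuity of $l(\theta,\cdot)$ gives $\limsup_n l(\theta,(x_n,y_n))\le l(\theta,(x,y))$, and $\theta\mapsto\limsup_n l(\theta,(x_n,y_n))$ is Borel as a countable $\limsup$ of Borel maps. Applying the reverse Fatou lemma to the sequence $\theta\mapsto l(\theta,(x_n,y_n))$, which is dominated by $M\in L^1(\mu)$, I obtain
\[
\limsup_{n}\phi_\mu(x_n,y_n)\;=\;\limsup_{n}\int l(\theta,(x_n,y_n))\,d\mu(\theta)\;\le\;\int\limsup_{n}l(\theta,(x_n,y_n))\,d\mu(\theta)\;\le\;\phi_\mu(x,y).
\]
Hence $\phi_\mu$ is sequentially, and therefore (by metrizability) topologically, upper semi-continuous; in particular $\{\phi_\mu\ge a\}$ is closed for all $a\in\mathbb{R}$, so $\phi_\mu$ is Borel measurable.

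I do not anticipate any genuine obstacle here. The only two points needing a moment's attention are that upper semi-continuity on $\mathcal{X}\times\mathcal{Y}$ may legitimately be checked sequentially (because the space is metric), and that the domination required to apply the reverse Fatou lemma is furnished precisely by the uniform bound $l\le M$ of Assumption~\ref{ass:loss} together with $\mu\in\mathcal{M}^1_+(\Theta)$ being a probability measure — it is this boundedness, not any property of $\Theta$, that permits the interchange of $\limsup$ and $\int$. An alternative route would write each $l(\theta,\cdot)$ as a pointwise decreasing limit of Lipschitz functions and push that limit through the integral by monotone convergence, but it requires an extra approximation lemma and a joint-measurability verification, so the reverse Fatou argument above is the cleaner path.
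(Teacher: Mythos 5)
Your proof is correct and follows essentially the same route as the paper: the paper applies Fatou's lemma to the nonnegative lower semi-continuous functions $M-l(\theta,\cdot)$ along a convergent sequence, which is exactly your reverse-Fatou argument with dominating constant $M$. The additional remarks on sequential testing in a metric space and on measurability are fine but do not change the substance.
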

\begin{proof}
Let $(x_n,y_n)_n$ be a sequence of $\mathcal{X}\times\mathcal{Y}$ converging to $(x,y)\in\mathcal{X}\times\mathcal{Y}$.  For all $\theta\in\Theta$, $M-l(\theta,\cdot)$ is non negative and lower semi-continuous. Then by Fatou's Lemma applied:
\begin{align*}
   \int M-l(\theta,(x,y))d\mu(\theta)&\leq\int \liminf_{n\to\infty}  M-l(\theta,(x_n,y_n))d\mu(\theta)\\
   &\leq  \liminf_{n\to\infty}  \int M-l(\theta,(x_n,y_n))d\mu(\theta) 
\end{align*}

Then we deduce that: $\int M- l(\theta,\cdot)d\mu(\theta)$ is lower semi-continuous and then $\int l(\theta,\cdot)d\mu(\theta)$ is upper-semi continuous.
\end{proof}

\begin{lemma}
\label{lem:usc2}

Let $l:\Theta\times(\mathcal{X}\times\mathcal{Y})\rightarrow [0,\infty)$ satisfying Assumption~\ref{ass:loss}
Then for all $\mu\in\mathcal{M}^1_+(\Theta)$, $\QQ\mapsto\int l(\theta,(x,y))d\mu(\theta)d\QQ(x,y)$ is upper semi-continuous for weak topology of measures. 
\end{lemma}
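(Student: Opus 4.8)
The plan is to reduce the statement to the standard fact that integration against a bounded upper semi-continuous function is upper semi-continuous for the weak topology, and to prove that fact by approximating the integrand from above by bounded continuous functions.

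First I would set $g(x,y) := \int_\Theta l(\theta,(x,y))\,d\mu(\theta)$. By Lemma~\ref{lem:fubini} this is a well-defined Borel measurable function; by Assumption~\ref{ass:loss}.3 it satisfies $0\le g\le M$; and by Lemma~\ref{lem:usc1} it is upper semi-continuous on the metric space $(\mathcal{X}\times\mathcal{Y},d\oplus d')$. Using Lemma~\ref{lem:fubini} once more, the quantity in the statement is exactly the map $\QQ\mapsto\int_{\mathcal{X}\times\mathcal{Y}} g\,d\QQ$, so it suffices to show that this map is upper semi-continuous for the weak topology of $\mathcal{M}^1_+(\mathcal{X}\times\mathcal{Y})$.

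Next I would approximate $g$ from above. Writing $\rho:=d\oplus d'$ for the metric on $\mathcal{X}\times\mathcal{Y}$, I consider the inf-convolution (Moreau--Yosida-type) regularizations
\[
g_n(z):=\sup_{z'\in\mathcal{X}\times\mathcal{Y}}\bigl(g(z')-n\,\rho(z,z')\bigr),\qquad n\ge 1 .
\]
Each $g_n$ is $n$-Lipschitz (a supremum of $n$-Lipschitz functions), hence continuous, and $0\le g\le g_n\le M$ since taking $z'=z$ gives $g_n\ge g$ and dropping the penalty gives $g_n\le\sup g\le M$. The sequence $(g_n)_n$ is nonincreasing, and it converges pointwise to $g$ precisely because $g$ is upper semi-continuous: fixing $z$ and $\eta>0$, upper semi-continuity yields $\delta>0$ with $g(z')<g(z)+\eta$ for $\rho(z,z')<\delta$, while for $z'$ outside that ball the penalty $n\rho(z,z')\ge n\delta$ eventually exceeds $M$, so $\limsup_n g_n(z)\le g(z)+\eta$; combined with $g_n\ge g$ this gives $g_n(z)\downarrow g(z)$.

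Then I would conclude. For each fixed $n$, $g_n$ is bounded and continuous, so $\QQ\mapsto\int g_n\,d\QQ$ is weakly continuous by the very definition of the weak topology. For every $\QQ\in\mathcal{M}^1_+(\mathcal{X}\times\mathcal{Y})$, the monotone (equivalently dominated, with uniform bound $M$) convergence theorem applied to $g_n\downarrow g$ gives $\int g\,d\QQ=\inf_n\int g_n\,d\QQ$. An infimum of a family of continuous functions is upper semi-continuous, so $\QQ\mapsto\int g\,d\QQ$ is upper semi-continuous for the weak topology, as claimed. The only genuinely delicate point — the main obstacle — is the approximation step: one must verify that the bounded upper semi-continuous function $g$ on this particular metric space is a decreasing pointwise limit of bounded continuous functions, and in particular that the inf-convolutions stay bounded and converge monotonically to $g$ itself rather than to some larger envelope. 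Everything else (the Fubini identification, weak continuity of $\QQ\mapsto\int g_n\,d\QQ$, and pulling the infimum outside the integral) is routine.
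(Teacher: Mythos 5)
Your proof is correct and takes essentially the same route as the paper's: both reduce the claim to approximating the bounded semi-continuous integrand by a monotone sequence of bounded continuous functions and then exchange the limit with the integral (monotone/dominated convergence), concluding that an infimum (resp.\ supremum) of weakly continuous maps is upper (resp.\ lower) semi-continuous. The only cosmetic difference is that the paper works with the lower semi-continuous function $M-\int l(\theta,\cdot)\,d\mu(\theta)$ and merely asserts the existence of the nondecreasing continuous approximants, whereas you work with the u.s.c.\ integrand directly and explicitly construct the decreasing approximation via Lipschitz envelopes, which fills in the step the paper leaves implicit.
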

\begin{proof}
 $-\int l(\theta,\cdot)d\mu(\theta) $ is lower semi-continuous from Lemma~\ref{lem:usc1}. Then $M-\int l(\theta,\cdot)d\mu(\theta) $ is lower semi-continuous and non negative. Let denote $v$ this function. Let $(v_n)_n$ be a non-decreasing sequence of continuous bounded functions such that $v_n\to v$. Let $(\QQ_k)_k$ converging weakly towards $\QQ$. Then by monotone convergence:
 
 \begin{align*}
     \int vd\QQ = \lim_n \int v_nd\QQ =\lim_n \lim_k\int v_nd\QQ_k\leq \liminf_k \int vd\QQ_k
 \end{align*}
 Then $\QQ\mapsto\int vd\QQ$ is lower semi-continuous and then $\QQ\mapsto\int l(\theta,(x,y))d\mu(\theta)d\QQ(x,y)$ is upper semi-continuous for weak topology of measures. 
 \end{proof}

\begin{lemma}
\label{lem:measure-sup}
Let $l:\Theta\times(\mathcal{X}\times\mathcal{Y})\rightarrow [0,\infty)$ satisfying Assumption~\ref{ass:loss}.
Then for all $\mu\in\mathcal{M}^1_+(\Theta)$, $(x,y)\mapsto \sup_{(x',y'),d(x,x')\leq\varepsilon,y=y'} \int l(\theta,(x',y'))d\mu(\theta)$ is universally measurable (i.e. measurable for all Borel probability measures). And hence the adversarial risk is well defined. 
\end{lemma}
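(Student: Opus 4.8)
The plan is to deduce the statement from the (classical) fact that the supremum of an upper semi-continuous function over a compact-valued, upper hemicontinuous correspondence is again upper semi-continuous; the only substantive ingredient is a compactness argument, and this is precisely where properness of $(\mathcal{X},d)$ enters. First I would fix $\mu\in\mathcal{M}^1_+(\Theta)$ and set $g(x,y):=\int_\Theta l(\theta,(x,y))\,d\mu(\theta)$ (taking $\mu=\delta_\theta$ recovers the deterministic case $g=l(\theta,\cdot)$, which is u.s.c. by Assumption~\ref{ass:loss}~2)). By Lemma~\ref{lem:usc1}, $g$ is upper semi-continuous on $\mathcal{X}\times\mathcal{Y}$, and $0\leq g\leq M$ by Assumption~\ref{ass:loss}~3). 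Since $\mathcal{Y}$ is finite with the discrete metric, the constraint $y'=y$ merely restricts the supremum to the slice $\mathcal{X}\times\{y\}$, so it suffices to show that for each fixed $y$ the function $h_y(x):=\sup_{x'\in \bar B(x,\varepsilon)} g(x',y)$ is measurable, where $\bar B(x,\varepsilon):=\{x'\in\mathcal{X}\mid d(x,x')\leq\varepsilon\}$. I would actually establish the stronger claim that $h_y$ is upper semi-continuous, which gives Borel — a fortiori universal — measurability for free.

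The geometric step I would isolate is that the correspondence $x\mapsto \bar B(x,\varepsilon)$ is compact-valued and upper hemicontinuous. Compactness of each $\bar B(x,\varepsilon)$ is exactly the properness hypothesis. For upper hemicontinuity I would use the closed-graph-plus-locally-bounded criterion: if $x_n\to x$ and $x'_n\in \bar B(x_n,\varepsilon)$, then for $n$ large $d(x,x_n)\leq 1$, hence $x'_n\in \bar B(x,\varepsilon+1)$, a single compact set; extracting a subsequence $x'_{n_k}\to x'$ and using continuity of $d$ gives $d(x,x')=\lim_k d(x_{n_k},x'_{n_k})\leq\varepsilon$, so $x'\in \bar B(x,\varepsilon)$.

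Combining the two: fix $x_n\to x$. Since each $\bar B(x_n,\varepsilon)$ is nonempty and compact and $g(\cdot,y)$ is u.s.c., the supremum defining $h_y(x_n)$ is attained at some $x'_n\in \bar B(x_n,\varepsilon)$. Pass to a subsequence realizing $\limsup_n h_y(x_n)$, then to a further subsequence with $x'_{n_k}\to x'\in \bar B(x,\varepsilon)$ as above; upper semi-continuity of $g(\cdot,y)$ then yields
\[
\limsup_{n\to\infty} h_y(x_n)=\lim_{k\to\infty} g(x'_{n_k},y)\leq g(x',y)\leq h_y(x),
\]
so $h_y$ is u.s.c., hence Borel, hence universally measurable; patching over the finitely many labels gives universal measurability of $(x,y)\mapsto\sup_{d(x,x')\leq\varepsilon}g(x',y)$. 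As this map is bounded by $M$ and $\PP$ is a Borel probability measure, $\riskadv^\varepsilon(\mu)=\int \sup_{d(x,x')\leq\varepsilon} g(x',y)\,d\PP(x,y)$ is a well-defined element of $[0,M]$.

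The main (if modest) obstacle is the upper-hemicontinuity claim for $x\mapsto \bar B(x,\varepsilon)$: this is where properness of $(\mathcal{X},d)$ is genuinely used, since without compact balls neither the attainment of the inner supremum nor the upper semi-continuity of $h_y$ survives in general. Everything else is a bookkeeping application of Lemma~\ref{lem:usc1} together with the maximum-theorem semicontinuity argument spelled out above.
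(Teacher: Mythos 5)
Your proof is correct, but it follows a genuinely different route from the paper's. You prove the stronger statement that, for each label $y$, the value function $h_y(x)=\sup_{x'\in \bar B(x,\varepsilon)}\int l(\theta,(x',y))\,d\mu(\theta)$ is upper semi-continuous, via a Berge-maximum-theorem style argument: properness makes the balls compact, the ball correspondence has closed graph and is locally contained in a fixed compact ball, the inner supremum of the u.s.c. integrand (Lemma~\ref{lem:usc1}) is attained, and a diagonal subsequence argument gives $\limsup_n h_y(x_n)\le h_y(x)$; Borel (hence universal) measurability and well-posedness of $\riskadv^\varepsilon(\mu)$ follow since $\mathcal{Y}$ is finite and the map is bounded by $M$. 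The paper instead never touches attainment or semicontinuity of the value function: it writes the superlevel set $\{\phi>u\}$ as the projection onto the first factor of the Borel set $\{((x,y),(x',y')):\int l(\theta,(x',y'))\,d\mu(\theta)-c_\varepsilon((x,y),(x',y'))>u\}$ and invokes the Bertsekas--Shreve machinery (projections of Borel sets are analytic, analytic sets are universally measurable). The trade-off: your argument buys a stronger conclusion (u.s.c., hence Borel measurability) at the price of genuinely using properness of $(\mathcal{X},d)$, whereas the paper's projection argument only yields universal measurability but requires no compactness of balls at all, and it sets up the same analytic-set/measurable-selection toolkit that is reused immediately afterwards (e.g.\ the selection of an optimal transport map in the proof of Proposition~\ref{prop:dro_adv}). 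Since the paper assumes properness globally, both proofs are valid in its setting.
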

\begin{proof}
Let $\phi :(x,y)\mapsto \sup_{(x',y'),d(x,x')\leq\varepsilon,y=y'} \int l(\theta,(x',y'))d\mu(\theta)$. Then for $u\in\bar{\mathbb{R}}$:
\begin{align*}
\left\{\phi(x,y)>u\right\}=\text{Proj}_1\left\{((x,y),(x',y'))\mid\int l(\theta,(x',y'))d\mu(\theta)-c_\varepsilon((x,y),(x',y'))>u\right\}
\end{align*}
By Lemma~\ref{lem:usc2}: $((x,y),(x',y'))\mapsto \int l(\theta,(x',y'))d\mu(\theta)-c_\varepsilon((x,y),(x',y'))$ is upper-semicontinuous hence Borel measurable. So its level sets are Borel sets, and by~\citep[Proposition 7.39]{bertsekas2004stochastic}, the projection of a Borel set is analytic. And then $\left\{\phi(x,y)>u\right\}$ universally measurable thanks to~\citep[Corollary 7.42.1]{bertsekas2004stochastic}. We deduce that $\phi$ is universally measurable.
\end{proof}

\section{Proofs}
\subsection{Proof of Proposition~\ref{prop:wass_ball}}
\label{prv:a_eps}
\begin{proof}
Let $\eta>0$. Let $\QQ\in\mathcal{A}_\varepsilon(\PP)$. There exists $\gamma\in
\mathcal{M}^+_1\left((\mathcal{X}\times\mathcal{Y})^2\right)$ such that, $d(x,x')\leq\varepsilon$, $y=y'$ $\gamma$-almost surely, and $\Pi_{1\sharp}\gamma=\PP$, and $\Pi_{2\sharp}\gamma=\QQ$. Then $\int c_\varepsilon d \gamma = 0\leq \eta$. Then, we deduce that $W_{c_\varepsilon}(\PP,\QQ)\leq \eta$, and $\QQ\in\mathcal{B}_{c_\varepsilon}(\PP,\eta)$. Reciprocally, let $\QQ\in\mathcal{B}_{c_\varepsilon}(\PP,\eta)$. Then, since the infimum is attained in the Wasserstein definition, there exists $\gamma\in
\mathcal{M}^+_1\left((\mathcal{X}\times\mathcal{Y})^2\right)$ such that $\int c_\varepsilon d \gamma \leq \eta$. Since $c_\varepsilon((x,x'),(y,y'))=+\infty$ when $d(x,x')>\varepsilon$ and $y\neq y'$, we deduce that, $d(x,x')\leq\varepsilon$ and $y=y'$, $\gamma$-almost surely. Then $\QQ\in\mathcal{A}_\varepsilon(\PP)$. We have then shown that: $\mathcal{A}_\varepsilon(\PP)=\mathcal{B}_{c_\varepsilon}(\PP,\eta)$.

The convexity of $\mathcal{A}_\varepsilon(\PP)$ is then immediate from the relation with the Wasserstein uncertainty set.

Let us show first that $\mathcal{A}_\varepsilon(\PP)$ is relatively compact for weak topology. To do so we will show that $\mathcal{A}_\varepsilon(\PP)$ is tight and apply Prokhorov's theorem. Let $\delta>0$, $(\mathcal{X}\times \mathcal{Y},d\oplus d')$ being a Polish space, $\{\PP\}$ is tight then there exists $K_\delta$ compact such that $\PP(K_\delta)\geq1-\delta$.
Let $\Tilde{K}_\delta:=\left\{(x',y')\mid \exists (x,y)\in K_\delta,~ d(x',x)\leq\varepsilon,~y=y'\right\}$.  Recalling that $(\mathcal{X},d)$ is proper (i.e. the closed balls are compact), so $\Tilde{K}_\delta$ is compact. Moreover for $\QQ\in\mathcal{A}_\varepsilon(\PP)$, $\QQ(\Tilde{K}_\delta)\geq \PP(K_\delta)\geq 1-\delta$. And then, Prokhorov's theorem holds, and $\mathcal{A}_\varepsilon(\PP)$ is relatively compact for weak topology.

Let us now prove that $\mathcal{A}_\varepsilon(\PP)$ is closed to conclude.  Let $(\QQ_n)_n$ be a sequence of $\mathcal{A}_\varepsilon(\PP)$ converging towards some $\QQ$ for weak topology. For each $n$, there exists $\gamma_n\in \mathcal{M}^1_+(\mathcal{X}\times\mathcal{Y})$ such that $d(x,x')\leq\varepsilon$ and $y=y'$ $\gamma_n$-almost surely and $\Pi_{1\sharp}\gamma_n=\PP$, $\Pi_{2\sharp}\gamma_n=\QQ_n$. $\{\QQ_n,n\geq0\}$ is relatively compact, then tight, then $\bigcup_n \Gamma_{\PP,\QQ_n}$ is tight, then relatively compact by Prokhorov's theorem. $(\gamma_n)_n\in\bigcup_n \Gamma_{\PP,\QQ_n}$, then up to an extraction,  $\gamma_n\to\gamma$. Then $d(x,x')\leq\varepsilon$ and $y=y'$ $\gamma$-almost surely, and by continuity, $\Pi_{1\sharp}\gamma=\PP$ and by continuity, $\Pi_{2\sharp}\gamma=\QQ$. And hence $\mathcal{A}_\varepsilon(\PP)$ is closed.

Finally $\mathcal{A}_\varepsilon(\PP)$ is a convex compact set for the weak topology. 
\end{proof}

\subsection{Proof of Proposition~\ref{prop:dro_adv}}
\label{prv:dro_adv}

\begin{proof}
Let $\mu\in\mathcal{M}^1_+(\Theta)$. Let $\Tilde{f}:((x,y),(x',y'))\mapsto \mathbb{E}_{\theta\sim\mu}\left[l(\theta,(x,y))\right]-c_\varepsilon((x,y),(x',y'))$. $\Tilde{f}$ is upper-semi continuous, hence upper semi-analytic. Then, by upper semi continuity of $\mathbb{E}_{\theta\sim\mu}\left[l(\theta,\cdot)\right]$ on the compact $\{(x',y')\mid~d(x,x')\leq\varepsilon,y=y'\}$ and~\citep[Proposition 7.50]{bertsekas2004stochastic}, there exists a universally measurable mapping $T$ such that $\mathbb{E}_{\theta\sim\mu}\left[l(\theta,T(x,y))\right]=\sup_{(x',y'),~d(x,x')\leq\varepsilon,y=y'}\mathbb{E}_{\theta\sim\mu}\left[l(\theta,(x,y))\right]$.  Let $\QQ = T_{\sharp}\PP$, then $\QQ\in\mathcal{A}_\varepsilon(\PP)$. And then $\mathbb{E}_{(x,y)\sim\PP}\left[\sup_{(x',y'),~d(x,x')\leq\varepsilon,y=y'}\mathbb{E}_{\theta\sim\mu}\left[l(\theta,(x',y'))\right]\right]\leq \sup_{\QQ\in\mathcal{A}_\varepsilon(\PP)}\mathbb{E}_{(x,y)\sim\QQ}\left[\mathbb{E}_{\theta\sim\mu}\left[l(\theta,(x,y))\right]\right]$.

Reciprocally, let $\QQ\in\mathcal{A}_\varepsilon(\PP)$. There exists $\gamma\in\mathcal{M}^1_+((\mathcal{X}\times\mathcal{Y})^2)$, such that $d(x,x')\leq\varepsilon$ and $y=y'$ $\gamma$-almost surely, and, $\Pi_{1\sharp}\gamma=\PP$ and  $\Pi_{2\sharp}\gamma=\QQ$. Then:
$\mathbb{E}_{\theta\sim\mu}\left[l(\theta,(x',y'))\right]\leq\sup_{(u,v),~d(x,u)\leq\varepsilon,y=v}\mathbb{E}_{\theta\sim\mu}\left[l(\theta,(u,v))\right]$ $\gamma$-almost surely. Then, we deduce that:
\begin{align*}
    \mathbb{E}_{(x',y')\sim\QQ}\left[\mathbb{E}_{\theta\sim\mu}\left[l(\theta,(x',y'))\right]\right]& =     \mathbb{E}_{(x,y,x',y')\sim\gamma}\left[\mathbb{E}_{\theta\sim\mu}\left[l(\theta,(x',y'))\right]\right] \\
    &\leq\mathbb{E}_{(x,y,x',y')\sim\gamma}\left[\sup_{(u,v),~d(x,u)\leq\varepsilon,y=v}\mathbb{E}_{\theta\sim\mu}\left[l(\theta,(u,v))\right]\right]\\
    &\leq\mathbb{E}_{(x,y)\sim\PP}\left[\sup_{(u,v),~d(x,u)\leq\varepsilon,y=v}\mathbb{E}_{\theta\sim\mu}\left[l(\theta,(u,v))\right]\right]
\end{align*}

Then we deduce the expected result:
\begin{align*}
\riskadv^\varepsilon(\mu)= \sup_{\QQ\in\mathcal{A}_\varepsilon(\PP)}\mathbb{E}_{(x,y)\sim\QQ}\left[\mathbb{E}_{\theta\sim\mu}\left[l(\theta,(x,y))\right]\right]
\end{align*}
Let us show that the optimum is attained. $\QQ\mapsto\mathbb{E}_{(x,y)\sim\QQ}\left[\mathbb{E}_{\theta\sim\mu}\left[l(\theta,(x,y))\right]\right]$ is upper semi continuous by Lemma~\ref{lem:usc2} for the weak topology of measures, and $\mathcal{A}_\varepsilon(\PP)$ is compact by Proposition~\ref{prop:wass_ball}, then by~\citep[Proposition 7.32]{bertsekas2004stochastic}, the supremum is attained for a certain $\QQ^*\in\mathcal{A}_\varepsilon(\PP)$.

\end{proof}

\subsection{Proof of Theorem~\ref{thm:duality-rand}}
\label{prv:duality-rand}
Let us first recall the Fan's Theorem.

\begin{thm}
Let $U$ be a compact convex Haussdorff space and $V$ be convex space (not necessarily topological). Let  $\psi:U\times V\to \mathbb{R}$ be a concave-convex function such that for all $v\in V$, $\psi(\cdot,v)$ is upper semi-continuous then:
\begin{align*}
    \inf_{v\in V}    \max_{u\in U}\psi(u,v) =    \max_{u\in U} \inf_{v\in V}    \psi(u,v) 
\end{align*}
\end{thm}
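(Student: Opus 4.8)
The plan is to prove the two inequalities separately. The inequality $\max_{u}\inf_{v}\psi(u,v)\le\inf_{v}\max_{u}\psi(u,v)$ is the trivial ``weak duality'' direction: for any fixed $u_0\in U$ and $v_0\in V$ one has $\inf_{v}\psi(u_0,v)\le\psi(u_0,v_0)\le\max_{u}\psi(u,v_0)$, and taking the supremum over $u_0$ and then the infimum over $v_0$ yields the claim. Before anything else I would record that the outer maxima are genuinely attained: for fixed $v$, $\psi(\cdot,v)$ is upper semi-continuous on the compact set $U$, so $\max_{u}\psi(u,v)$ is reached; likewise $u\mapsto\inf_{v}\psi(u,v)$ is an infimum of upper semi-continuous functions, hence upper semi-continuous, and therefore also attains its maximum on $U$. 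This justifies writing $\max$ rather than $\sup$ on both sides.

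The whole content is the reverse inequality $\inf_v \max_u \psi \le \max_u \inf_v \psi =: c$. I would fix an arbitrary $\alpha>c$ and aim to produce a single $\bar v\in V$ with $\max_u\psi(u,\bar v)\le\alpha$; letting $\alpha\downarrow c$ then finishes. For each $v$ set $F_v:=\{u\in U:\psi(u,v)\ge\alpha\}$, which is closed (by upper semi-continuity) and convex (by concavity of $\psi(\cdot,v)$). Since $\alpha>c$, every $u$ satisfies $\inf_v\psi(u,v)<\alpha$, so $\bigcap_{v}F_v=\emptyset$; compactness of $U$ then gives, via the finite intersection property, finitely many $v_1,\dots,v_n$ with $\bigcap_{i=1}^n F_{v_i}=\emptyset$, i.e. $\min_i\psi(u,v_i)<\alpha$ for every $u\in U$.

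The core step is to convert this finite family into one convex combination. Using convexity of $V$ and convexity of $\psi$ in its second argument, for $\lambda\in\Delta_n$ and $v(\lambda):=\sum_i\lambda_i v_i\in V$ one has $\psi(u,v(\lambda))\le\sum_i\lambda_i\psi(u,v_i)$, so it suffices to find $\bar\lambda\in\Delta_n$ with $\sum_i\bar\lambda_i\psi(u,v_i)\le\alpha$ for all $u$. I would obtain this by a second finite-intersection argument on the compact simplex $\Delta_n$: the sets $L_u:=\{\lambda\in\Delta_n:\sum_i\lambda_i\psi(u,v_i)\le\alpha\}$ are closed, and I must check that every finite subfamily $L_{u_1},\dots,L_{u_m}$ has a common point. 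This reduces to a finite matrix game with payoff $A_{ji}=\psi(u_j,v_i)$, to which I apply the classical von Neumann matrix minimax theorem (itself a consequence of the separating hyperplane theorem): $\min_{\lambda}\max_j (A\lambda)_j=\max_{p\in\Delta_m}\min_i\sum_j p_j\psi(u_j,v_i)$. Here concavity of $\psi$ in $u$ together with convexity of $U$ enters decisively: for $u_p:=\sum_j p_j u_j\in U$ one has $\sum_j p_j\psi(u_j,v_i)\le\psi(u_p,v_i)$, whence $\min_i\sum_j p_j\psi(u_j,v_i)\le\min_i\psi(u_p,v_i)<\alpha$, giving $\min_\lambda\max_j(A\lambda)_j<\alpha$ and thus $\bigcap_j L_{u_j}\ne\emptyset$. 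Compactness of $\Delta_n$ then yields $\bar\lambda\in\bigcap_{u\in U}L_u$, and $\bar v:=v(\bar\lambda)$ satisfies $\max_u\psi(u,\bar v)\le\alpha$, completing the argument.

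The main obstacle, and the only genuinely non-elementary ingredient, is the finite-dimensional base case (the matrix-game minimax / separating hyperplane step); everything else is bookkeeping with compactness and the two convex-analytic inequalities. A secondary point to handle with care is that $V$ carries no topology, so I must ensure that all compactness, closedness and attainment arguments are used exclusively on $U$ (and on the auxiliary simplices $\Delta_n,\Delta_m$), never on $V$ — which the routing through $\Delta_n$ above is designed to respect.
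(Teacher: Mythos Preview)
The paper does not prove this statement at all: it merely \emph{recalls} Fan's minimax theorem as a known result and then applies it to obtain Theorem~\ref{thm:duality-rand}. So there is no in-paper proof to compare against.

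Your argument is correct and is one of the standard routes to Fan's theorem: reduce the hard inequality to a finite subfamily $v_1,\dots,v_n$ via compactness of $U$ and the finite intersection property for the closed level sets $F_v$, then pass from the finite family to a single convex combination $\bar v=\sum_i\bar\lambda_i v_i$ by a second finite-intersection argument on the compact simplex $\Delta_n$, where the base case is exactly the von Neumann matrix minimax (equivalently, a separating-hyperplane argument). You are careful to use topology only on $U$ and on finite-dimensional simplices, never on $V$, which is essential under the stated hypotheses. The one place to be slightly more explicit is the appeal to von Neumann: you should state that the payoff matrix has finite real entries (true since $\psi$ is real-valued) so that the classical bilinear minimax applies without further assumptions. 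Otherwise the proof is complete.
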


We are now set to prove Theorem~\ref{thm:duality-rand}.

\begin{proof}
$\mathcal{A}_\varepsilon(\PP)$, endowed with the weak topology of measures, is a Hausdorff compact convex space, thanks to Proposition~\ref{prop:wass_ball}. Moreover, $\mathcal{M}^1_+(\Theta)$ is clearly convex and $(\QQ,\mu)\mapsto \int ld\mu d\QQ$ is bilinear, hence concave-convex. Moreover thanks to Lemma~\ref{lem:usc2}, for all $\mu$, $\QQ\mapsto \int ld\mu d\QQ$ is upper semi-continuous. Then Fan's theorem applies and strong duality holds.
\end{proof}
In the related work (Section~\ref{sec:rw}), we mentioned a particular form of Theorem~\ref{thm:duality-rand} for convex cases. As mentioned, this result has limited impact in the adversarial classification setting. It is still a direct corollary of Fan's theorem. This theorem can be stated as follows: 
\begin{thm}Let $\PP\in\mathcal{M}^1_+(\mathcal{X}\times\mathcal{Y})$, $\varepsilon>0$ and $\Theta$ a convex set. Let $l$ be a loss satisfying Assumption~\ref{ass:loss}, and also, $(x,y)\in\mathcal{X}\times\mathcal{Y}$, $l(\cdot,(x,y))$ is a convex function, then we have the following:
\begin{align*}
\inf_{\theta\in\Theta} \sup_{\QQ\in \mathcal{A}_{\varepsilon}(\PP)} \mathbb{E}_{ \QQ }\left[l(\theta,(x,y))\right]
=
\sup_{\QQ\in \mathcal{A}_{\varepsilon}(\PP)}\inf_{\theta\in \Theta}  \mathbb{E}_{\QQ }\left[l(\theta,(x,y))\right]
\end{align*}
The supremum is always attained. If $\Theta$ is a compact set then, the infimum is also attained.
\end{thm}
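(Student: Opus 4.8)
The plan is to apply Fan's theorem directly, exactly as in the proof of Theorem~\ref{thm:duality-rand}, but with the roles of the two variables chosen so that the space over which we take the supremum is the compact convex one and the space over which we take the infimum carries the required convexity. Here the outer infimum is over $\theta\in\Theta$ and the supremum is over $\QQ\in\mathcal{A}_\varepsilon(\PP)$. In Fan's theorem I take $U=\mathcal{A}_\varepsilon(\PP)$ (the compact convex Hausdorff space, playing the role of the maximizing variable) and $V=\Theta$ (the convex space, playing the role of the minimizing variable), with objective $\psi(\QQ,\theta):=\mathbb{E}_{\QQ}[l(\theta,(x,y))]=\int l(\theta,(x,y))\,d\QQ(x,y)$.

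First I would verify the hypotheses of Fan's theorem term by term. By Proposition~\ref{prop:wass_ball}, $\mathcal{A}_\varepsilon(\PP)$ endowed with the weak topology of measures is a compact convex Hausdorff space, so $U$ is as required. The set $\Theta$ is convex by assumption, giving the convex space $V$. For the concave--convex structure: in the variable $\QQ$ the map $\QQ\mapsto\int l(\theta,(x,y))\,d\QQ(x,y)$ is linear, hence concave; in the variable $\theta$ the map $\theta\mapsto\int l(\theta,(x,y))\,d\QQ(x,y)$ is convex because $l(\cdot,(x,y))$ is convex for every $(x,y)$ by hypothesis and integrating a family of convex functions against the nonnegative measure $\QQ$ preserves convexity (this uses Assumption~\ref{ass:loss} to guarantee the integrand is measurable and bounded, so the integral is finite and well defined). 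Finally, for the semicontinuity requirement I note that for each fixed $\theta\in\Theta$, $\mu=\delta_\theta$ gives $\int l(\theta,(x,y))\,d\QQ(x,y)=\int\big(\int l(\theta',(x,y))\,d\delta_\theta(\theta')\big)d\QQ(x,y)$, so Lemma~\ref{lem:usc2} applies with $\mu=\delta_\theta$ and yields that $\QQ\mapsto\psi(\QQ,\theta)$ is upper semi-continuous for the weak topology. This is precisely the upper semi-continuity in the maximizing variable that Fan's theorem demands.

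With all hypotheses in place, Fan's theorem gives
\begin{align*}
\inf_{\theta\in\Theta}\max_{\QQ\in\mathcal{A}_\varepsilon(\PP)}\mathbb{E}_{\QQ}\left[l(\theta,(x,y))\right]
=\max_{\QQ\in\mathcal{A}_\varepsilon(\PP)}\inf_{\theta\in\Theta}\mathbb{E}_{\QQ}\left[l(\theta,(x,y))\right],
\end{align*}
which is the claimed strong duality. The attainment of the supremum is automatic from Fan's theorem (it delivers a $\max$, not merely a $\sup$), and can also be read off directly from the upper semi-continuity of $\QQ\mapsto\psi(\QQ,\theta)$ together with compactness of $\mathcal{A}_\varepsilon(\PP)$ via \citep[Proposition 7.32]{bertsekas2004stochastic}, exactly as in Proposition~\ref{prop:dro_adv}. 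For the final clause, if $\Theta$ is compact I would show attainment of the infimum: for a fixed optimal $\QQ^*$, the function $\theta\mapsto\int l(\theta,(x,y))\,d\QQ^*(x,y)$ need only be shown lower semi-continuous on the compact $\Theta$; the natural route is to invoke lower semi-continuity of $l(\cdot,(x,y))$ together with Fatou's lemma (the boundedness from Assumption~\ref{ass:loss} lets one pass to the complementary function $M-l$ and apply the reverse Fatou inequality, mirroring the argument of Lemma~\ref{lem:usc1}). The main obstacle is precisely this last step: the convex-loss theorem as stated does not explicitly assume lower semi-continuity of $l(\cdot,(x,y))$ in $\theta$, so to make the infimum attainment rigorous I would either add that hypothesis or exploit continuity of convex functions on the interior of their domain, which requires care at the boundary of $\Theta$; everything else reduces to a clean application of Fan's theorem.
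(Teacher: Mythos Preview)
Your proposal is correct and follows exactly the same route as the paper, which states only that the result ``is still a direct corollary of Fan's theorem'' and gives no further details; your verification of the hypotheses (compactness and convexity of $\mathcal{A}_\varepsilon(\PP)$ via Proposition~\ref{prop:wass_ball}, linearity in $\QQ$, convexity in $\theta$, and upper semi-continuity via Lemma~\ref{lem:usc2}) is precisely what the paper leaves implicit. Your caveat about the infimum attainment clause is well taken---the paper does not justify it either, and the parallel statement in Theorem~\ref{thm:duality-rand} explicitly adds the hypothesis that $l(\cdot,(x,y))$ be lower semi-continuous, which suggests that hypothesis is tacitly intended here as well.
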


\subsection{Proof of Proposition~\ref{prop:limit-eps}}
\label{prv:limit_eps}
\begin{proof}
Let us first show that for $\alpha\geq 0$, $\sup_{\QQ_i\in\Gamma_{i,\varepsilon}}\mathbb{E}_{\QQ_i,\mu}\left[l(\theta,(x,y))\right]-\alpha\text{KL}\left(\QQ_i\Big|\Big|\frac{1}{N}\mathbb{U}_{(x_i,y_i)}\right)$ admits a solution. Let $\alpha\geq 0$,
$(\QQ_{\alpha,i}^{n})_{n\geq 0}$ a sequence such that
\begin{align*}
  \mathbb{E}_{\QQ_{\alpha,i}^{n},\mu}\left[l(\theta,(x,y))\right]-\alpha\text{KL}\left(\QQ_{\alpha,i}^{n}\Big|\Big|\frac{1}{N}\mathbb{U}_{(x_i,y_i)}\right)\xrightarrow[n\to+\infty]{} \sup_{\QQ_i\in\Gamma_{i,\varepsilon}}\mathbb{E}_{\QQ_i,\mu}\left[l(\theta,(x,y))\right]-\alpha\text{KL}\left(\QQ_i\Big|\Big|\frac{1}{N}\mathbb{U}_{(x_i,y_i)}\right).
\end{align*}
As $\Gamma_{i,\varepsilon}$ is tight ($(\mathcal{X},d)$ is a proper metric space therefore all the closed ball are compact) and by Prokhorov's theorem, we can extract a subsequence which converges  toward  $\QQ^{*}_{\alpha,i}$. Moreover, $l$ is upper semi-continuous (u.s.c), thus $\QQ\rightarrow \mathbb{E}_{\QQ,\mu}\left[l(\theta,(x,y))\right]$ is also u.s.c.\footnote{Indeed by considering a decreasing sequence of continuous and bounded functions which converge towards $\mathbb{E}_{\mu}\left[l(\theta,(x,y))\right]$ and by definition of the weak convergence the result follows.} Moreover 
$\QQ\rightarrow - \alpha \text{KL}\left(\QQ\Big|\Big|\frac{1}{N}\mathbb{U}_{(x_i,y_i)}\right)$ is also u.s.c. \footnote{for $\alpha=0$ the result is clear, and if $\alpha>0$, note that $\text{KL}\left(\cdot\Big|\Big|\frac{1}{N}\mathbb{U}_{(x_i,y_i)}\right)$ is lower semi-continuous}, therefore, by considering the limit superior as $n$ goes to infinity we obtain that
\begin{align*}
    &\limsup_{n\to+\infty}\mathbb{E}_{\QQ_{\alpha,i}^{n},\mu}\left[l(\theta,(x,y))\right]-\alpha\text{KL}\left(\QQ_{\alpha,i}^{n}\Big|\Big|\frac{1}{N}\mathbb{U}_{(x_i,y_i)}\right)\\
    &=\sup_{\QQ_i\in\Gamma_{i,\varepsilon}}\mathbb{E}_{\QQ_i,\mu}\left[l(\theta,(x,y))\right]-\alpha\text{KL}\left(\QQ_i\Big|\Big|\frac{1}{N}\mathbb{U}_{(x_i,y_i)}\right)\\
    &\leq \mathbb{E}_{\QQ_{\alpha,i}^{*},\mu}\left[l(\theta,(x,y))\right]-\alpha\text{KL}\left(\QQ_{\alpha,i}^{*}\Big|\Big|\frac{1}{N}\mathbb{U}_{(x_i,y_i)}\right)
\end{align*}
from which we deduce that $\QQ_{\alpha,i}^{*}$ is optimal.

Let us now show the result. We consider a positive sequence of $(\alpha_i^{(\ell)})_{\ell\geq0}$ such that $\alpha_i^{(\ell)}\to 0$.
Let us denote $\QQ^{*}_{\alpha_i^{(\ell)},i}$ and $\QQ^{*}_i$ the solutions of  $\max_{\QQ_i\in\Gamma_{i,\varepsilon}}\mathbb{E}_{\QQ_i,\mu}\left[l(\theta,(x,y))\right]-\alpha_i^{(\ell)}\text{KL}\left(\QQ_i\Big|\Big|\frac{1}{N}\mathbb{U}_{(x_i,y_i)}\right)$
and 
$\max_{\QQ_i\in\Gamma_{i,\varepsilon}}\mathbb{E}_{\QQ_i,\mu}\left[l(\theta,(x,y))\right]$ respectively.  Since $\Gamma_{i,\varepsilon}$ is tight, $(\QQ^{*}_{\alpha_i^{(\ell)},i})_{\ell\geq 0}$ is also tight and we can extract by Prokhorov's theorem a subsequence which converges towards $\QQ^{*}$. Moreover we have
\begin{align*}
 \mathbb{E}_{\QQ^{*}_i,\mu}\left[l(\theta,(x,y))\right] -\alpha_i^{(\ell)}\text{KL}\left(\QQ^{*}_i\Big|\Big|\frac{1}{N}\mathbb{U}_{(x_i,y_i)}\right)\leq \mathbb{E}_{\QQ^{*}_{\alpha_i^{(\ell)},i},\mu}\left[l(\theta,(x,y))\right] -\alpha_i^{(\ell)}\text{KL}\left(\QQ^{*}_{\alpha_i^{(\ell)},i}\Big|\Big|\frac{1}{N}\mathbb{U}_{(x_i,y_i)}\right)
\end{align*}
from which follows that
\begin{align*}
0\leq \mathbb{E}_{\QQ^{*}_i,\mu}\left[l(\theta,(x,y))\right] -  \mathbb{E}_{\QQ^{*}_{\alpha_i^{(\ell)},i},\mu}\left[l(\theta,(x,y))\right]\leq \alpha_i^{(\ell)}\left(\text{KL}\left(\QQ^{*}_i\Big|\Big|\frac{1}{N}\mathbb{U}_{(x_i,y_i)}\right)- \text{KL}\left(\QQ^{*}_{\alpha_i^{(\ell)},i}\Big|\Big|\frac{1}{N}\mathbb{U}_{(x_i,y_i)}\right)\right)
\end{align*}
Then by considering the limit superior we obtain that
\begin{align*}
    \limsup_{\ell\to+\infty}\mathbb{E}_{\QQ^{*}_{\alpha_i^{(\ell)},i},\mu}\left[l(\theta,(x,y))\right] = \mathbb{E}_{\QQ^{*}_i,\mu}\left[l(\theta,(x,y))\right].
\end{align*}
from which follows that 
\begin{align*}
 \mathbb{E}_{\QQ^{*}_i,\mu}\left[l(\theta,(x,y))\right]\leq \mathbb{E}_{\QQ^{*},\mu}\left[l(\theta,(x,y))\right]
\end{align*}
and by optimality of $\QQ^{*}_i$ we obtain the desired result. 
\end{proof}

\subsection{Proof of Proposition~\ref{prop:control-error-stat}}
\label{prv:control-error-stat}

\begin{proof}
Let us denote for all $\mu\in\mathcal{M}_1^{+}(\Theta)$,
\begin{align*}
  \widehat{\mathcal{R}}^{\bm{\varepsilon},\textbf{m}}_{adv,\bm{\alpha}}(\mu):=  \sum_{i=1}^N  \frac{\alpha_i}{N}\log\left( \frac{1}{m_i}\sum_{j=1}^{m_i}\exp\frac{\mathbb{E}_{ \mu}\left[l(\theta,u_j^{(i)})\right]}{\alpha_i}\right).
\end{align*}
Let also consider $(\mu^{(\textbf{m})}_n)_{n\geq 0}$ and $(\mu_n)_{n\geq 0}$ two sequences such that
\begin{align*}
 \widehat{\mathcal{R}}^{\bm{\varepsilon},\textbf{m}}_{adv,\bm{\alpha}}(\mu^{(\textbf{m})}_n) \xrightarrow[n \to +\infty]{}\widehat{\mathcal{R}}^{\bm{\varepsilon},\textbf{m}}_{adv,\bm{\alpha}},~\quad
\widehat{\mathcal{R}}^{\bm{\varepsilon}}_{adv,\bm{\alpha}}(\mu_n)\xrightarrow[n \to +\infty]{}\widehat{\mathcal{R}}^{\bm{\varepsilon},*}_{adv,\bm{\alpha}}.
\end{align*}
We first remarks that
\begin{align*}
\widehat{\mathcal{R}}^{\bm{\varepsilon},\textbf{m}}_{adv,\bm{\alpha}}- \widehat{\mathcal{R}}^{\bm{\varepsilon},*}_{adv,\bm{\alpha}}&\leq \widehat{\mathcal{R}}^{\bm{\varepsilon},\textbf{m}}_{adv,\bm{\alpha}} - \widehat{\mathcal{R}}^{\bm{\varepsilon},\textbf{m}}_{adv,\bm{\alpha}}(\mu_n) + \widehat{\mathcal{R}}^{\bm{\varepsilon},\textbf{m}}_{adv,\bm{\alpha}}(\mu_n) - \widehat{\mathcal{R}}^{\bm{\varepsilon}}_{adv,\bm{\alpha}}(\mu_n)+ \widehat{\mathcal{R}}^{\bm{\varepsilon}}_{adv,\bm{\alpha}}(\mu_n)-
\widehat{\mathcal{R}}^{\bm{\varepsilon},*}_{adv,\bm{\alpha}} \\
&\leq \sup_{\mu\in \mathcal{M}^+_1(\Theta)}\Big|\widehat{\mathcal{R}}^{\bm{\varepsilon},\textbf{m}}_{adv,\bm{\alpha}}(\mu) - \widehat{\mathcal{R}}^{\bm{\varepsilon}}_{adv,\bm{\alpha}}(\mu) \Big| + \widehat{\mathcal{R}}^{\bm{\varepsilon}}_{adv,\bm{\alpha}}(\mu_n)-
\widehat{\mathcal{R}}^{\bm{\varepsilon},*}_{adv,\bm{\alpha}},
\end{align*}
and by considering the limit, we obtain that
\begin{align*}
  \widehat{\mathcal{R}}^{\bm{\varepsilon},\textbf{m}}_{adv,\bm{\alpha}}- \widehat{\mathcal{R}}^{\bm{\varepsilon},*}_{adv,\bm{\alpha}}&\leq  \sup_{\mu\in \mathcal{M}^+_1(\Theta)}\Big|\widehat{\mathcal{R}}^{\bm{\varepsilon},\textbf{m}}_{adv,\bm{\alpha}}(\mu) - \widehat{\mathcal{R}}^{\bm{\varepsilon}}_{adv,\bm{\alpha}}(\mu) \Big| 
\end{align*}
Simarly we have that
\begin{align*}
\widehat{\mathcal{R}}^{\bm{\varepsilon},*}_{adv,\bm{\alpha}} - \widehat{\mathcal{R}}^{\bm{\varepsilon},\textbf{m}}_{adv,\bm{\alpha}}&\leq \widehat{\mathcal{R}}^{\bm{\varepsilon},*}_{adv,\bm{\alpha}} -
\widehat{\mathcal{R}}^{\bm{\varepsilon}}_{adv,\bm{\alpha}}(\mu_n^{(\bm{m})})
+\widehat{\mathcal{R}}^{\bm{\varepsilon}}_{adv,\bm{\alpha}}(\mu_n^{(\bm{m})}) - \widehat{\mathcal{R}}^{\bm{\varepsilon},\textbf{m}}_{adv,\bm{\alpha}}(\mu_n^{(\bm{m})}) + \widehat{\mathcal{R}}^{\bm{\varepsilon},\textbf{m}}_{adv,\bm{\alpha}}(\mu_n^{(\bm{m})}) - \widehat{\mathcal{R}}^{\bm{\varepsilon},\textbf{m}}_{adv,\bm{\alpha}}
\end{align*}
from which follows that 
\begin{align*}
\widehat{\mathcal{R}}^{\bm{\varepsilon},*}_{adv,\bm{\alpha}} - \widehat{\mathcal{R}}^{\bm{\varepsilon},\textbf{m}}_{adv,\bm{\alpha}}&\leq  \sup_{\mu\in \mathcal{M}^+_1(\Theta)}\Big|\widehat{\mathcal{R}}^{\bm{\varepsilon},\textbf{m}}_{adv,\bm{\alpha}}(\mu) - \widehat{\mathcal{R}}^{\bm{\varepsilon}}_{adv,\bm{\alpha}}(\mu) \Big| 
\end{align*}
Therefore we obtain that 
\begin{align*}
\Big| \widehat{\mathcal{R}}^{\bm{\varepsilon},*}_{adv,\bm{\alpha}} - \widehat{\mathcal{R}}^{\bm{\varepsilon},\textbf{m}}_{adv,\bm{\alpha}}\Big |\leq 
\sum_{i=1}^N\frac{\alpha}{N}  &\sup_{\mu\in \mathcal{M}^+_1(\Theta)}\Big|\log\left(\frac{1}{m_i}\sum_{j=1}^{m_i}\exp\left(\frac{\mathbb{E}_{\theta \sim \mu}\left[l(\theta,u_j^{(i)}))\right]}{\alpha}\right)\right)\\
    &- \log\left(\int_{\mathcal{X}\times\mathcal{Y}}\exp\left(\frac{\mathbb{E}_{\theta \sim \mu}\left[l(\theta,(x,y))\right]}{\alpha}\right) d\mathbb{U}_{(x_i,y_i)} \right)\Big|.
\end{align*}
Observe that $l\geq 0$, therefore because the $\log$ function is 1-Lipschitz on $[1,+\infty)$, we obtain that 
\begin{align*}
\Big| \widehat{\mathcal{R}}^{\bm{\varepsilon},*}_{adv,\bm{\alpha}} - \widehat{\mathcal{R}}^{\bm{\varepsilon},\textbf{m}}_{adv,\bm{\alpha}}\Big |\leq 
\sum_{i=1}^N\frac{\alpha}{N}  \sup_{\mu\in \mathcal{M}^+_1(\Theta)}\Big | \frac{1}{m_i}\sum_{j=1}^{m_i}\exp\left(\frac{\mathbb{E}_{\theta \sim \mu}\left[l(\theta,u_j^{(i)}))\right]}{\alpha}\right)
    - \int_{\mathcal{X}\times\mathcal{Y}}\exp\left(\frac{\mathbb{E}_{\theta \sim \mu}\left[l(\theta,(x,y))\right]}{\alpha}\right) d\mathbb{U}_{(x_i,y_i)} \Big|.
\end{align*}
Let us now denote for all $i=1,\dots,N$,
\begin{align*}
    \widehat{R}_i(\mu,\bm{u}^{(i)})&:=\sum_{j=1}^{m_i}\exp\left(\frac{\mathbb{E}_{\theta \sim \mu}\left[l(\theta,u_j^{(i)}))\right]}{\alpha}\right)\\
    R_i(\mu)&:= \int_{\mathcal{X}\times\mathcal{Y}}\exp\left(\frac{\mathbb{E}_{\theta \sim \mu}\left[l(\theta,(x,y))\right]}{\alpha}\right) d\mathbb{U}_{(x_i,y_i)}.
\end{align*}
and let us define 
\begin{align*}
    f(\bm{u}^{(1)},\dots,\bm{u}^{(N)}):=\sum_{i=1}^N\frac{\alpha}{N}\sup_{\mu\in \mathcal{M}^+_1(\Theta)}\Big |\widehat{R}_i(\mu) -R_i(\mu)\Big |
\end{align*}
where $\bm{u}^{(i)}:=(u_1^{(i)},\dots,u_1^{(m)})$. By denoting $z^{(i)}=(u_1^{(i)},\dots,u_{k-1}^{(i)},z,u_{k+1}^{(i)},\dots,u_m^{(i)})$, we have that
\begin{align*}
  |f(\bm{u}^{(1)},\dots,\bm{u}^{(N)}) - f(\bm{u}^{(1)},\dots,\bm{u}^{(i-1)},\bm{z}^{(i)},\bm{u}^{(i+1)},\dots,\bm{u}^{(N)})|&\leq \frac{\alpha}{N}\Big | \sup_{\mu\in \mathcal{M}^+_1(\Theta)}\Big |\widehat{R}_i(\mu,\bm{u}^{(i)}) -R_i(\mu)\Big |\\
  &- \sup_{\mu\in \mathcal{M}^+_1(\Theta)}\Big |\widehat{R}_i(\mu,\bm{z}^{(i)}) -R_i(\mu)\Big | \Big |\\
  &\leq \frac{\alpha}{N}\Big |\frac{1}{m}\left[\exp\left(\frac{\mathbb{E}_{\theta \sim \mu}\left[l(\theta,u_k^{(i)}))\right]}{\alpha}\right) - \exp\left(\frac{\mathbb{E}_{\theta \sim \mu}\left[l(\theta,z^{(i)}))\right]}{\alpha}\right) \right]\Big| \\
  &\leq \frac{2\exp(M/\alpha)}{Nm}
\end{align*}
where the last inequality comes from the fact that the loss is upper bounded by $l\leq M$. Then by appling the McDiarmid’s Inequality, we obtain that with a probability of at least $1-\delta$,
\begin{align*}
 \Big| \widehat{\mathcal{R}}^{\bm{\varepsilon},*}_{adv,\bm{\alpha}} - \widehat{\mathcal{R}}^{\bm{\varepsilon},\textbf{m}}_{adv,\bm{\alpha}}\Big |\leq\mathbb{E}(f(\bm{u}^{(1)},\dots,\bm{u}^{(N)}))+\frac{2\exp(M/\alpha)}{\sqrt{mN}}\sqrt{\frac{\log(2/\delta)}{2}}.
\end{align*}
Thanks to~\citep[Lemma 26.2]{shalev2014understanding}, we have for all $i\in\{1,\dots,N\}$
\begin{align*}
\mathbb{E}(f(\bm{u}^{(1)},\dots,\bm{u}^{(N)}))\leq 2 \mathbb{E}(\text{Rad}(\mathcal{F}_i\circ \mathbf{u^{(i)}}))    
\end{align*}
where for any class of function $\mathcal{F}$ defined on $\mathcal{Z}$  and point $\bm{z}:(z_1,\dots,z_q)\in\mathcal{Z}^q$
\begin{align*}
    &\mathcal{F}\circ \bm{z}:=\Big\{(f(z_1),\dots,f(z_q)),~f\in\mathcal{F}\Big\} \quad,\quad \text{Rad}(\mathcal{F}\circ \bm{z}):=\frac{1}{q}\mathbb{E}_{\bm{\sigma}\sim\{\pm 1\}}\left[\sup_{f\in\mathcal{F}}\sum_{i=1}^q\sigma_if(z_i)\right]\\
    &\mathcal{F}_i:=\Big\{u\rightarrow\exp\left(\frac{\mathbb{E}_{\theta \sim \mu}\left[l(\theta,u))\right]}{\alpha}\right),~\mu\in\mathcal{M}_{1}^{+}(\Theta) \Big\}.
      \end{align*}
Moreover as $x\rightarrow\exp(x/\alpha)$ is $\frac{\exp(M/\alpha)}{\alpha}$-Lipstchitz on $(-\infty,M]$, by~\citep[Lemma 26.9]{shalev2014understanding}, we have 
\begin{align*}
   \text{Rad}(\mathcal{F}_i\circ \mathbf{u^{(i)}})\leq \frac{\exp(M/\alpha)}{\alpha} \text{Rad}(\mathcal{H}_i\circ \mathbf{u^{(i)}}) 
\end{align*}
where 
\begin{align*}
    \mathcal{H}_i:=\Big\{u\rightarrow \mathbb{E}_{\theta \sim \mu}\left[l(\theta,u))\right],~\mu\in\mathcal{M}_{1}^{+}(\Theta) \Big\}.
\end{align*}
Let us now define
\begin{align*}
    g(\bm{u}^{(1)},\dots,\bm{u}^{(N)}):=\sum_{j=1}^N\frac{2\exp(M/\alpha)}{N}\text{Rad}(\mathcal{H}_j\circ \mathbf{u^{(j)}}).
\end{align*}
We observe that 
\begin{align*}
|g(\bm{u}^{(1)},\dots,\bm{u}^{(N)}) - g(\bm{u}^{(1)},\dots,\bm{u}^{(i-1)},\bm{z}^{(i)},\bm{u}^{(i+1)},\dots,\bm{u}^{(N)})|&\leq \frac{2\exp(M/\alpha)}{N}|\text{Rad}(\mathcal{H}_i\circ \mathbf{u^{(i)}}) - \text{Rad}(\mathcal{H}_i\circ \mathbf{z^{(i)}})|\\
&\leq \frac{2\exp(M/\alpha)}{N}\frac{2M}{m}.
\end{align*}
By Applying the McDiarmid’s Inequality, we have that with a probability of at least $1-\delta$
\begin{align*}
\mathbb{E}(g(\bm{u}^{(1)},\dots,\bm{u}^{(N)}))\leq g(\bm{u}^{(1)},\dots,\bm{u}^{(N)}) +\frac{4\exp(M/\alpha)M}{\sqrt{mN}}\sqrt{\frac{\log(2/\delta)}{2}}.
\end{align*}
Remarks also that 
\begin{align*}
    \text{Rad}(\mathcal{H}_i\circ \mathbf{u^{(i)}})&=\frac{1}{m}\mathbb{E}_{\bm{\sigma}\sim\{\pm 1\}}\left[\sup_{\mu\in\mathcal{M}_1^{+}(\Theta)}\sum_{j=1}^m\sigma_i\mathbb{E}_{\mu}(l(\theta,u^{(i)}_j))\right]\\
    &=\frac{1}{m}\mathbb{E}_{\bm{\sigma}\sim\{\pm 1\}}\left[\sup_{\theta\in\Theta}\sum_{j=1}^m\sigma_i l(\theta,u^{(i)}_j)\right]
\end{align*}
Finally, applying a union bound leads to the desired result.

\end{proof}

\subsection{Proof of Proposition~\ref{prop:control-error-approx}}
\label{prv:control-error-approx}
\begin{proof}
Following the same steps than the proof of Proposition~\ref{prop:control-error-stat}, let $(\mu_n^{\varepsilon})_{n\geq 0}$ and $(\mu_n)_{n\geq 0}$ two sequences such that
\begin{align*}
    \widehat{\mathcal{R}}_{adv,\bm{\alpha}}^{\varepsilon}(\mu_n^{\varepsilon})\xrightarrow[n \to +\infty]{}\widehat{\mathcal{R}}_{adv,\bm{\alpha}}^{\varepsilon,*},~\quad \widehat{\mathcal{R}}_{adv}^{\varepsilon}(\mu_n)\xrightarrow[n \to +\infty]{}\widehat{\mathcal{R}}_{adv}^{\varepsilon,*}.
\end{align*}
Remarks that 
\begin{align*}
  \widehat{\mathcal{R}}_{adv,\bm{\alpha}}^{\varepsilon,*} - \widehat{\mathcal{R}}_{adv}^{\varepsilon,*}&\leq \widehat{\mathcal{R}}_{adv,\bm{\alpha}}^{\varepsilon,*} - \widehat{\mathcal{R}}_{adv,\bm{\alpha}}^{\varepsilon}(\mu_n) + \widehat{\mathcal{R}}_{adv,\bm{\alpha}}^{\varepsilon}(\mu_n) -   \widehat{\mathcal{R}}_{adv}^{\varepsilon}(\mu_n)+ \widehat{\mathcal{R}}_{adv}^{\varepsilon}(\mu_n)-\widehat{\mathcal{R}}_{adv}^{\varepsilon,*}\\
  &\leq \sup_{\mu\in\mathcal{M}_1^{+}(\Theta)}\Big|\widehat{\mathcal{R}}_{adv,\bm{\alpha}}^{\varepsilon}(\mu) -   \widehat{\mathcal{R}}_{adv}^{\varepsilon}(\mu)  \Big| + \widehat{\mathcal{R}}_{adv}^{\varepsilon}(\mu_n)-\widehat{\mathcal{R}}_{adv}^{\varepsilon,*}
\end{align*}
Then by considering the limit we obtain that 
\begin{align*}
    \widehat{\mathcal{R}}_{adv,\bm{\alpha}}^{\varepsilon,*} - \widehat{\mathcal{R}}_{adv}^{\varepsilon,*}&\leq \sup_{\mu\in\mathcal{M}_1^{+}(\Theta)}\Big|\widehat{\mathcal{R}}_{adv,\bm{\alpha}}^{\varepsilon}(\mu) -   \widehat{\mathcal{R}}_{adv}^{\varepsilon}(\mu)  \Big|.
\end{align*}
Similarly, we obtain that 
\begin{align*}
     \widehat{\mathcal{R}}_{adv}^{\varepsilon,*}-\widehat{\mathcal{R}}_{adv,\bm{\alpha}}^{\varepsilon,*}&\leq \sup_{\mu\in\mathcal{M}_1^{+}(\Theta)}\Big|\widehat{\mathcal{R}}_{adv,\bm{\alpha}}^{\varepsilon}(\mu) -   \widehat{\mathcal{R}}_{adv}^{\varepsilon}(\mu)  \Big|,
\end{align*}
from which follows that
\begin{align*}
 \Big| \widehat{\mathcal{R}}_{adv,\bm{\alpha}}^{\varepsilon,*} - \widehat{\mathcal{R}}_{adv}^{\varepsilon,*}\Big|&\leq \frac{1}{N}\sum_{i=1}^N\sup_{\mu\in\mathcal{M}_1^{+}(\Theta)}\Big|\alpha\log\left(\int_{\mathcal{X}\times\mathcal{Y}}\exp\left(\frac{\mathbb{E}_{\mu}[l(\theta,(x,y))]}{\alpha} \right) d\mathbb{U}_{(x_i,y_i)}\right)-\sup_{u\in S^{\varepsilon}_{(x_i,y_i)}}\mathbb{E}_{\mu}[l(\theta,u)] \Big|.
\end{align*}
Let $\mu\in\mathcal{M}_1^{+}(\Theta)$ and $i\in\{1,\dots,N\}$, then we have
\begin{align*}
 \Big|\alpha&\log\left(\int_{\mathcal{X}\times\mathcal{Y}}\exp\left(\frac{\mathbb{E}_{\mu}[l(\theta,(x,y))]}{\alpha} \right) d\mathbb{U}_{(x_i,y_i)}\right)-\sup_{u\in S^{\varepsilon}_{(x_i,y_i)}}\mathbb{E}_{\mu}[l(\theta,u)] \Big|\\
 &=\Big|\alpha\log\left(\int_{\mathcal{X}\times\mathcal{Y}}\exp\left(\frac{\mathbb{E}_{\mu}[l(\theta,(x,y))]-\sup_{u\in S^{\varepsilon}_{(x_i,y_i)}}\mathbb{E}_{\mu}[l(\theta,u)]}{\alpha} \right) d\mathbb{U}_{(x_i,y_i)}\right) \Big|  \\
 &=\alpha  \Big| \log\left(\int_{A_{\beta,\mu}^{(x_i,y_i)}}\exp\left(\frac{\mathbb{E}_{\mu}[l(\theta,(x,y))]-\sup_{u\in S^{\varepsilon}_{(x_i,y_i)}}\mathbb{E}_{\mu}[l(\theta,u)]}{\alpha} \right) d\mathbb{U}_{(x_i,y_i)}\right. \\
 &+ \left.\int_{(A_{\beta,\mu}^{(x_i,y_i)})^{c}}\exp\left(\frac{\mathbb{E}_{\mu}[l(\theta,(x,y))]-\sup_{u\in S^{\varepsilon}_{(x_i,y_i)}}\mathbb{E}_{\mu}[l(\theta,u)]}{\alpha} \right) d\mathbb{U}_{(x_i,y_i)}\right)  \Big|\\
 &\leq \alpha \Big | \log\left(\exp(-\beta/\alpha)\mathbb{U}_{(x_i,y_i)}\left(A_{\beta,\mu}^{(x_i,y_i)}\right) \right)\Big | \\
 &+ \alpha  \Big|\log\left(1+ \frac{\exp(\beta/\alpha)}{\mathbb{U}_{(x_i,y_i)}\left(A_{\beta,\mu}^{(x_i,y_i)}\right)}\int_{(A_{\beta,\mu}^{(x_i,y_i)})^{c}}\exp\left(\frac{\mathbb{E}_{\mu}[l(\theta,(x,y))]-\sup_{u\in S^{\varepsilon}_{(x_i,y_i)}}\mathbb{E}_{\mu}[l(\theta,u)]}{\alpha} \right) d\mathbb{U}_{(x_i,y_i)}\right)  \Big|\\
 &\leq \alpha\log(1/C_\beta)+\beta +\frac{\alpha}{C_\beta}\\
 &\leq 2\alpha\log(1/C_\beta)+\beta
\end{align*}
\end{proof}

\subsection{Proof of Proposition~\ref{prop:algo-oracle}}
\label{prv:algo-oracle}


\begin{proof}
Thanks to Danskin theorem, if $\QQ^*$ is a best response to $\bm{\lambda}$, then $\bm{g}^*:=\left(\mathbb{E}_{\QQ^*}\left[l(\theta_1,(x,y))\right],\dots,\mathbb{E}_{\QQ^*}\left[l(\theta_L,(x,y))\right]\right)^T$ is a subgradient of $\bm{\lambda}\to \riskadv^\varepsilon(\bm{\lambda})$. Let $\eta\geq 0$ be the learning rate. Then we have for all $t\geq 1$:
\begin{align*}
\lVert \bm{\lambda}_t-\bm{\lambda}^*\rVert^2&\leq \lVert \bm{\lambda}_{t-1}-\eta \bm{g}_t-\bm{\lambda}^*\rVert^2\\
&=\lVert \bm{\lambda}_{t-1}-\bm{\lambda}^*\rVert^2-2\eta \langle\bm{g}_t, \bm{\lambda}_{t-1}-\bm{\lambda}^*\rangle+ \eta^2\lVert \bm{g}_t\rVert^2\\
&\leq \lVert \bm{\lambda}_{t-1}-\bm{\lambda}^*\rVert^2-2\eta \langle\bm{g}^*_t, \bm{\lambda}_{t-1}-\bm{\lambda}^*\rangle+2\eta\langle\bm{g}^*_t-\bm{g}_t, \bm{\lambda}_{t-1}-\bm{\lambda}^*\rangle+\eta^2 M^2 L\\
&\leq \lVert \bm{\lambda}_{t-1}-\bm{\lambda}^*\rVert^2-2\eta\left(\riskadv^\varepsilon(\bm{\lambda}_t)-\riskadv^\varepsilon(\bm{\lambda}^*)\right) +4\eta\delta+\eta^2  M^2 L
\end{align*}
We then deduce by summing:
\begin{align*}
   2\eta \sum_{t=1}^T \riskadv^\varepsilon(\bm{\lambda}_t)-\riskadv^\varepsilon(\bm{\lambda}^*) \leq 4\delta\eta T +\lVert \bm{\lambda}_{0}-\bm{\lambda}^*\rVert^2+\eta^2 M^2 LT
\end{align*}
Then we have:
\begin{align*}
    \min_{t\in[T]}\riskadv^\varepsilon(\bm{\lambda}_t)-\riskadv^\varepsilon(\bm{\lambda}^*)\leq 2\delta+\frac{4}{\eta T}+M^2L\eta
\end{align*}
The left-hand term is minimal for $\eta=\frac{2}{M\sqrt{LT}}$, and for this value:
\begin{align*}
    \min_{t\in[T]}\riskadv^\varepsilon(\bm{\lambda}_t)-\riskadv^\varepsilon(\bm{\lambda}^*)\leq 2\delta+\frac{2M\sqrt{L}}{\sqrt{T}}
\end{align*}
\end{proof}.

\section{Additional Experimental Results}
\label{sec:additional-xp}

\subsection{Experimental setting.}

\paragraph{Optimizer.} For each of our models, The optimizer we used in all our implementations is SGD with learning rate set to $0.4$ at epoch $0$ and is divided by $10$ at half training then by $10$ at the three quarters of training. The momentum is set to $0.9$ and the weight decay to $5\times10^{-4}$. The batch size is set to $1024$. 
\paragraph{Adaptation of Attacks.} Since our classifier is randomized, we need to adapt the attack accordingly. To do so we used the expected loss:
\begin{align*}
\Tilde{l}\left((\bm{\lambda},\bm{\theta}),(x,y)\right) = \sum_{k=1}^L \lambda_k l(\theta_k,(x,y))
\end{align*}
to compute the gradient in the attacks, regardless the loss (DLR or cross-entropy). For the inner maximization at training time, we used a PGD attack on the cross-entropy loss with $\varepsilon=0.03$. For the final evaluation, we used the untargeted $DLR$ attack with default parameters.
\paragraph{Regularization in Practice.} The entropic regularization in higher dimensional setting need to be adapted to be more likely to find adversaries. To do so, we computed PGD attacks with only $3$ iterations with $5$ different restarts instead of sampling uniformly $5$ points  in the $\ell_\infty$-ball. In our experiments in the main paper, we use a regularization parameter $\alpha=0.001$. The learning rate for the minimization on $\bm{\lambda}$ is always fixed to $0.001$. 
\paragraph{Alternate Minimization Parameters.} Algorithm~\ref{algo:heuristic} implies an alternate minimization algorithm. We set the number of updates of $\bm{\theta}$ to $T_\theta = 50$ and, the update of $\bm{\lambda}$ to $T_\lambda = 25$. 

\subsection{Effect of the Regularization}
In this subsection, we experimentally investigate the effect of the regularization. In Figure~\ref{fig:xp-regularization}, we notice, that the regularization has the effect of stabilizing, reducing the variance and improving the level of the robust accuracy for adversarial training for mixtures (Algorithm~\ref{algo:heuristic}). The standard accuracy curves are very similar in both cases.

\begin{figure}[h]
    \centering
    \includegraphics[width=0.24\textwidth]{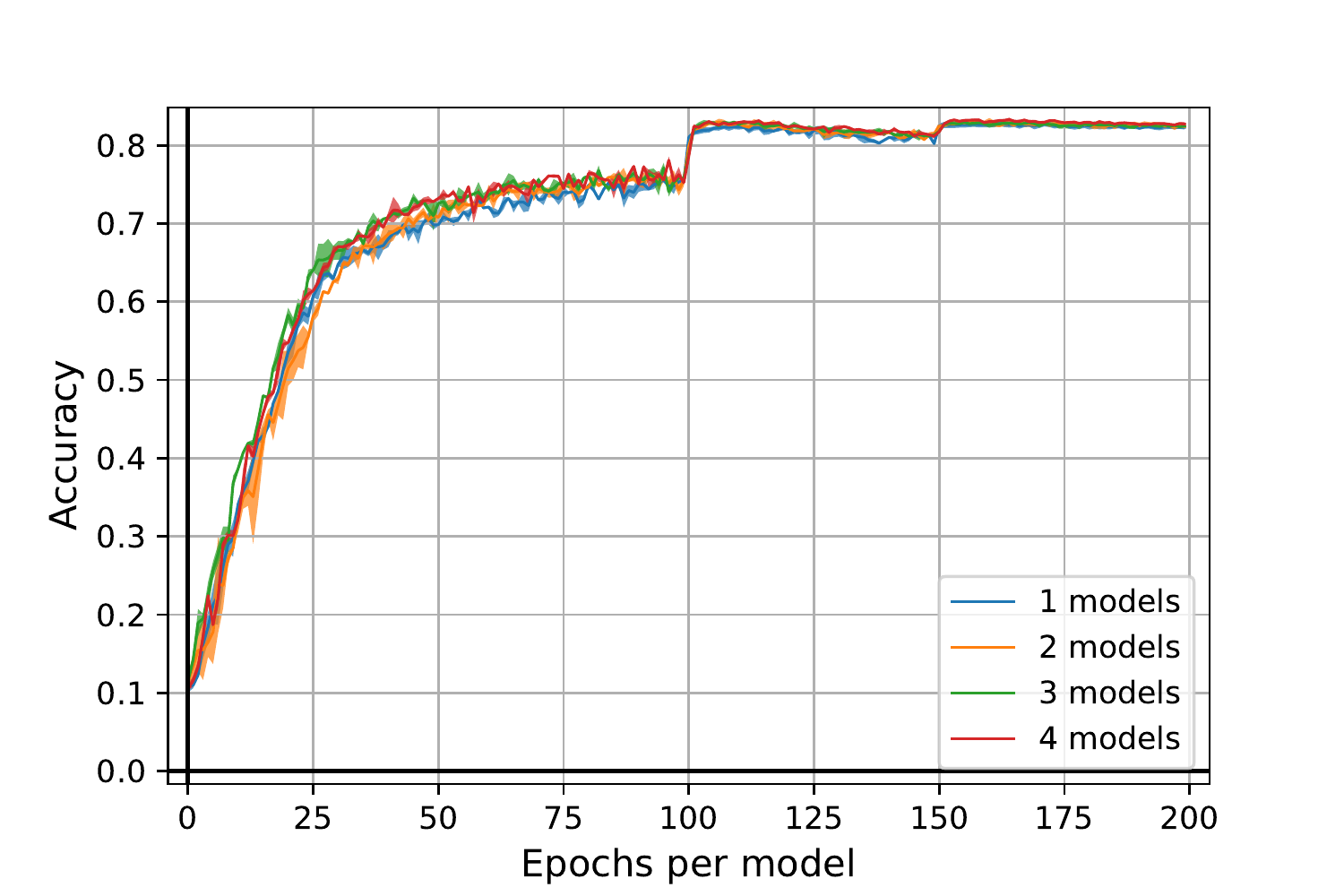}    \includegraphics[width=0.24\textwidth]{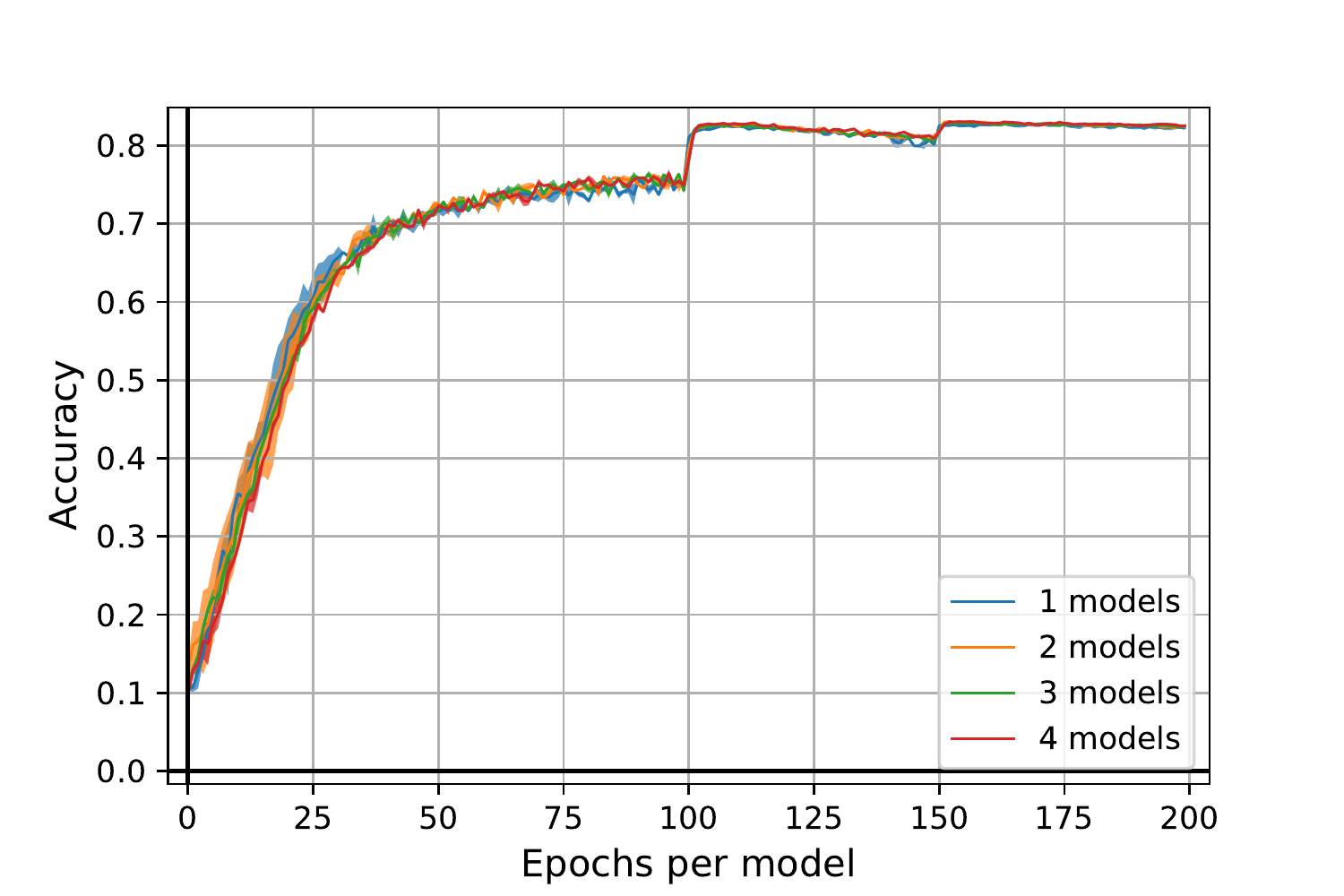}
    \includegraphics[width=0.24\textwidth]{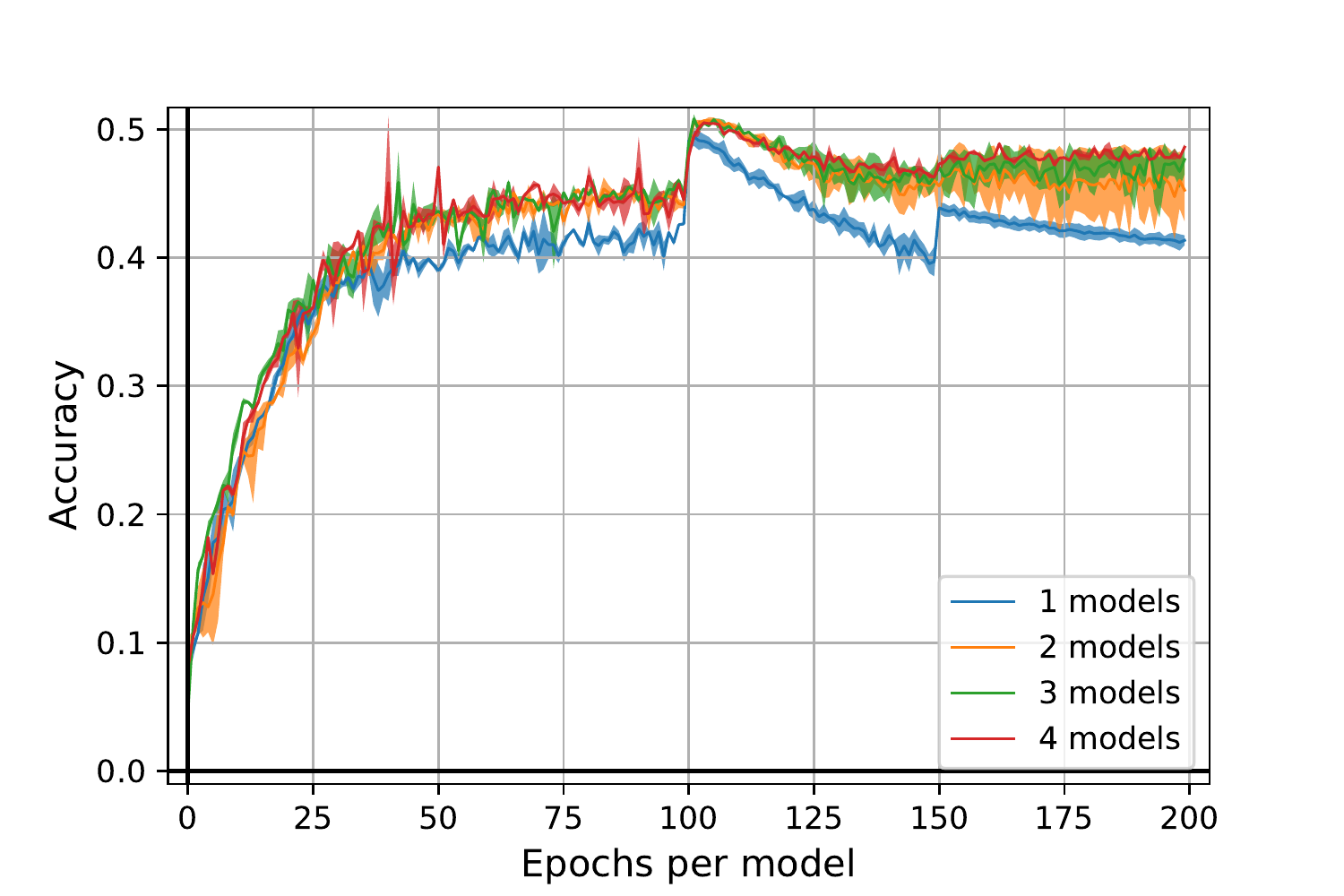}    \includegraphics[width=0.24\textwidth]{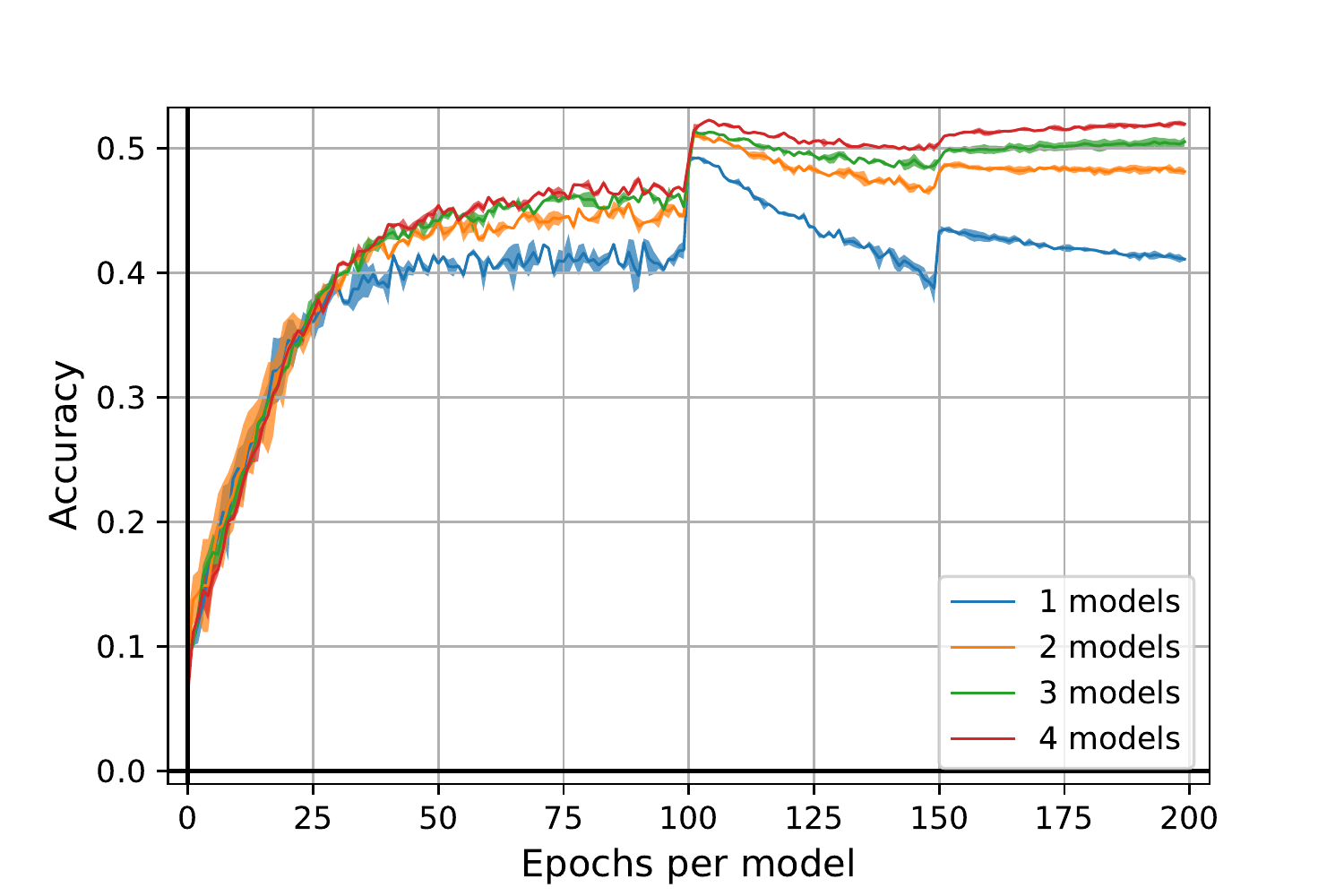}
    \caption{On left and middle-left: Standard accuracies over epochs with respectively no regularization and regularization set to $\alpha=0.001$. On middle right and right: Robust accuracies for the same parameters against PGD attack with $20$ iterations and $\varepsilon=0.03$.}
    \label{fig:xp-regularization}
\end{figure}

\subsection{Additional Experiments on WideResNet28x10}

We now evaluate our algorithm on WideResNet28x10~\cite{ZagoruykoK16} architecture. Due to computation costs, we limit ourselves to $1$ and $2$ models, with regularization parameter set to $0.001$ as in the paper experiments section. Results are reported in Figure~\ref{fig:xp-wideresnet}. We remark this architecture can lead to more robust models, corroborating the results from~\cite{gowal2020uncovering}.
\begin{figure*}[!ht]
\begin{center}

\vskip 0.15in
 \begin{minipage}[ht!]{0.39\textwidth}
 \begin{scriptsize}
\begin{tabular}{c|c|ccc} 
\textbf{ Models} & \textbf{Acc. }&\textbf{$\textrm{APGD}_\textrm{CE}$}& \textbf{$\textrm{APGD}_\textrm{DLR}$} & \textbf{Rob. Acc.} \\ \hline
 1 & $85.2\%$ &	$49.9\%$ & $50.2\%$ & $48.5\%$ \\ 
 2 & $\bm{86.0\%}$ & $\bm{51.5\%}$ & $\bm{52.1\%}$ & $\bm{49.6\%}$\\ 

\end{tabular}
\end{scriptsize}
  \end{minipage}\begin{minipage}[!ht]{0.61\textwidth}
\includegraphics[width=0.49\textwidth]{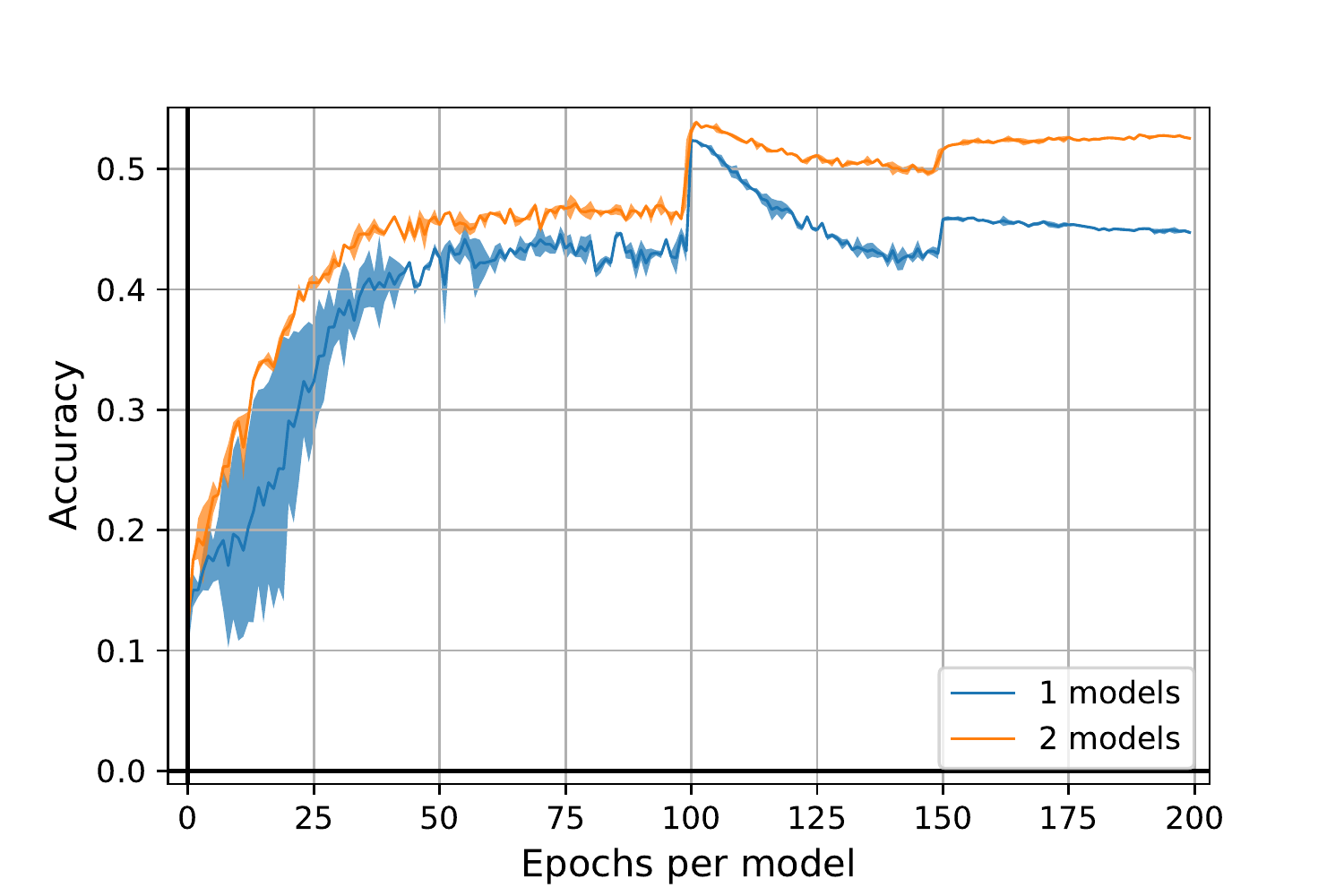}\includegraphics[width=0.49\textwidth]{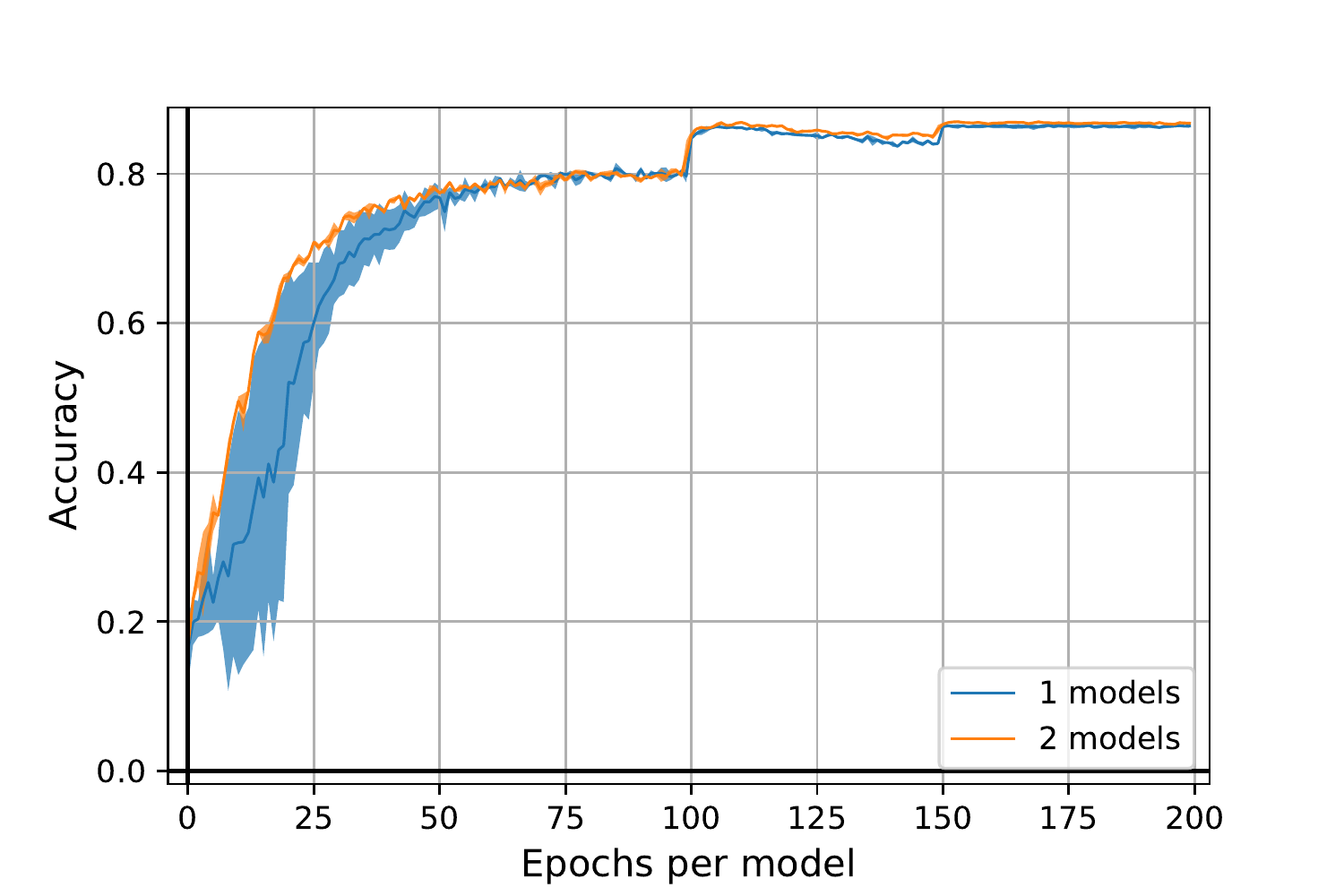} 
  \end{minipage}
  
\caption{On left: Comparison of our algorithm with a standard adversarial training (one model) on WideResNet28x10. We reported the results for the model with the best robust accuracy obtained over two independent runs because adversarial training might be unstable. Standard and Robust accuracy (respectively in the middle and on right) on CIFAR-10 test images in function of the number of epochs per classifier with $1$ and $2$ WideResNet28x10 models. The performed attack is PGD with $20$ iterations and $\varepsilon=8/255$.}
\label{fig:xp-wideresnet}
\end{center}
\vspace{-0.1cm}
\end{figure*}

\subsection{Overfitting in Adversarial Robustness}
We further investigate the overfitting of our heuristic algorithm. We plotted in Figure~\ref{fig:overfitting} the robust accuracy on ResNet18 with $1$ to $5$ models. The most robust mixture of $5$ models against PGD with $20$ iterations arrives at epoch $198$, \emph{i.e.} at the end of the training, contrary to $1$ to $4$ models, where the most robust mixture occurs around epoch $101$. However, the accuracy against AGPD with 100 iterations in lower than the one at epoch $101$ with global robust accuracy of $47.6\%$ at epoch $101$ and $45.3\%$ at epoch 198. This strange phenomenon would suggest that the more powerful the attacks are, the more the models are subject to overfitting. We leave this question to further works.

\begin{figure*}[!ht]
\begin{center}
\includegraphics[width=0.49\textwidth]{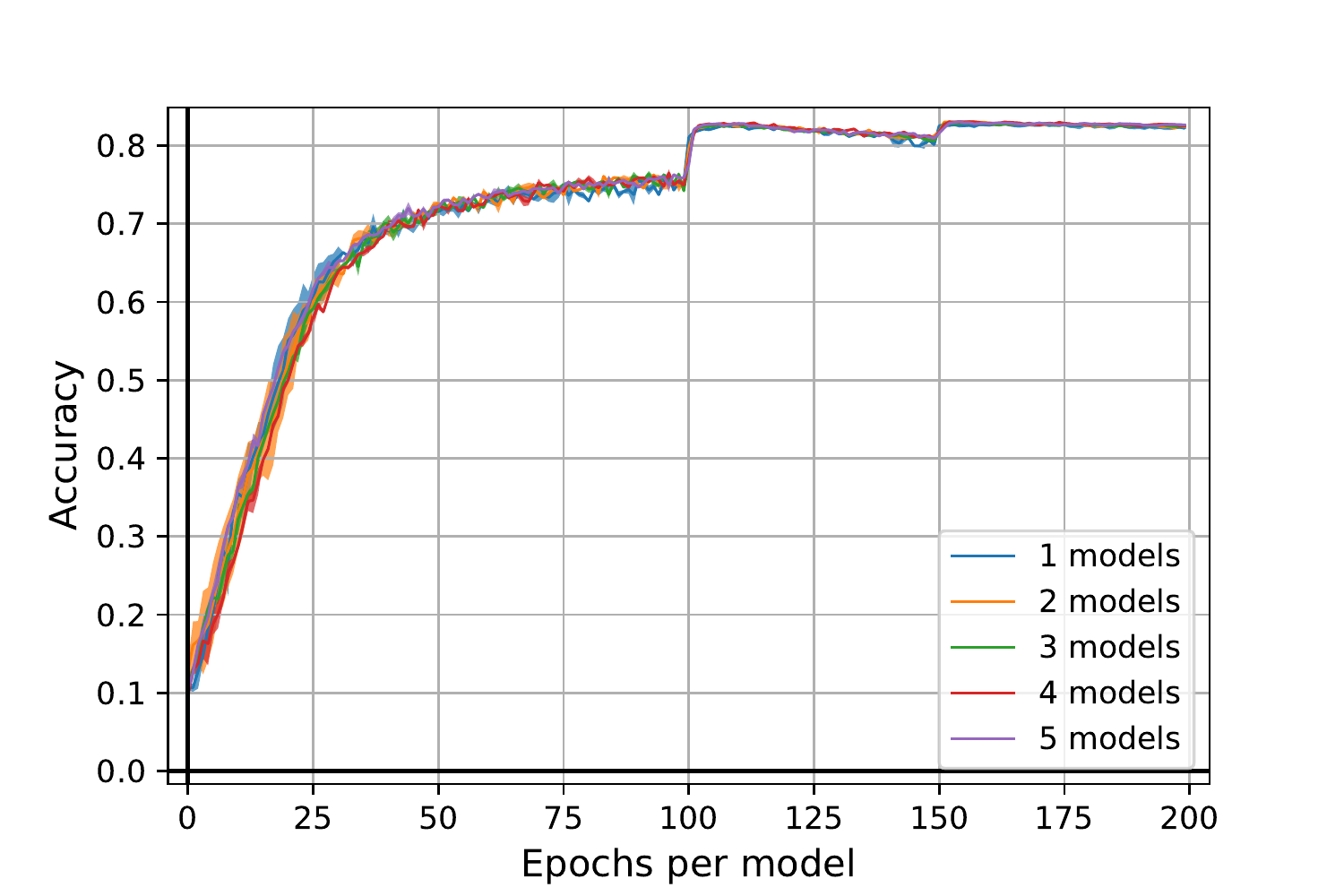}\includegraphics[width=0.49\textwidth]{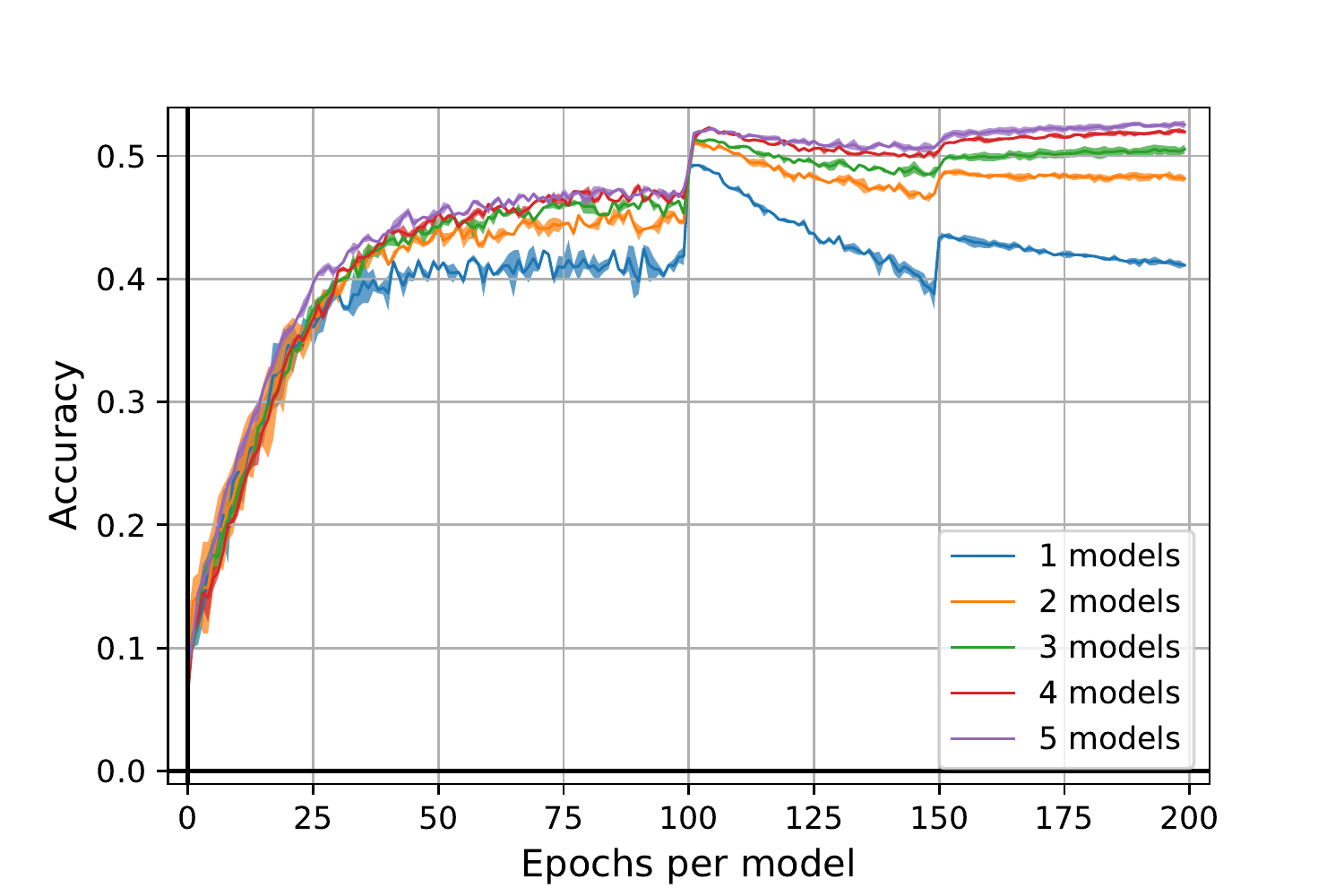} 

\caption{Standard and Robust accuracy (respectively on  left and on right) on CIFAR-10 test images in function of the number of epochs per classifier with $1$ to $5$ ResNet18 models. The performed attack is PGD with $20$ iterations and $\varepsilon=8/255$. The best mixture for $5$ models occurs at the end of training (epoch $198$).}
\label{fig:overfitting}
\end{center}
\vspace{-0.1cm}
\end{figure*}
\section{Additional Results}
\label{sec:complements}
\subsection{Equality of Standard Randomized and Deterministic Minimal Risks}
\begin{prop}
Let $\PP$ be a Borel probability distribution on $\mathcal{X}\times\mathcal{Y}$, and $l$ a loss satisfying Assumption~\ref{ass:loss}, then:
\begin{align*}
        \inf_{\mu\in\mathcal{M}^1_+(\Theta)} \risk(\mu) =\inf_{\theta\in\Theta} \risk(\theta)
\end{align*}
\end{prop}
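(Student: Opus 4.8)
The plan is to prove the two inequalities separately; both are elementary, and the only point requiring care is the well-posedness of $\risk(\mu)$, which is why the proof needs Lemma~\ref{lem:fubini}.

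First I would observe that $\theta\mapsto\risk(\theta)=\int l(\theta,(x,y))\,d\PP(x,y)$ is Borel measurable: this is exactly the content of Lemma~\ref{lem:fubini} applied with $\QQ=\PP$. Since $0\le l\le M$, we also have $0\le\risk(\theta)\le M$ for every $\theta$, so $\risk(\mu)=\mathbb{E}_{\theta\sim\mu}[\risk(\theta)]$ is a well-defined finite quantity for every $\mu\in\mathcal{M}^1_+(\Theta)$, and both infima in the statement are infima of nonnegative bounded quantities.

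For the inequality $\inf_{\mu}\risk(\mu)\le\inf_{\theta}\risk(\theta)$, I would simply note that for every $\theta\in\Theta$ the Dirac mass $\delta_\theta$ lies in $\mathcal{M}^1_+(\Theta)$ and $\risk(\delta_\theta)=\risk(\theta)$ (this identity is already recorded in the main text). Taking the infimum over $\theta$ on the right and using that $\delta_\theta$ is a feasible $\mu$ gives the claim. For the reverse inequality $\inf_{\mu}\risk(\mu)\ge\inf_{\theta}\risk(\theta)$, fix any $\mu\in\mathcal{M}^1_+(\Theta)$ and write
\begin{align*}
\risk(\mu)=\int_\Theta \risk(\theta)\,d\mu(\theta)\ \ge\ \int_\Theta \Big(\inf_{\theta'\in\Theta}\risk(\theta')\Big)\,d\mu(\theta)\ =\ \inf_{\theta'\in\Theta}\risk(\theta'),
\end{align*}
using monotonicity of the integral (the integrand pointwise dominates the constant $\inf_{\theta'}\risk(\theta')$) and the fact that $\mu$ is a probability measure. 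Taking the infimum over $\mu$ yields the inequality.

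Combining the two inequalities gives the equality $\valuerand^0=\valuedet^0$. There is no real obstacle here: the statement is essentially a restatement of the fact that the infimum of a bounded measurable function over $\Theta$ coincides with the infimum of its $\mu$-averages over probability measures $\mu$ on $\Theta$; the only thing one must not skip is invoking Lemma~\ref{lem:fubini} to know that the average $\risk(\mu)$ makes sense in the first place.
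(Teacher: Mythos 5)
Your proof is correct and follows essentially the same route as the paper: one direction via Dirac masses $\delta_\theta$, the other by bounding the $\mu$-average $\risk(\mu)=\int_\Theta\risk(\theta)\,d\mu(\theta)$ below by $\inf_{\theta}\risk(\theta)$ (the paper passes through an essential infimum, you bound by the constant directly, which is the same idea and if anything slightly cleaner). The measurability remark via Lemma~\ref{lem:fubini} is a sensible addition that the paper leaves implicit.
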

\begin{proof}
It is clear that:         $\inf_{\mu\in\mathcal{M}^1_+(\Theta)} \risk(\mu) \leq \inf_{\theta\in\Theta} \risk(\theta)$. Now, let $\mu\in\mathcal{M}^1_+(\Theta)$, then:
\begin{align*}
    \risk(\mu)= \mathbb{E}_{\theta\sim\mu}(\risk(\theta))&\geq \essinf_\mu \mathbb{E}_{\theta\sim\mu} \left(\risk(\theta)\right)\\
    &\geq\inf_{\theta\in\Theta} \risk(\theta).
\end{align*}
where $\essinf$ denotes the essential infimum.
\end{proof}
We can deduce an immediate corollary. 
\begin{corollary}
Under Assumption~\ref{ass:loss}, the dual for randomized and deterministic classifiers are equal.
\end{corollary}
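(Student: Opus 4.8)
The plan is to derive this corollary directly from the Proposition just proved, after observing that that Proposition is really a statement about an \emph{arbitrary} Borel base distribution on $\mathcal{X}\times\mathcal{Y}$, and then to push the resulting pointwise identity through the outer supremum over $\mathcal{A}_{\varepsilon}(\PP)$.

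First I would fix an adversarial distribution $\QQ\in\mathcal{A}_{\varepsilon}(\PP)$ and set, for $\theta\in\Theta$ and $\mu\in\mathcal{M}^1_+(\Theta)$, $\risk_{\QQ}(\theta):=\mathbb{E}_{(x,y)\sim\QQ}[l(\theta,(x,y))]$ and $\risk_{\QQ}(\mu):=\mathbb{E}_{\theta\sim\mu}[\risk_{\QQ}(\theta)]$ (these are well-defined and finite by Lemma~\ref{lem:fubini} and the boundedness of $l$). Since $\QQ$ is itself a Borel probability distribution on $\mathcal{X}\times\mathcal{Y}$ and Assumption~\ref{ass:loss} constrains only the loss $l$, not the data distribution, the Proposition applies verbatim with $\PP$ replaced by $\QQ$, giving
$$\inf_{\mu\in\mathcal{M}^1_+(\Theta)}\risk_{\QQ}(\mu)=\inf_{\theta\in\Theta}\risk_{\QQ}(\theta).$$
By Fubini's theorem (Lemma~\ref{lem:fubini}) one has $\risk_{\QQ}(\mu)=\mathbb{E}_{\theta\sim\mu,(x,y)\sim\QQ}[l(\theta,(x,y))]$, so the left-hand side is exactly the inner value at $\QQ$ of the randomized dual game, while the right-hand side is the inner value at $\QQ$ of the deterministic one.

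Since this identity holds for every $\QQ\in\mathcal{A}_{\varepsilon}(\PP)$, taking $\sup_{\QQ\in\mathcal{A}_{\varepsilon}(\PP)}$ on both sides (the supremum of a function over a fixed set depends only on its pointwise values) gives
$$\sup_{\QQ\in\mathcal{A}_{\varepsilon}(\PP)}\inf_{\mu\in\mathcal{M}^1_+(\Theta)}\mathbb{E}_{\theta\sim\mu,(x,y)\sim\QQ}[l(\theta,(x,y))]=\sup_{\QQ\in\mathcal{A}_{\varepsilon}(\PP)}\inf_{\theta\in\Theta}\mathbb{E}_{(x,y)\sim\QQ}[l(\theta,(x,y))],$$
which is precisely the asserted equality of the randomized and deterministic dual values; combined with Theorem~\ref{thm:duality-rand} and weak duality it also re-establishes the chain $\dualvalue^{\varepsilon}\le\valuerand^{\varepsilon}\le\valuedet^{\varepsilon}$.

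I do not expect a genuine obstacle here, since the corollary is flagged as ``immediate'' and the argument is essentially bookkeeping: recognizing that Assumption~\ref{ass:loss} is insensitive to the base measure, plus one application of Fubini. The only place carrying any content at all is the inequality $\inf_\mu\ge\inf_\theta$ inside the invoked Proposition, which follows from $\mathbb{E}_{\theta\sim\mu}[\risk_{\QQ}(\theta)]\ge\essinf_\mu\risk_{\QQ}(\theta)\ge\inf_{\theta\in\Theta}\risk_{\QQ}(\theta)$, the reverse inequality being witnessed by the Dirac measures $\mu=\delta_\theta$ — exactly as in the proof already given for the base distribution $\PP$.
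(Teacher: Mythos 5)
Your proof is correct and is exactly the argument the paper intends when it calls the corollary ``immediate'': apply the preceding proposition with the base distribution $\PP$ replaced by an arbitrary $\QQ\in\mathcal{A}_{\varepsilon}(\PP)$ (the assumption only constrains the loss), which equates the inner infima pointwise, and then take the supremum over $\QQ$. Nothing is missing; the essential-infimum step you single out is the same one used in the paper's proof of the proposition.
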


\subsection{Decomposition of the Empirical Risk for Entropic Regularization}

\begin{prop}
Let $\hat{\PP}:=\frac1N\sum_{i=1}^N \delta_{(x_i,y_i)}$. Let $l$ be a loss satisfying Assumption~\ref{ass:loss}. Then we have:
\begin{align*}
\frac{1}{N}\sum_{i=1}^N\sup_{x,~d(x,x_i)\leq\varepsilon}\mathbb{E}_{\theta \sim \mu}\left[l(\theta,(x,y))\right]=\sum_{i=1}^N\sup_{\QQ_i\in\Gamma_{i,\varepsilon}}\mathbb{E}_{(x,y)\sim \QQ_i,\theta \sim \mu}\left[l(\theta,(x,y))\right]
\end{align*}
where $\Gamma_{i,\varepsilon}$ is defined as : 
\begin{align*}
    \Gamma_{i,\varepsilon}:=\Big\{\QQ_i\mid~\int d\QQ_i=\frac{1}{N},~\int c_{\varepsilon}((x_i,y_i),\cdot) d\QQ_i=0\Big\}.
\end{align*}\end{prop}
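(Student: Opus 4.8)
The plan is to prove the identity summand by summand: I will show that for each fixed $i\in\{1,\dots,N\}$,
$$\sup_{\QQ_i\in\Gamma_{i,\varepsilon}}\mathbb{E}_{(x,y)\sim\QQ_i,\theta\sim\mu}\left[l(\theta,(x,y))\right]=\frac{1}{N}\sup_{x,\ d(x,x_i)\leq\varepsilon}\mathbb{E}_{\theta\sim\mu}\left[l(\theta,(x,y_i))\right],$$
where $y_i$ is the label attached to $x_i$ (this is the $y$ appearing implicitly in the statement), and then sum over $i$. The first step is a structural observation about $\Gamma_{i,\varepsilon}$: since $c_\varepsilon((x_i,y_i),\cdot)$ is non-negative and equals $+\infty$ outside the set $S_{(x_i,y_i)}^{(\varepsilon)}:=\{(x',y'):d(x_i,x')\leq\varepsilon,\ y'=y_i\}$, the constraint $\int c_\varepsilon((x_i,y_i),\cdot)\,d\QQ_i=0$ is equivalent to $\QQ_i$ assigning zero mass outside $S_{(x_i,y_i)}^{(\varepsilon)}$. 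Hence $\Gamma_{i,\varepsilon}$ is exactly the set of non-negative Borel measures of total mass $1/N$ supported in the compact set $S_{(x_i,y_i)}^{(\varepsilon)}$ ($S_{(x_i,y_i)}^{(\varepsilon)}$ being compact because $(\mathcal{X},d)$ is proper and $\mathcal{Y}$ carries the discrete metric).

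For the inequality ``$\leq$'', write $g(x',y'):=\mathbb{E}_{\theta\sim\mu}[l(\theta,(x',y'))]$, which is Borel measurable and bounded by $M$ (Lemma~\ref{lem:fubini}, Assumption~\ref{ass:loss}). Given any $\QQ_i\in\Gamma_{i,\varepsilon}$, the support property gives $g(x',y')\leq \sup_{x,\ d(x_i,x)\leq\varepsilon}g(x,y_i)$ for $\QQ_i$-almost every $(x',y')$; integrating against $\QQ_i$, whose total mass is $1/N$, yields $\mathbb{E}_{\QQ_i,\mu}[l]\leq \tfrac1N\sup_{x,\ d(x_i,x)\leq\varepsilon}g(x,y_i)$, and taking the supremum over $\QQ_i\in\Gamma_{i,\varepsilon}$ gives the bound.

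For the reverse inequality ``$\geq$'', I will exhibit an optimal (Dirac) competitor. By Lemma~\ref{lem:usc1}, $g$ is upper semi-continuous, so its supremum over the compact set $S_{(x_i,y_i)}^{(\varepsilon)}$ is attained at some $(x_i^\star,y_i)$. Taking $\QQ_i^\star:=\tfrac1N\delta_{(x_i^\star,y_i)}$, one checks $\QQ_i^\star\in\Gamma_{i,\varepsilon}$ (total mass $1/N$; $\int c_\varepsilon((x_i,y_i),\cdot)\,d\QQ_i^\star=\tfrac1N c_\varepsilon((x_i,y_i),(x_i^\star,y_i))=0$), and $\mathbb{E}_{\QQ_i^\star,\mu}[l]=\tfrac1N g(x_i^\star,y_i)=\tfrac1N\sup_{x,\ d(x_i,x)\leq\varepsilon}g(x,y_i)$. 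This proves ``$\geq$'', hence equality; summing over $i=1,\dots,N$ finishes the proof.

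I do not expect a serious obstacle here. The only points requiring care are well-posedness — that $g$ and the inner supremum are measurable, which is exactly what Lemmas~\ref{lem:fubini}, \ref{lem:usc1} and \ref{lem:measure-sup} provide — and the existence of the maximizer $(x_i^\star,y_i)$, which follows from compactness of the $\varepsilon$-ball (properness of $\mathcal{X}$) together with upper semi-continuity of $g$. If one prefers not to invoke attainment, the ``$\geq$'' direction can equally be obtained by choosing, for each $\delta>0$, a point $(x_i^\delta,y_i)\in S_{(x_i,y_i)}^{(\varepsilon)}$ with $g(x_i^\delta,y_i)\geq\sup g-\delta$, using $\tfrac1N\delta_{(x_i^\delta,y_i)}$, and letting $\delta\to0$.
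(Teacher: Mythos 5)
Your proof is correct, but it takes a different (and more self-contained) route than the paper. The paper disposes of this proposition in one line: it applies Proposition~\ref{prop:dro_adv} to each Dirac $\delta_{(x_i,y_i)}$, so that for every $i$ the supremum over $\Gamma_{i,\varepsilon}$ (which, up to the mass factor $1/N$, is exactly $\mathcal{A}_\varepsilon(\delta_{(x_i,y_i)})$) coincides with $\tfrac1N\sup_{x,\,d(x,x_i)\leq\varepsilon}\mathbb{E}_{\theta\sim\mu}[l(\theta,(x,y_i))]$, and then sums over $i$. You instead reprove this per-summand identity from scratch: you characterize $\Gamma_{i,\varepsilon}$ via the zero-cost constraint, get ``$\leq$'' by integrating the pointwise bound, and get ``$\geq$'' with a Dirac at a maximizer (u.s.c.\ of $\mathbb{E}_{\theta\sim\mu}[l(\theta,\cdot)]$ from Lemma~\ref{lem:usc1} plus compactness of the ball by properness), or with $\delta$-approximate maximizers if one does not want attainment. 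What your approach buys is elementariness: in the Dirac case no measurable-selection argument (the \citep[Proposition 7.50]{bertsekas2004stochastic} step inside the proof of Proposition~\ref{prop:dro_adv}) is needed, since a single near-optimal point suffices, and you make explicit the $1/N$ rescaling between probability measures in $\mathcal{A}_\varepsilon(\delta_{(x_i,y_i)})$ and the mass-$1/N$ measures in $\Gamma_{i,\varepsilon}$, which the paper leaves implicit. What the paper's route buys is brevity and uniformity: once Proposition~\ref{prop:dro_adv} is available for a general $\PP$, the empirical decomposition is an immediate corollary rather than a separate two-sided estimate.
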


\begin{proof}
This proposition is a direct application of Proposition~\ref{prop:dro_adv} for diracs $\delta_{(x_i,y_i)}$.
\end{proof}

\subsection{On the NP-Hardness of Attacking a Mixture of Classifiers}
In general, the problem of finding a best response to a mixture of classifiers is in general NP-hard. Let us justify it on a mixture of linear classifiers in binary classification: $f_{\theta_k}(x) = \langle \theta,x\rangle$ for $k\in [L]$ and $\bm{\lambda}=\mathbf{1}_L/L$. Let us consider the $\ell_2$ norm and $x=0$ and $y=1$. Then the problem of attacking $x$ is the following:
\begin{align*}
    \sup_{\tau,~\lVert \tau\rVert\leq\varepsilon} \frac{1}{L}\sum_{k=1}^L\mathbf{1}_{\langle \theta_k,\tau\rangle\leq0}
\end{align*}
This problem is equivalent to a linear binary classification problem on $\tau$, which is known to be NP-hard.
\subsection{Case of Separated Conditional Distribtions}
\begin{prop} Let $\mathcal{Y} = \{-1,+1\}$. Let $\PP\in\mathcal{M}^1_+(\mathcal{X}\times\mathcal{Y})$. Let $\varepsilon>0$. For $i\in\mathcal{Y}$, let us denote $\PP_i$ the distribution of $\PP$ conditionally to $y=i$. Let us assume that  $d_\mathcal{X}(\supp(\PP_{1+1}),\supp(\PP_{-1}))>2\varepsilon$. Let us consider the nearest neighbor deterministic classifier : $f(x) =  d(x,\supp(\PP_{+1}))-d(x,\supp(\PP_{-1}))$ and the $0/1$ loss $l(f,(x,y))=\mathbf{1}_{yf(x)\leq 0}$. Then $f$ satisfies both optimal standard and adversarial risks: $\risk(f)=0$ and $\riskadv^\varepsilon(f)=0$.
\end{prop}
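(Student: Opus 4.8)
The plan is to reduce both equalities to a single geometric observation: the $2\varepsilon$-separation of the two class supports leaves, around $\PP$-almost every sample point, an $\varepsilon$-wide ``safe zone'' on which the nearest-neighbour rule $f$ already decides correctly, and the inner adversarial supremum can never escape this zone. Before that I would record two preliminaries. First, since $\mathcal{X}$ is Polish, hence second countable, every Borel probability measure on $\mathcal{X}$ assigns full mass to its support (the union of its null open sets is a countable union of null sets, hence null); applying this to the conditionals gives $\PP_{+1}(\supp(\PP_{+1}))=1$ and $\PP_{-1}(\supp(\PP_{-1}))=1$, so for $\PP$-almost every $(x,y)$ one has $x\in\supp(\PP_{y})$. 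Second, $f$ is a difference of two $1$-Lipschitz distance functions, hence continuous, so $l(f,\cdot)=\mathbf{1}_{yf(\cdot)\le 0}$ satisfies Assumption~\ref{ass:loss} (as noted right after that assumption) and $\riskadv^{\varepsilon}(f)$ is well defined by Lemma~\ref{lem:measure-sup}.

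For the standard risk, I would fix $(x,y)$ with $x\in\supp(\PP_{y})$. Then $d(x,\supp(\PP_{y}))=0$, while $d(x,\supp(\PP_{-y}))\ge d_{\mathcal{X}}(\supp(\PP_{+1}),\supp(\PP_{-1}))>2\varepsilon>0$, so the nearest-neighbour rule strictly assigns $x$ to its own class, i.e. $yf(x)>0$ and $l(f,(x,y))=0$. Since this holds $\PP$-almost surely, $\risk(f)=0$.

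For the adversarial risk, I would again fix $(x,y)$ with $x\in\supp(\PP_{y})$ and take any $x'\in\mathcal{X}$ with $d(x,x')\le\varepsilon$. On one hand $d(x',\supp(\PP_{y}))\le d(x',x)\le\varepsilon$; on the other hand, by $1$-Lipschitzness of $z\mapsto d(z,\supp(\PP_{-y}))$, $d(x',\supp(\PP_{-y}))\ge d(x,\supp(\PP_{-y}))-d(x,x')>2\varepsilon-\varepsilon=\varepsilon$. Hence $d(x',\supp(\PP_{y}))<d(x',\supp(\PP_{-y}))$, so $x'$ is still strictly classified in class $y$ and $l(f,(x',y))=0$. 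Therefore the inner supremum $\sup_{d(x,x')\le\varepsilon} l(f,(x',y))$ vanishes for $\PP$-almost every $(x,y)$, and $\riskadv^{\varepsilon}(f)=0$.

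The argument is short, so there is no serious obstacle; the only points that genuinely need care are the full-support fact for the conditionals (which relies on separability of $\mathcal{X}$) and keeping the orientation of the nearest-neighbour rule consistent with the loss convention, so that ``$f$ decides correctly'' really means $yf(x)>0$. The margin estimate itself is a single application of the triangle inequality.
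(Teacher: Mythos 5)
Your proof is correct and follows essentially the same route as the paper's: condition on the label and use the $2\varepsilon$-separation of the supports, via the triangle inequality, to show that every $\varepsilon$-perturbation of a point of $\supp(\PP_y)$ is still strictly classified as $y$, so the inner supremum of the loss vanishes $\PP$-almost surely. You are in fact more explicit than the paper on the two points it glosses over, namely that $\PP_y$ gives full mass to $\supp(\PP_y)$ and that the sign of $f$ must be oriented consistently with the convention $l(f,(x,y))=\mathbf{1}_{yf(x)\leq 0}$ (as literally written, $f$ is negative on $\supp(\PP_{+1})$, a sign slip present in the statement and implicitly corrected in the paper's own proof).
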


\begin{proof} Let 
Let denote $p_i =\PP(y=i)$. Then we have
\begin{align*}
 \riskadv^\varepsilon(f)= p_{+1}\mathbb{E}_{\PP_{+1}}\left[\sup_{x',~d(x,x')\leq \varepsilon}\mathbf{1}_{ f(x')\leq 0}\right]+p_{-1}\mathbb{E}_{\PP_{-1}}\left[\sup_{x',~d(x,x')\leq \varepsilon}\mathbf{1}_{ f(x')\geq 0}\right]
\end{align*}
For $x\in \supp(\PP_{+1})$, we have, for all $x'$ such that $d(x,x')\neq 0$, $f(x')>0$, then: $\mathbb{E}_{\PP_{+1}}\left[\sup_{x',~d(x,x')\leq \varepsilon}\mathbf{1}_{ f(x')\leq 0}\right]=0$. Similarly, we have $\mathbb{E}_{\PP_{-1}}\left[\sup_{x',~d(x,x')\leq \varepsilon}\mathbf{1}_{ f(x')\geq 0}\right]=0$. We then deduce the result.
\end{proof}

\end{document}